  \theoremstyle{acmdefinition}
  \newtheorem{remark}[theorem]{Remark}
  \theoremstyle{acmtheorem}
  \newtheorem*{claim*}{Claim}}
\crefname{theorem}{Theorem}{Theorems}
\crefname{lemma}{Lemma}{Lemmas}
\crefname{corollary}{Corollary}{Corollaries}
\crefname{proposition}{Proposition}{Propositions}
\crefname{definition}{Definition}{Definitions}
\crefname{remark}{Remark}{Remarks}
\crefname{notation}{Notation}{Notations}
\crefname{section}{Section}{Sections}
\crefname{algorithm}{Algorithm}{Algorithms}
\newif\ifdraft\draftfalse
\newif\ifappendix
\begin{document}

\title{A Primal-Dual Perspective on Program Verification Algorithms (Extended Version)}

\author{Takeshi Tsukada}
\affiliation{%
  \institution{Chiba University}
  \city{Chiba}
  \country{Japan}}
\email{t.tsukada@acm.org}
\orcid{0000-0002-2824-8708}

\author{Hiroshi Unno}
\affiliation{%
  \institution{Tohoku University}
  \city{Sendai}
  \country{Japan}}
\email{hiroshi.unno@acm.org}
\orcid{0000-0002-4225-8195}

\author{Oded Padon}
\affiliation{%
  \institution{Weizmann Institute of Science}
  \city{Rehovot}
  \country{Israel}
}
\email{oded.padon@weizmann.ac.il}
\orcid{0009-0006-4209-1635}

\author{Sharon Shoham}
\affiliation{%
  \institution{Tel Aviv University}
  \city{Tel Aviv}
  \country{Israel}
}
\email{sharon.shoham@gmail.com}
\orcid{0000-0002-7226-3526}

\begin{abstract}
  Many algorithms in verification and automated reasoning leverage some form of duality between proofs and refutations or counterexamples. In most cases, duality is only used as an intuition that helps in understanding the algorithms and is not formalized. In other cases, duality is used explicitly, but in a specially tailored way that does not generalize to other problems.

In this paper we propose a unified primal-dual framework for designing verification algorithms that leverage duality. To that end, we generalize the concept of a Lagrangian that is commonly used in linear programming and optimization to capture the domains considered in verification problems, which are usually discrete, e.g., powersets of states, predicates, ranking functions, etc. A Lagrangian then induces a primal problem and a dual problem. We devise an abstract primal-dual procedure that simultaneously searches for a primal solution and a dual solution, where the two searches guide each other. We provide sufficient conditions that ensure that the procedure makes progress
under certain monotonicity assumptions on the Lagrangian.

We show that many existing algorithms in program analysis, verification, and automated reasoning can be derived from our algorithmic framework with a suitable choice of Lagrangian. The Lagrangian-based formulation sheds new light on various characteristics of these algorithms, such as the ingredients they use to ensure monotonicity and guarantee progress. We further use our framework to develop a new validity checking algorithm for fixpoint logic over quantified linear arithmetic. 
Our prototype achieves promising results and in some cases solves instances that are not solved by state-of-the-art techniques.

\end{abstract}

\begin{CCSXML}
  <ccs2012>
  <concept>
  <concept_id>10003752.10003790.10002990</concept_id>
  <concept_desc>Theory of computation~Logic and verification</concept_desc>
  <concept_significance>500</concept_significance>
  </concept>
  <concept>
  <concept_id>10003752.10010124.10010138.10010142</concept_id>
  <concept_desc>Theory of computation~Program verification</concept_desc>
  <concept_significance>500</concept_significance>
  </concept>
  </ccs2012>
\end{CCSXML}

\ccsdesc[500]{Theory of computation~Logic and verification}
\ccsdesc[500]{Theory of computation~Program verification}

\keywords{primal-dual method, Lagrangian, verification}

\maketitle

\section{Introduction}\label{sec:intro}

Duality and the primal-dual approach are a corner stone in algorithm design and optimization~\cite{Linear-programming1,convex-programming1}.
In program analysis, verification, and automated reasoning, many algorithms also have a primal-dual flavor.
The common duality in these contexts is between a proof and a refutation or counterexample.
That is, the algorithm simultaneously searches for a proof and a refutation, and both searches guide each other in a meaningful way.
For example, in counterexample guided abstraction refinement (CEGAR)~\cite{DBLP:conf/cav/ClarkeGJLV00,DBLP:conf/pldi/BallMMR01},
the search alternates between searching for a counterexample to the current abstraction and refining the abstraction to eliminate a spurious counterexample.
The intuition is that refining the abstraction according to a spurious counterexample makes progress towards a proof (if one exists),
and finding counterexamples for more refined abstractions might make progress towards a non-spurious counterexample (if one exists).

A similar interaction between the search for proofs and the search for counterexamples exists in many program analysis, verification, and automated reasoning algorithms,
including may-must analysis~\cite{Godefroid2010}, ICE learning~\cite{ice}, lazy SMT solving~\cite{dpllt}, quantifier instantiation~\cite{DBLP:conf/cav/GeM09}, program synthesis~\cite{DBLP:conf/icse/JhaGST10}, game solving~\cite{DBLP:journals/pacmpl/FarzanK18}  and more.
In most of these cases, the primal-dual nature of the search algorithm is not made explicit and formal.

Recently, some algorithms were proposed that are explicitly manifested as primal-dual algorithms.
In~\cite{Padon2022}, a primal-dual algorithm for safety verification is proposed, based on a duality between executions in a transition system and a form of incremental induction proofs.
In~\cite{Unno2023}, a primal-dual algorithm for validity of formulas in first-order fixpoint logic with background theories is proposed,
based on a duality between the validity of a formula and that of its negation, which corresponds to a duality between least and greatest fixpoints.
However, while these algorithms are both explicitly primal-dual, the duality and the primal-dual algorithm are bespoke in each case.

In this paper, we propose a unifying perspective that shows that many algorithms in program analysis, verification, and automated reasoning
that leverage duality, be it informally and intuitively or explicitly and formally, can be derived from a common principle.
Moreover, this principle is connected to the classical notion of duality based on a Lagrangian used in linear programming~\cite{Linear-programming1,convex-programming1}.
We leverage the Lagrangian perspective to study different characteristics of existing algorithms,
and to develop a new algorithm for solving formulas in fixpoint logic over quantified linear arithmetic.

In linear programming, a Lagrangian $L(x, \lambda)$ is typically defined as a function from $\mathbb{R}^n \times \mathbb{R}^m$ to $\mathbb{R}$.
The Lagrangian induces two related optimization problems:
the \emph{primal} problem is finding $\inf_x \sup_\lambda L(x, \lambda)$,
and the \emph{dual} problem is finding $\sup_\lambda \inf_x L(x, \lambda)$.
It is easy to see that $\sup_\lambda \inf_x L(x, \lambda) \leq \inf_x \sup_\lambda L(x, \lambda)$,
which is known as \emph{weak duality} (\emph{strong duality} is when an equality holds).
That is, the solution to the dual problem provides a lower bound on the solution of the primal problem.
Moreover, for any $\lambda$, the value of $\inf_x L(x, \lambda)$ provides such a lower bound.
Similarly (dually), the value of $\sup_\lambda L(x, \lambda)$ for any $x$ provides an upper bound on the dual problem.
These properties are used in algorithm design to simultaneously search for solutions to the primal and dual problems, where both searches guide each other.

We propose to define a Lagrangian $L(x,y)$ from $X \times Y$ for arbitrary sets $X$ and $Y$ to a totally-ordered complete lattice,
such that the induced primal and dual problems correspond, intuitively, to checking validity of a candidate proof and checking validity of a candidate refutation.
Unlike in the linear programming case, the sets $X$ and $Y$ that arise in verification contexts are not real vector spaces but
discrete sets or lattices (e.g, powerset lattices of sets of states or abstract domains).

For example, we can define a Lagrangian that captures the classic CEGAR algorithm as follows.
The set $X$ represents possible (abstract) counterexample traces, i.e, $X$ is the set of sequences of states.
The set $Y$ represents possible abstractions, i.e., $Y$ is the set of finite sets of predicates.
We define the Lagrangian $L((s_0 s_1 \cdots s_n), \{p_0, p_1, \ldots, p_m\})$ to be $-1$ if the sequence $(s_0 s_1 \cdots s_n)$
is a counterexample for the abstract transition system defined by the abstraction to $\{p_0, p_1, \ldots, p_m\}$, and $1$ otherwise.
That is, the Lagrangian is $-1$ if for every $0<i<n$ there is a transition from a state $s$ to a state $t$ such that
on all the predicates in $\{p_0, p_1, \ldots, p_m\}$, $s$ agrees with $s_i$ and $t$ agrees with $s_{i+1}$.
With this definition of a Lagrangian, safety is captured by $\sup_y \inf_x L(x, \lambda) = 1$,
which holds if there exists a set of predicates $y$ such that for every sequence of states $x$,
$L(x,y) = 1$, i.e., $x$ is not an abstract counterexample in the abstract transition system induced by $y$.

We further propose an algorithmic framework that given a Lagrangian, tries to solve the primal and dual problems simultaneously with primal-dual guidance between both sides of the duality.
The basic algorithm attempts to find $\alpha \in X$ such that $\sup_y L(\alpha, y)$  is small enough,
providing an upper bound on the primal solution, %
or $\beta \in Y$ such that $\inf_x L(x,\beta)$ is large enough, providing a lower bound on the dual problem.
For example, in the case of CEGAR, if $\inf_x L(x,\beta)$ is $1$ for some set of predicates $\beta$, so is the solution of the dual problem, and the system is safe.
Starting from some candidates $\alpha$ and $\beta$, as long as the aforementioned conditions do not hold, the algorithm iteratively
updates $\alpha$ to some $\alpha'$ that witnesses that $\inf_x L(x,\beta)$ is not large enough (i.e., $L(\alpha',\beta)$ is not large enough);
or updates $\beta$ to $\beta'$ that witnesses that $\sup_y L(\alpha, y)$ is not small enough (i.e., $L(\alpha,\beta')$ is not small enough). That is, the update of the primal candidate $\alpha$ is guided by the dual problem and vice versa.
For CEGAR, given a set of predicates $\beta$, computing a witness for $\inf_x L(x, \beta) \not \geq 1$, i.e., finding $\alpha'$ such that $L(\alpha',\beta)=-1$, corresponds to finding an abstract counterexample for the abstraction $\beta$ defines;
and for a given sequence $\alpha$, finding $\beta'$ such that $L(\alpha,\beta')=1$ corresponds to finding an abstraction that rules out $\alpha$ as an abstract counterexample, i.e., an abstraction refinement step.

Often, $X$ or $Y$ (or both) are join semilattices,
in which case it is possible to increase $\alpha$ or $\beta$ (or both) in each iteration by joining their new value with the previous one.
For example, in the case of the Lagrangian that captures CEGAR, the set $Y$ of possible abstractions (sets of predicates) is a join semilattice,
and increasing $\beta \in Y$ corresponds to ``accumulating'' predicates, as opposed to starting from scratch with a new set of predicates in each iteration.
We prove that if the Lagrangian is (anti-)monotone in $X$ or $Y$ %
the ``accumulation'' of values ensures that the primal-dual algorithm makes progress
in the sense of never generating the same $\alpha$ or $\beta$ more than once.
This is the case for CEGAR, where the Lagrangian is monotone on $Y$.

We show that many existing algorithms for a variety of problems can be described as instances of our framework with a suitable Lagrangian.
For safety verification, we provide Lagrangians that capture CEGAR~\cite{DBLP:conf/cav/ClarkeGJLV00,DBLP:conf/pldi/BallMMR01} (as described above), ICE learning~\cite{ice}, and primal-dual Houdini~\cite{Padon2022}.
We show that our Lagrangian-based algorithmic framework can also capture algorithms
for termination verification~\cite{Podelski2004}, and for solving quantified linear real arithmetic (LRA) formulas~\cite{Farzan2016}.

The Lagrangians succinctly capture the corresponding algorithms, and make the primal-dual interaction explicit.
Furthermore, they allow us to compare the different algorithms.
For example, our Lagrangian-based framework makes it clear that CEGAR makes progress by accumulating on the proof side,
ICE makes progress by accumulating on the counterexample side,
and primal-dual Houdini accumulates on both sides.

In some cases, the Lagrangians shed new light on some of the ingredients of existing techniques.
For example, for termination verification, the Lagrangian perspective exposes that the disjunction found in disjunctively well-founded transition invariants~\cite{Podelski2004} provides a mechanism to accumulate and make monotonic progress on the proofs side for termination proofs.
For solving quantified LRA formulas, we see that the strategy skeletons of~\cite{Farzan2016} play a similar role in enabling monotonic progress.

Another interesting aspect exposed by our unifying framework is the symmetry or asymmetry between the primal and dual problems induced by the Lagrangian.
For example, in primal-dual Houdini and in quantified LRA solving, the primal and dual problems are symmetric in the sense that they have an identical high-level structure. In contrast, in CEGAR and ICE the two sides of the duality have different structures.

The Lagrangians also provide a new perspective on the cases where the algorithms captured by them fail,
due to a connection between strong duality and the ability of the generic primal-dual algorithm to terminate.
Namely, in cases where the primal and dual solutions do not coincide (\ie, strong duality does not hold), the primal-dual algorithm diverges.
In the kinds of problems we consider, strong duality often holds for an ``idealized'' version of the problem, where sets of states can be infinite,
or where any set of states can be captured by a predicate, but is lost when restrictions are imposed on the values in $X$ and $Y$.
For example, in the case of CEGAR, this may be the case if the target system is safe, but the set of predicates considered is not sufficient to prove its safety.
Thus, from the Lagrangian perspective, the loss of precision or expressiveness that prevents termination is captured by the loss of strong duality.

Equipped with Lagrangian based framework and the insights it provides, we go on to develop a new primal-dual algorithm for solving formulas in fixpoint logic over quantified linear arithmetic. Our new algorithm combines ideas from the aforementioned algorithms for termination verification and quantified LRA solving. We derive it cleanly by defining a Lagrangian that combines elements from the Lagrangians of these prior algorithms. A prototype implementation of this new algorithm achieves promising results and solves some instances that state-of-the-art algorithms do not. 

In summary, this paper makes the following contributions:
\begin{enumerate}
\item We define the notions of a Lagrangian and the primal and dual optimization problems it induces in a way that generalizes the classic Lagrangian from linear programming and is general enough for the setting of verification algorithms.
\item We present a primal-dual search procedure that is parameterized by a Lagrangian and attempts to conclude either an upper bound on the value of the primal problem or a lower bound on the value of the dual problem.
  We provide a sufficient condition that ensures that the procedure makes progress.
\item We demonstrate that with suitable Lagrangians, several existing algorithms for a variety of program analysis and automated reasoning problems can be seen as instances of our framework,
and some of their characteristics can be explained in terms of the Lagrangians.
  \item We use our framework to derive a new primal-dual algorithm for solving formulas in fixpoint logic over quantified linear arithmetic. We present an initial empirical evaluation that shows the promise of the new algorithm.
\end{enumerate}

The rest of the paper is organized as follows. \Cref{sec:lagrangian} recalls the classical notion of a Lagrangian from linear programming, and introduces our generalization that is applicable to program verification as well as an abstract primal-dual search algorithm that is parameterized by a Lagrangian.
\Cref{sec:safety} considers the safety verification problem and derives CEGAR and ICE learning as instances of our framework.
\Cref{sec:houdini} derives primal-dual Houdini~\cite{Padon2022} as an instance of our framework with a suitable Lagrangian.
\Cref{sec:termination} formalizes algorithms for termination verification based on disjunctive well-founded ranking functions~\cite{Podelski2004} as instances of the primal-dual procedure with a suitable Lagrangian, and \Cref{sec:qlra} introduces a Lagrangian that captures the algorithm of~\cite{Farzan2016} for solving quantified linear real arithmetic (LRA) formulas.
\Cref{sec:fixed-point-logic} presents our new algorithm for fixpoint logic over quantified linear arithmetic.
We conclude the paper with a discussion of related work in \Cref{sec:related}.

\section{Lagrange Duality for Linear Programming and Verification}\label{sec:lagrangian}
Duality is a fundamental and useful concept in linear programming problems, and this section aims to generalize this concept to verification problems.
However, there are various differences between linear programming problems and verification problems.
For example, in a linear programming problem, both the values to be controlled and to be optimized are continuous, whereas in a verification problem, both are often discrete.
This section introduces the Lagrange duality in a general form applicable to both linear programming and verification problems.
We also provide a basic procedure to solve the optimization problem with a progress property.

\subsection{Duality in Linear Programming}
We briefly review the duality in linear programming.
Let \( \Real \) be the set of reals and \( \PositiveReal \) be the subset of non-negative reals.
For vectors \( x,y \in \Real^n \), we write \( \InnerProd{x}{y} \) for the inner product and \( x \le y \) to mean \( x_i \le y_i \) for every \( i \).
For a matrix \( A \in \Real^{n \times m} \), its transpose is written as \( A^T \).

Consider the following linear program, which we shall call the \emph{primal} problem:\footnote{%
  We implicitly assume that \( \{ x \in \PositiveReal^n \mid A x = b \} \) is non-empty and the optimal value is finite.
  }
\begin{align*}
  & \mbox{minimize}\quad \langle c, x \rangle, \qquad x \in \PositiveReal^n, \\
  & \mbox{subject to}\quad A x = b,
\end{align*}
where \( c \in \mathbb{R}^n \), \( b \in \mathbb{R}^m \) and \( A \in \mathbb{R}^{n \times m} \) are constants.
We say \( x \in \PositiveReal^n \) is \emph{feasible} if \( A x = b \).
A feasible \( x \) gives an upper bound \( \langle c, x \rangle \) of the optimal value, but a way to obtain a lower bound is not obvious.

The \emph{dual} problem is a useful tool to overcome the situation.
It is defined as:
\begin{align*}
  & \mbox{maximize}\quad \langle \lambda, b \rangle, \qquad \lambda \in \mathbb{R}^m, \\
  & \mbox{subject to}\quad A^T \lambda \le c.
\end{align*}
An important point is that the optimal value of this dual problem coincides with the optimal value of the primal problem described above.
So \( \lambda \) satisfying \( A^T \lambda \le c \) provides a lower bound \( \InnerProd{\lambda}{b} \) of the optimal value of both the primal and dual problems.
It is often the case that solving the primal problem takes a long time but the dual problem can be solved quickly (and vice versa).
Linear programming algorithms often leverage this duality by exchanging information between the primal and the dual problems.

The primal and dual problems are connected by the \emph{Lagrangian},
a function on \( x \) and \( \lambda \) given by
\begin{equation*}
  L(x, \lambda) \quad:=\quad \langle c, x \rangle - \langle \lambda, A x - b \rangle, \qquad x \in \mathbb{R}_{\ge 0}^n,\; \lambda \in \mathbb{R}^m.
\end{equation*}
This is obtained by changing the \emph{hard constraint} \( A x = b \) to a \emph{soft constraint}.
The constraint is violated if \( Ax - b \) is non-zero for some element, say \( (Ax - b)_i \neq 0 \), and then the objective function is increased by \( -\lambda_i (Ax - b)_i \).
Since this penalty can be made arbitrarily large by choosing \( \lambda_i \) to have the same sign as \( (Ax - b)_i \) and a large absolute value, the worst choice of the weights \( \lambda \) for the penalty makes the objective function \( \infty \) if the constraint \( A x = b \) is violated.
So
\begin{equation*}
  \sup_{\lambda} L(x, \lambda) \quad=\quad
  \begin{cases}
    \langle c, x \rangle &\mbox{if \( A x = b \)} \\
    \infty &\mbox{if \( A x \neq b \).}
  \end{cases}
\end{equation*}
Hence, the optimal value to the original problem is equivalent to \( \inf_x \sup_\lambda L(x, \lambda) \), and the optimal solution \( x^* \) achieves \( \sup_\lambda L(x^*, \lambda) = \inf_x \sup_\lambda L(x, \lambda) \).

The Lagrangian also characterizes the dual problem.
We have
\begin{align*}
  L(x,\lambda)
  &=
  \langle c, x \rangle - \langle \lambda, A x - b \rangle
  \\
  &=
  \langle c, x \rangle - \langle \lambda, Ax \rangle + \langle \lambda, b \rangle
  \\
  &=
  \langle c, x \rangle - \langle A^T \lambda, x \rangle + \langle \lambda, b \rangle
  \\
  &=
  \langle c - A^T \lambda , x \rangle + \langle \lambda, b \rangle.
\end{align*}
For a fixed \( \lambda \), if \( c - A^T \lambda \) has a negative element, say \( (c - A^T \lambda)_i < 0 \), by choosing large \( x_i > 0 \), the value of \( L(x, \lambda) \) can be made arbitrary small.
So
\begin{equation*}
  \inf_x L(x, \lambda) \quad=\quad
  \begin{cases}
    \langle \lambda, b \rangle &\mbox{if \( A^T \lambda \le c \)} \\
    -\infty &\mbox{if \( A^T \lambda \not\le c\).}
  \end{cases}
\end{equation*}
Hence \( \sup_\lambda \inf_x L(x, \lambda) \) is the optimum value for the dual problem.

The coincidence of the primal and dual is expressed as
\begin{equation*}
  \textstyle
  \inf_x \sup_\lambda L(x,\lambda)
  \quad=\quad
  L(x^*, \lambda^*)
  \quad=\quad
  \sup_\lambda \inf_x L(x,\lambda),
\end{equation*}
where \( x^* \) and \( \lambda^* \) are the optimal solutions to the primal and dual problems, respectively.
Furthermore, optimal solutions \( x^* \) and \( \lambda^* \) are characterized by \( \sup_\lambda L(x^*, \lambda) = L(x^*, \lambda^*) = \inf_x L(x, \lambda^*) \).
This non-trivial but useful property is called the \emph{strong duality}.

\subsection{Generalized Lagrange Duality}
In typical situations in linear programming and convex optimization, a Lagrangian takes vectors and returns a real.
Vectors and reals have useful operations and properties, and the development of Lagrange duality exploits these operations and properties.
On the contrary, the verification community mainly deals with logical expressions, state transition systems, \emph{etc.}, which significantly differ from reals and vectors.
This subsection develops the Lagrange duality that works for such situations with less structure.

The following definition is probably the minimum requirement, assuming only that the upper and lower bounds in the codomain of a Langrangian make sense.
\begin{definition}[Lagrangian]
  A \emph{Lagrangian} is a function
  \[
  L \colon X \times Y \longrightarrow P
  \]
  from sets \( X \) and \( Y \) to a totally-ordered complete lattice \( P = (P, \le) \).
  \qed
\end{definition}
Similar to the case of linear programming, a Lagrangian \( L \colon X \times Y \longrightarrow P \) induces primal and dual objective functions as well as the associated optimization problems.
\begin{definition}
  Let \( L \colon X \times Y \longrightarrow P \) be a Lagrangian.
  The \emph{primal objective function} is \( \alpha \mapsto \sup_{y\in Y} L(\alpha, y) \) and the \emph{primal optimization problem} is to minimize the primal objective function, \ie~to compute \( \inf_x \sup_y L(x,y) \).
  Similarly, the \emph{dual objective function} is \( \beta \mapsto \inf_{x \in X} L(x, \beta) \) and the \emph{dual optimization problem} is to compute \( \sup_y \inf_x L(x,y) \).
  \qed
\end{definition}

\begin{example}\label{eg:lagrangian:primal-and-dual-optimization-problems}
  Let \( L_1 \colon \Int \times \Int \longrightarrow \{-1,1\} \) (with \( -1 \le 1 \))
  be the Lagrangian defined by
  \begin{equation*}
    L_1(x,y) :=
    \begin{cases}
      -1 & \mbox{if \( x \ge y \)} \\
      1 & \mbox{if \( x < y \).}
    \end{cases}
  \end{equation*}
  The primal objective function \( \alpha \mapsto \sup_{y \in \Int} L_1(\alpha, y) \) is the constant function to \( 1 \), so the optimal outcome for the primal optimization problem is \( 1 \).
  The dual objective function \( \beta \mapsto \inf_{x \in \Int} L_1(x, \beta) \) is the constant function to \( -1  \), so the optimal outcome for the dual optimization problem is \( -1 \).
  Let \( L_2 \colon \Int \times (\Int \to \Int) \longrightarrow \{-1,1\} \) be another Lagrangian defined by
  \begin{equation*}
    L_2(x,y) :=
    \begin{cases}
      -1 & \mbox{if \( x \ge y(x) \)} \\
      1 & \mbox{if \( x < y(x) \).}
    \end{cases}
  \end{equation*}
  The primal objective function \( \alpha \mapsto \sup_{y \in (\Int \to \Int)} L_2(\alpha, y) \) is the constant function to \( 1 \), so the optimal outcome for the primal optimization problem is \( 1 \).
  The dual objective function \( \beta \mapsto \inf_{x \in \Int} L_2(x, \beta) \) is \( 1 \) if \( \forall x \in \mathbb{Z}. x < \beta(x) \) and \( -1 \) otherwise.
  For example, its value on \( \beta^*(x) = x+1 \) is \( 1 \).
  So the optimal outcome for the dual optimization problem is \( 1 \).
  \qed
\end{example}

As \cref{eg:lagrangian:primal-and-dual-optimization-problems} shows, the optimal values for the primal and dual problems do not necessarily coincide.
However, even this very general setting enjoys a weak form of duality.
\begin{lemma}\label{lem:lagrangian:weak-duality}
  Every Lagrangian \( L \colon X \times Y \longrightarrow P \) enjoys the weak duality:
  \begin{equation*}
    \sup_{y \in Y} \inf_{x \in X} L(x,y)
    \quad\le\quad
    \inf_{x \in X} \sup_{y \in Y} L(x,y).
  \end{equation*}
\end{lemma}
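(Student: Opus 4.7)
The plan is to carry out the standard two-step argument that reduces weak duality to transitivity of the order on $P$ together with the defining properties of infima and suprema. Since $P$ is assumed to be a totally-ordered complete lattice, all the $\sup$s and $\inf$s we write down exist, so I do not need to worry about partiality of these operations.

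First I would start from the trivial observation that, for any fixed $x_0 \in X$ and $y_0 \in Y$,
\begin{equation*}
  \inf_{x \in X} L(x, y_0) \;\le\; L(x_0, y_0) \;\le\; \sup_{y \in Y} L(x_0, y),
\end{equation*}
where the left inequality follows because $\inf_x L(x, y_0)$ is a lower bound of $\{L(x, y_0) \mid x \in X\}$ and the right inequality because $\sup_y L(x_0, y)$ is an upper bound of $\{L(x_0, y) \mid y \in Y\}$. Transitivity then gives $\inf_{x \in X} L(x, y_0) \le \sup_{y \in Y} L(x_0, y)$ for every $x_0 \in X$ and every $y_0 \in Y$.

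Next I would eliminate the two free variables by taking bounds on each side. Fixing $y_0$ and viewing the right-hand side $\sup_{y \in Y} L(x_0, y)$ as a function of $x_0$ only, the inequality says $\inf_{x \in X} L(x, y_0)$ is a lower bound of this function, so
\begin{equation*}
  \inf_{x \in X} L(x, y_0) \;\le\; \inf_{x_0 \in X} \sup_{y \in Y} L(x_0, y).
\end{equation*}
The right-hand side no longer depends on $y_0$, so it is an upper bound of $\{\inf_{x \in X} L(x, y_0) \mid y_0 \in Y\}$, and taking the supremum over $y_0$ yields
\begin{equation*}
  \sup_{y_0 \in Y} \inf_{x \in X} L(x, y_0) \;\le\; \inf_{x_0 \in X} \sup_{y \in Y} L(x_0, y),
\end{equation*}
which is the desired weak duality inequality after renaming bound variables.

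I do not anticipate a real obstacle here: the only thing to be careful about is the order of quantifier manipulations, specifically making sure that when I take $\inf_{x_0}$ of the right-hand side the free variable $y_0$ has already been fixed, and symmetrically that when I take $\sup_{y_0}$ the right-hand side is already independent of $y_0$. No properties of $P$ beyond being a totally-ordered complete lattice are used, and in fact the argument goes through for any partially-ordered complete lattice, which is why the lemma is labeled ``weak''.
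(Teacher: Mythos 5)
Your proof is correct and follows essentially the same argument as the paper: bound $L$ by the relevant $\inf$/$\sup$ pointwise and then eliminate the free variables by taking $\inf$ and $\sup$ in the right order. The only quibble is your closing aside—the lemma is called ``weak'' duality because the inequality may be strict (in contrast to strong duality), not because the argument works over general lattices—but this does not affect the proof.
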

\begin{proof}
  Since \( L(x,y) \le \sup_x L(x,y) \) for every \( x \) and \( y \), we have \( \inf_y L(x,y) \le \inf_y \sup_x L(x,y) \) for every $x$.
  So \( \sup_x \inf_y L(x,y) \le \inf_y \sup_x L(x,y) \).
\end{proof}

\begin{example}\label{eg:lagrangian:weak}
  Recall the Lagrangian \( L_1 \) in \cref{eg:lagrangian:primal-and-dual-optimization-problems}.
  We have
  \begin{equation*}
    \sup_{y \in \Int} \inf_{x \in \Int} L_1(x,y) = -1 < 1 = \inf_{x \in \Int} \sup_{y \in \Int} L(x,y),
  \end{equation*}
  so the inequality in \cref{lem:lagrangian:weak-duality} can be strict.
  In general, for \( L \colon X \times Y \longrightarrow \{ -1,1 \} \), we have \( \inf_{x} \sup_{y} L(x,y) = 1 \) if and only if \( \forall x. \exists y. L(x,y)=1 \) holds, and similarly \( (\sup_{y} \inf_{x} L(x,y) = 1) \Leftrightarrow (\exists y. \forall x. L(x,y) = 1) \).
  So the weak duality is a mild generalization of the well-known fact \( (\exists x. \forall y. \varphi(x,y)) \Rightarrow (\forall y. \exists x. \varphi(x,y)) \).
  \qed
\end{example}

  We say
a Lagrangian \( L\colon X \times Y \longrightarrow P \) enjoys \emph{strong duality} if \( \sup_y \inf_x L(x,y) = \inf_x \sup_y L(x,y) \).
  The Lagrangian \( L_2 \) in \cref{eg:lagrangian:primal-and-dual-optimization-problems} enjoys strong duality.
  The strong duality is a desirable and useful property, but we do not assume it.

  In the examples in this paper, a typical situation is as follows.
  A Lagrangian \( L \colon X \times Y \longrightarrow P \) has an ``idealization'' \( L' \colon X' \times Y' \longrightarrow P \) satisfying \( X \subseteq X' \) and \( Y \subseteq Y' \), and the idealization \( L' \) enjoys the strong duality and its optimal value coincides with the answer to a problem of interest (\eg~whether a given system is safe).
  However, \( X' \) and \( Y' \) are often not computationally tractable sets such as the set \( \powerset(\mathit{States}) \) of all subsets of an infinite set \( \mathit{States} \).
  The Lagrangian \( L \) is a computably tractable approximation of \( L' \), typically obtained by replacing an intractable set (\eg~\( \powerset(\mathit{States}) \)) with a tractable one (\eg~the set \( \finitepowerset(\mathit{States}) \) of finite subsets or an appropriate set of logical formulas describing properties on \( \mathit{States} \)).
  This approximation, however, may lose
  desirable properties such as strong duality, so we do not assume the strong duality of \( L \).

\subsection{A Primal-Dual Procedure}
In the typical setting in this paper, we want to know whether the optimal value of the primal (or dual) optimization problem of a given Lagrangian \( L \) is bounded by a specific value, which we write as \( 0 \).
For example, for the Lagrangian \( \Lcegar \colon X \times Y \longrightarrow \{ -1,1 \} \) in \cref{sec:cegar}, \( \inf_x \Lcegar(x,\beta) = 1 \) if and only if \( \beta \) witnesses the safety of the target system, so we are interested in whether the optimal value \( \sup_y \inf_x \Lcegar(x,y) \) of the dual optimization problem exceeds \( 0 \) or not, regarding the codomain of \( \Lcegar \) as \( \{ -1 < 0 < 1 \} \).
This subsection develops a general procedure to solve this problem.

Suppose that we would like to know whether \( \sup_y \inf_x L(x,y) \ge 0 \) for a given Lagrangian \( L \colon X \times Y \longrightarrow \{-1,1\} \).
To affirmatively answer this question, it suffices to find a witness \( \beta \in Y \) such that \( \inf_x L(x,\beta) \ge 0 \).
Similarly, a negative answer is confirmed by finding \( \alpha \in X \) such that \( \sup_y L(\alpha, y) \leq 0 \).
To find a witness \( \alpha \) or \( \beta \), we start by choosing an arbitrary value \( \beta_0 \in Y \) as a candidate for a witness \( \beta \).
If \( \inf_x L(x, \beta_0) \ge 0 \),
we are done; otherwise, there exists \( \alpha_1 \in X \) such that \( L(\alpha_1, \beta_0) < 0 \).
Now \( \alpha_1 \) can be a candidate for a negative witness \( \alpha \), so we check whether \( \sup_y L(\alpha_1, y) \le 0 \).
We are done if it is the case, and otherwise, \( L(\alpha_1, \beta_1) > 0 \) for some \( \beta_1 \in Y \), which is the next candidate for a positive witness.
In this way, we can iteratively update the candidates of \( \alpha \) and \( \beta \) against each other until we finally find a positive or negative witness.

While the above process might converge, it might also get stuck in a cycle.
For example, if $\beta_1$ above is not a valid positive witness we will discover $\alpha_2$ such that \( L(\alpha_2, \beta_1) < 0 \)  as a counter-witness to $\beta_1$ and our next candidate negative witness.
But it may be the case that \( L(\alpha_2, \beta_0) \ge 0 \), in which case we may start repeating ourselves.
That is, our sequence of candidate positive and negative witnesses will be $\beta_0, \alpha_1, \beta_1, \alpha_2, \beta_0, \ldots$, iterating back and forth between two pairs of candidate positive and negative witnesses.
One way to ensure progress, at least in the sense of not revisiting the same candidates, is to assume additional structure on $X$ or $Y$ (or possibly both).
Suppose that  \( X \) or \( Y \) is a join semilattice \((X, \sqsubseteq_X, \sqcup_X) \) or \((Y, \sqsubseteq_Y, \sqcup_Y) \) and
that the Lagrangian \(L\) is anti-monotone on \( X \) or monotone on \( Y \).
Then, when updating the candidate positive or negative witness, rather than forgetting the previous candidate we can take the join of the previous candidate and the new counter-witness.
The option of this monotonic update appears in \Cref{alg:lagrangian:procedure}, lines~\ref{line:updatealpha} and \ref{line:updatebeta}.
As we prove below, a monotonic update on either side is sufficient to prevent the search from getting stuck and ensures it keeps exploring new possible solutions (on both sides).

\begin{wrapfigure}{L}{0.38\textwidth}
\begin{minipage}{0.38\textwidth}
\begin{algorithm}[H]
  \caption{Primal-Dual Procedure
  }\label{alg:lagrangian:procedure}
  \begin{algorithmic}[1]
    \Require ${\textsc{PrimalDual}}(L)$
      \State let $\alpha \in X$, $ \beta \in Y $
      \While{$\textbf{true}$}
        \If {$ \inf_x L(x,\beta) \ge 0 $} \label{line:primalcheck}
          \State $ \mathbf{return}\;(\mathtt{D}, \beta) $
        \EndIf
        \State let $ \delta \in \{ x \in X \mid L(x, \beta) < 0 \} $ \label{line:delta}
        \State $ \alpha \leftarrow \delta $ or $ \alpha \leftarrow \alpha \sqcup_X \delta $  \label{line:updatealpha}
        \If {$ \sup_y L(\alpha,y) \le 0 $} \label{line:dualcheck}
          \State $ \mathbf{return}\;(\mathtt{P}, \alpha) $
        \EndIf
        \State let $ \gamma \in \{ y \in Y \mid L(\alpha, y) > 0 \} $ \label{line:gamma}
        \State $ \beta \leftarrow \gamma $ or $ \beta \leftarrow \beta \sqcup_Y \gamma $ \label{line:updatebeta}
      \EndWhile
    \end{algorithmic}
\end{algorithm}
\end{minipage}
\end{wrapfigure}

\Cref{alg:lagrangian:procedure} formalizes this idea.
\Cref{alg:lagrangian:procedure} can be divided into two parts, namely lines \ref{line:primalcheck}--\ref{line:delta} and lines \ref{line:dualcheck}--\ref{line:gamma}.
The former checks whether the current \( \beta \) is a positive witness and, if it is not the case, produces a counter \( \delta \).
The latter is
the dual of the former, checking whether \( \alpha \) is a negative witness and producing a counter \( \gamma \).
We call the former the \emph{dual witness check} and the latter the \emph{primal witness check}
(recall that the primal optimization problem is minimization, and the dual is maximization).
\Cref{alg:lagrangian:procedure} does not tell us how to implement the subprocedures to solve these subproblems and leaves the choice in lines~\ref{line:delta} and~\ref{line:gamma} nondeterministic.

\begin{remark}
  There are similar variants of \cref{alg:lagrangian:procedure}.
  E.g.,
  rather than checking the dual witness first, it is possible to start with the primal witness check.
  Also, instead of computing \( \gamma \in \{ y \in Y \mid L(\alpha,y) \not\le 0 \} \) and updating \( \beta \leftarrow \beta \sqcup \gamma \), we can directly update \( \beta \) to an element from \( \{ y \in Y \mid L(\alpha,y) \not\le 0, y \ge \beta \} \).
  We will refer to \Cref{alg:lagrangian:procedure} and these variants collectively as the \emph{basic primal-dual procedure}.
  \qed
\end{remark}

\begin{remark}
  Developing a practical procedure based on \cref{alg:lagrangian:procedure} often requires optimizations.
  Heuristics for choosing a ``good'' \( \delta \) or \( \gamma \) on lines \ref{line:delta} or \ref{line:gamma} can have a significant impact on performance.
  One could also consider managing additional information beyond \( \alpha \) and \( \beta \) to solve the primal and dual witness check problems faster.
  \Cref{alg:lagrangian:procedure} just describes a skeleton of practical procedures.
  \qed
\end{remark}

\subsubsection*{Partial Correctness and Progress}
The partial correctness (i.e., correctness assuming termination) and progress properties of \Cref{alg:lagrangian:procedure} are formalized in the following theorems.

\begin{theorem}\label{thm:lagrangian:primal-dual-method-sound}
  \textsc{PrimalDual} in \cref{alg:lagrangian:procedure} is correct in the following sense:
  \begin{itemize}
    \item If \( (\mathtt{P}, \alpha) = \textsc{PrimalDual}(L) \), then \( 0 \) is an upper bound of the optimal value of the primal optimization problem and \( \alpha \) is a witness: \( \inf_x \sup_y L(x,y) \le \sup_y L(\alpha, y) \le 0 \).
    \item If \( (\mathtt{D}, \beta) = \textsc{PrimalDual}(L) \), then \( 0 \) is a lower bound of the optimal value of the dual optimization problem and \( \beta \) is a witness: \( \sup_y \inf_x L(x,y) \ge \inf_x L(x,\beta) \ge 0 \).
  \end{itemize}
\end{theorem}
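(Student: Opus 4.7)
The plan is to read the correctness statement directly off the two exit conditions of the procedure, using only elementary properties of $\inf$ and $\sup$. Algorithm~\ref{alg:lagrangian:procedure} has exactly two return points: one tagged $\mathtt{D}$ immediately after the dual witness check succeeds, and one tagged $\mathtt{P}$ immediately after the primal witness check succeeds. In each case the guard of the enclosing \textbf{if} supplies the inner inequality of the conclusion, and the outer inequality is purely order-theoretic.

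First I would handle the $(\mathtt{P}, \alpha)$ case. The procedure can only emit the tag $\mathtt{P}$ by falling through the test $\sup_y L(\alpha, y) \le 0$, so the value of $\alpha$ at the moment of return satisfies exactly this bound. The outer inequality $\inf_x \sup_y L(x,y) \le \sup_y L(\alpha, y)$ is immediate from the definition of infimum: $\inf_x$ of the map $x \mapsto \sup_y L(x,y)$ is a lower bound of that map, hence is $\le$ its value at the particular $\alpha \in X$ produced by the procedure. Chaining the two inequalities yields the claim.

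The $(\mathtt{D}, \beta)$ case is entirely symmetric. The tag $\mathtt{D}$ is only emitted after the test $\inf_x L(x,\beta) \ge 0$ succeeds, and the outer inequality $\sup_y \inf_x L(x,y) \ge \inf_x L(x,\beta)$ holds because $\sup_y$ is an upper bound of $y \mapsto \inf_x L(x,y)$ and therefore dominates its value at any fixed $\beta \in Y$.

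There is essentially no obstacle. Since the statement is about partial correctness, I never need to reason about the loop body beyond the trivial invariant that $\alpha \in X$ and $\beta \in Y$ throughout the execution; in particular, the argument does not depend on which of the two update policies is chosen on lines~\ref{line:updatealpha} and \ref{line:updatebeta}, nor on any (anti-)monotonicity or semilattice structure, nor on strong duality. Those hypotheses are reserved for the companion progress theorem, whereas soundness collapses to a one-line observation per case.
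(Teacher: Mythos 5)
Your proposal is correct and follows essentially the same route as the paper's own proof: read the guard of the return point to get $\sup_y L(\alpha,y) \le 0$ (resp. $\inf_x L(x,\beta) \ge 0$), with the outer inequality being an immediate consequence of the definition of $\inf$ and $\sup$. The paper's proof is just a terser version of the same observation, so nothing is missing.
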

\begin{proof}
  If the procedure returns \( (\mathtt{P}, \alpha) \), then the value \( \alpha \) has passed the condition in \cref{line:dualcheck}, hence \( \sup_y L(\alpha, y) \le0\).
  The case of \( (\mathtt{D}, \beta) \) is similar.
\end{proof}
\begin{theorem}\label{thm:lagrangian:primal-dual-method-progress}
  Let \( L \colon X \times Y \longrightarrow P \) be a Lagrangian.
  Assume that \( X \) or \( Y \) is a join semilattice, and \( L \) is anti-monotone on \( X \) or monotone on \( Y \).
  Then, \textsc{PrimalDual} in \cref{alg:lagrangian:procedure} enjoys progress, i.e.~it does not assign the same value to \( \alpha \) or \( \beta \) twice, provided that we always choose $\alpha \leftarrow \alpha \sqcup_X \delta$ in \cref{line:updatealpha} (or that we always choose $\beta \leftarrow \beta \sqcup_Y \gamma$ in \cref{line:updatebeta}).
\end{theorem}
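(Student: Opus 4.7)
The plan is to treat the two symmetric cases of the hypothesis separately; I will spell out the case where $Y$ is a join semilattice, $L$ is monotone in its second argument, and the algorithm always uses the join update $\beta \leftarrow \beta \sqcup_Y \gamma$. The other case (join semilattice structure on $X$, $L$ anti-monotone in its first argument, join update on $\alpha$) is handled by an analogous argument with the roles of $X$ and $Y$ swapped and inequalities reversed. Throughout, I will label the sequence of values that arise during the execution: $\beta_0$ is the initial $\beta$, and for each iteration $i \geq 1$ I write $\delta_i$, $\alpha_i$, $\gamma_i$, $\beta_i$ for the values obtained on lines~\ref{line:delta}, \ref{line:updatealpha}, \ref{line:gamma}, \ref{line:updatebeta} respectively, so that $\beta_i = \beta_{i-1} \sqcup_Y \gamma_i$ and, by the defining conditions on lines~\ref{line:delta} and \ref{line:gamma}, $L(\delta_i, \beta_{i-1}) < 0$ and $L(\alpha_i, \gamma_i) > 0$.

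The first key step is to show that the chain $\beta_0 \sqsubseteq_Y \beta_1 \sqsubseteq_Y \cdots$ is \emph{strictly} increasing. For this I first want $L(\alpha_i, \beta_{i-1}) < 0$: if $\alpha$ is updated by plain replacement then $\alpha_i = \delta_i$ and the inequality is immediate from line~\ref{line:delta}, and if $\alpha$ is updated by join (which presupposes that $X$ is a join semilattice and, to be useful, that $L$ is anti-monotone in $X$), then anti-monotonicity transports the strict negativity from $L(\delta_i, \beta_{i-1})$ up to $L(\alpha_i, \beta_{i-1})$. Now suppose for contradiction $\beta_i = \beta_{i-1}$; then $\gamma_i \sqsubseteq_Y \beta_{i-1}$, and monotonicity of $L$ in $Y$ yields $L(\alpha_i, \gamma_i) \leq L(\alpha_i, \beta_{i-1}) < 0$, contradicting $L(\alpha_i, \gamma_i) > 0$. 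Hence $\beta_{i-1} \sqsubsetneq_Y \beta_i$, and no value of $\beta$ is assigned twice.

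The second step, handling the $\alpha$ side, leverages the strict ascent just established. Assume toward a contradiction that $\alpha_i = \alpha_j$ for some $1 \le i < j$. Then $\beta_{j-1} \sqsupseteq_Y \beta_i \sqsupseteq_Y \gamma_i$, and monotonicity of $L$ in $Y$ gives $L(\alpha_i, \beta_{j-1}) \geq L(\alpha_i, \gamma_i) > 0$. On the other hand, the derivation of $\alpha_j$ at iteration $j$ ensures $L(\alpha_j, \beta_{j-1}) < 0$, and substituting $\alpha_j = \alpha_i$ contradicts the previous inequality. Thus all $\alpha_i$ with $i \geq 1$ are pairwise distinct as well.

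The main obstacle is clerical rather than mathematical: one must pin down which hypothesis (monotonicity in $Y$ or anti-monotonicity in $X$) is invoked on which side, and in particular make sure the $X$-update mode chosen in the execution is compatible with concluding $L(\alpha_i, \beta_{i-1}) < 0$. Once this bookkeeping is settled, the whole proof reduces to two short monotonicity chases, and strong duality plays no role at all.
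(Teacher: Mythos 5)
Your proof is correct and follows essentially the same route as the paper's: the same two monotonicity chases, showing the $\beta$-chain is strictly increasing and deriving a contradiction from $\alpha_i=\alpha_j$ via $\gamma_i \sqsubseteq_Y \beta_i \sqsubseteq_Y \beta_{j-1}$ and monotonicity of $L$ on $Y$, with the symmetric case handled dually. If anything, you are more explicit than the paper about the bookkeeping point that $L(\alpha_i,\beta_{i-1})<0$ requires either plain replacement of $\alpha$ or anti-monotonicity on $X$, which the paper compresses into ``by the definition of $\alpha_j$''.
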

\begin{proof}
  We focus on the case where \( Y \) is a join semilattice and \( \beta \) is updated monotonically (the other case is similar).
  We write \( \alpha_n \), \( \beta_n \) and \( \gamma_n \) for the values at the \( n \)-th iteration of the loop.
  Assume for contradiction that \( \alpha_i = \alpha_j \) for some \( i < j \).
  Then \( L(\alpha_i, \gamma_i) \not\le0\).
  Since \( \gamma_i \le \beta_i \le \beta_{j-1} \), we have \( L(\alpha_i, \beta_{j-1}) \not\le0\) by the monotonicity of \( L \) on \( Y \).
  We have \( L(\alpha_j, \beta_{j-1}) \not\ge 0 \) by the definition of \( \alpha_j \), a contradiction.
  We show that \( \beta_i \neq \beta_j \) for every \( i \neq j \).
  Since \( \beta_i \le \beta_{i+1} \le \dots \le \beta_j \) for \( i \le j \), it suffices to show that \( \beta_i < \beta_{i+1} \), or equivalently that \( \beta_i \neq \beta_{i+1} \).
  Assume for contradiction that \( \beta_i = \beta_{i+1} \).
  By the definition of \( \beta_{i+1} \), we have \( L(\alpha_{i+1}, \beta_{i+1}) \not\le0\).
  So \( L(\alpha_{i+1}, \beta_i) \not\le0\) since \( \beta_i = \beta_{i+1} \).
  By the definition of \( \alpha_{i+1} \), we have \( L(\alpha_{i+1}, \beta_{i}) \not\ge 0 \), a contradiction.
\end{proof}

\subsubsection*{Termination}
Unlike partial correctness and progress, termination of \cref{alg:lagrangian:procedure} is generally not guaranteed (except for trivial cases where \( X \) or \( Y \) are finite). This is because \cref{alg:lagrangian:procedure} is a general procedure applicable to a variety of verification problems, many of which are undecidable. Our Lagrangian framework, however, provides a necessary condition for termination, relating termination and strong duality.
\begin{theorem}\label{thm:lagrangian:gap-intermination}
  Assume \( L \colon X \times Y \longrightarrow \{ -1,1 \} \) does not enjoy the strong duality property, \ie{},
  $\inf_x \sup_y L(x,y) = 1$ while $\sup_y \inf_x L(x,y) = -1$.
  Then \cref{alg:lagrangian:procedure} cannot terminate.
  A similar claim holds if \( 0 \) is in the duality gap in the sense that
  \begin{equation*}
    \sup_y \inf_x L(x,y) < 0 < \inf_x \sup_y L(x,y).
  \end{equation*}
\end{theorem}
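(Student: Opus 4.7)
The plan is to argue that under the stated gap hypothesis neither return statement of \cref{alg:lagrangian:procedure} can ever fire, and that the loop body can always be completed, so the procedure must diverge.

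First I would dispose of the case where \( 0 \) lies strictly inside the duality gap, i.e., \( \sup_y \inf_x L(x,y) < 0 < \inf_x \sup_y L(x,y) \). Suppose for contradiction that the procedure halts with output \( (\mathtt{P}, \alpha) \). By \cref{thm:lagrangian:primal-dual-method-sound} one would have \( \inf_x \sup_y L(x,y) \le \sup_y L(\alpha,y) \le 0 \), directly contradicting the right-hand inequality of the gap hypothesis. Dually, termination with \( (\mathtt{D},\beta) \) would give \( \sup_y \inf_x L(x,y) \ge \inf_x L(x,\beta) \ge 0 \), contradicting the left-hand inequality. The strong-duality-failure case with codomain \( \{-1,1\} \) is then immediate: by \cref{lem:lagrangian:weak-duality} the two optima obey weak duality, so if they fail to coincide in \( \{-1,1\} \) they must take the values \( -1 \) and \( 1 \) respectively, and in particular \( 0 \) lies in the gap.

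Next I would check that the loop body can actually be executed each time, i.e., that the nondeterministic choices on lines~\ref{line:delta} and~\ref{line:gamma} are over non-empty sets. Because the checks on lines~\ref{line:primalcheck} and~\ref{line:dualcheck} cannot succeed (by the previous paragraph's reasoning applied at every iteration), we always have \( \inf_x L(x,\beta) \not\ge 0 \) and \( \sup_y L(\alpha,y) \not\le 0 \). Since \( P \) is totally ordered, if every \( L(x,\beta) \) were \( \ge 0 \) then \( \inf_x L(x,\beta) \ge 0 \); contrapositively, some \( x \) must satisfy \( L(x,\beta) < 0 \), giving a legitimate \( \delta \). The argument for \( \gamma \) on line~\ref{line:gamma} is symmetric. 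Hence each iteration completes and control returns to the top of the loop.

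Combining the two points, the loop runs forever, which is exactly the desired non-termination statement. The main subtlety — more a bookkeeping point than a genuine obstacle — is that \cref{alg:lagrangian:procedure} and \cref{thm:lagrangian:primal-dual-method-sound} are stated over a general totally-ordered complete lattice rather than \( \{-1,1\} \), so one must ensure that ``\(\inf < 0\) implies the existence of a witness \( x \) with \( L(x,\beta) < 0 \)'' is justified from total ordering alone, as above. Once that is in hand, the proof is essentially a direct contrapositive of \cref{thm:lagrangian:primal-dual-method-sound} applied at every step of the loop.
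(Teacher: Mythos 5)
Your proposal is correct and follows essentially the same route as the paper: the paper's proof simply observes that under the gap hypothesis \( \sup_y L(\alpha,y) > 0 \) and \( \inf_x L(x,\beta) < 0 \) for every \( \alpha \) and \( \beta \), so the return conditions in \cref{line:primalcheck,line:dualcheck} can never hold, which is exactly the content of your contrapositive application of \cref{thm:lagrangian:primal-dual-method-sound}. Your additional observations (reducing the \( \{-1,1\} \) case to the gap case via \cref{lem:lagrangian:weak-duality}, and checking that the choice sets in lines~\ref{line:delta} and~\ref{line:gamma} are nonempty so the loop truly runs forever) are sound bookkeeping refinements of the same argument rather than a different approach.
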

\begin{proof}
  Since \( \sup_y L(\alpha,y) > 0 \) and \( \inf_x L(x, \beta) < 0 \) for every \( \alpha \in X \) and \( \beta \in Y \), the conditions in \cref{line:primalcheck,line:dualcheck} are never met.
\end{proof}
A typical example of the above situation is CEGAR satisfying the following conditions: the target system is safe, and the set of available predicates is strong enough to refute individual suspicious error traces but not enough to prove the safety of the entire system.

\begin{example}
  \Cref{alg:lagrangian:procedure} may diverge, even for a Lagrangian with strong duality.
  Let \( \Int_{\infty} := \Int \cup \{ \infty \} \) and \( L_3 \colon \Int \times \Int_{\infty} \longrightarrow \{ -1,1 \} \) be the Lagrangian defined by \( L_3(x,y)=1 \Leftrightarrow (x < y) \).
  Note that \( \inf_x L_3(x,\infty) = 1 \) but \( \inf_x L_3(x,\beta) = -1 \) for every \( \beta \neq \infty \).
  So the loop continues until \( \beta \) becomes \( \infty \).
  However, one can always choose a finite \( \gamma \) in \cref{line:gamma}, resulting in divergence of the procedure.
  \qed
\end{example}

Let us now turn our attention to the opposite, \ie~ideas to ensure termination in some situations.
Assume a Lagrangian \( L \colon X \times Y \longrightarrow \{ -1, 1 \} \) with a positive witness \( \beta \) satisfying \( \inf_x L(x,\beta) = 1 \).
\emph{Stratification}~\cite{Jhala2006,Unno2021}
is a technique to ensure the procedure eventually finds a positive witness (possibly different from \( \beta \)).
The idea is to decompose \( Y = \biguplus_{n \in \Nat} Y_n \) into an infinite union of finite sets,
and to force the procedure to find \( \gamma \in Y_l \) with minimum \( l \) (in \cref{line:gamma}). %

By combining monotonicity as formulated in \Cref{thm:lagrangian:primal-dual-method-progress} with stratification we can ensure termination,
assuming a suitable positive or negative witness exists.
The conditions for termination vary between the case where monotonicity and stratification are applied at the same side (i.e., both to \(X \) or to \(Y\)) or at different sides (i.e., \(X\) is a semilattice with monotonicity and \(Y\) is stratified, or vice versa).
Below we analyze the case where \(X\) is a semilattice and \(Y\) is stratified and
then the case where \(Y\) is a semilattice and also stratified.
The other two cases are dual.

\begin{theorem}\label{thm:lagrangian:stratification}
  Let \( L \colon X \times Y \longrightarrow \{ -1, 1 \} \) be a Lagrangian where
  \( X \) is a join semilattice with \(L\) anti-monotone on \( X \),
  and \( Y \) is stratified, i.e., \( Y = \biguplus_{n \in \Nat} Y_n \) such that \( Y_n \) is finite for each \( n \).
  Assume that \( L \) has a positive witness \( \beta^* \), \ie~\( \inf_x L(x,\beta^*) = 1 \).
  Then \cref{alg:lagrangian:procedure} terminates provided that \( \gamma \) in \cref{line:gamma} is chosen from the smallest possible layer, \ie~\( \gamma \in Y_l \) with \( l = \min \{ k \in \Nat \mid \exists \gamma \in Y_k. L(\alpha, \gamma) > 0 \} \).
\end{theorem}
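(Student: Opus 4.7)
The plan is to argue by contradiction using the stratification to confine the choices of $\gamma_n$ to a finite set. Suppose the procedure does not terminate, and let $\alpha_n, \beta_n, \gamma_n, \delta_n$ denote the values at the $n$-th iteration of the loop. In order to apply \cref{thm:lagrangian:primal-dual-method-progress} I require the monotonic update $\alpha \leftarrow \alpha \sqcup_X \delta$ on \cref{line:updatealpha}; this gives $\alpha_n \sqsubseteq \alpha_{n+1}$ and, crucially, pairwise distinct $\beta_n$. Since $Y$ is only stratified and not assumed to be a semilattice, \cref{line:updatebeta} uses $\beta \leftarrow \gamma$, so $\beta_n = \gamma_n$ for every $n$.

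The central step is to show that every $\gamma_n$ lies in a fixed finite initial segment of the stratification. Define $G_n := \{ y \in Y \mid L(\alpha_n, y) > 0 \}$. Anti-monotonicity of $L$ on $X$ combined with $\alpha_m \sqsubseteq \alpha_n$ for $m \le n$ yields $G_n \subseteq G_m$: whenever $L(\alpha_n, y) > 0$ we also have $L(\alpha_m, y) \ge L(\alpha_n, y) > 0$. From the hypothesis $\inf_x L(x, \beta^*) = 1$ (with codomain $\{-1,1\}$) we get $L(\alpha_n, \beta^*) = 1$ for every $n$, so $\beta^* \in G_n$. Letting $\ell^* \in \Nat$ be the unique index with $\beta^* \in Y_{\ell^*}$, the minimum layer $l_n$ with $G_n \cap Y_{l_n} \ne \emptyset$ is therefore bounded by $\ell^*$, and the choice rule on \cref{line:gamma} forces $\gamma_n \in Y_{l_n} \subseteq Y_0 \cup \cdots \cup Y_{\ell^*}$.

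This set is finite, as a finite union of finite sets. Yet the sequence $(\beta_n)_n = (\gamma_n)_n$ is infinite and, by the progress theorem, pairwise distinct, which is impossible; hence the procedure must terminate. The main obstacle I anticipate is the subcase where $Y$ also happens to be a join semilattice and one opts for the monotonic update $\beta \leftarrow \beta \sqcup_Y \gamma$ on \cref{line:updatebeta}. Then $\beta_n$ can escape $Y_0 \cup \cdots \cup Y_{\ell^*}$ and the direct counting argument collapses. A natural fallback is to observe that $(l_n)_n$ is non-decreasing (because $(G_n)_n$ is non-increasing) and bounded above by $\ell^*$, hence eventually constant at some $l^\infty$; once $l_n = l^\infty$, the sets $G_n \cap Y_{l^\infty}$ would have to shrink strictly inside the finite set $Y_{l^\infty}$. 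Making this precise requires showing $\gamma_n \notin G_{n+1}$, which is automatic from $\beta_n = \gamma_n$ in the non-monotonic case but needs extra hypotheses otherwise; this is the delicate point that the writeup will have to spell out.
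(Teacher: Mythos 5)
Your proposal is correct and follows essentially the same route as the paper's proof: since \( \inf_x L(x,\beta^*)=1 \), the witness \( \beta^* \) is always an admissible choice in \cref{line:gamma}, so the smallest-layer rule confines \( \beta \) to the finite set \( Y_0\cup\dots\cup Y_{\ell^*} \), and progress (\cref{thm:lagrangian:primal-dual-method-progress}, via the monotone update on \( \alpha \)) forbids repeating \( \beta \)-values, forcing termination; the intermediate observation \( G_n\subseteq G_m \) is harmless but unnecessary. Your worry about the variant where \( \beta \leftarrow \beta \sqcup_Y \gamma \) is outside the intended scope of this theorem (here \( Y \) is only stratified, so \( \beta\leftarrow\gamma \) is the update in force), and that same-side case is treated separately in \cref{thm:lagrangian:stratification-lattice} under additional hypotheses.
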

\begin{proof}
  Assume \( \beta^* \in Y_n \).
  Then \( Y_{\le n} := \bigcup_{i = 0}^n Y_i \) is a finite set.
  \Cref{alg:lagrangian:procedure} terminates before \( | Y_{\le n} | \) iterations because of progress (\cref{thm:lagrangian:primal-dual-method-progress}) and because \( \beta \in Y_{\le n} \) is an invariant of the procedure,
  since \( \beta^* \) is always a possible choice in \cref{line:gamma} and we choose \( \gamma \) from the smallest possible layer.
\end{proof}

\Cref{thm:lagrangian:stratification} guarantees termination (when a positive witness exists) but imposes an additional condition on the choice of \( \gamma \), making the implementation of \cref{line:gamma} more difficult.
It is therefore important to design a stratification for which the modified \cref{line:gamma} can be implemented efficiently.

Stratification and monotonicity can also be on the same side (i.e., \(X\) or \(Y\)).
In a stratification of a semilattice the layers need not be finite, only of finite height, and they must be closed under join.

\newcommand{\layer}{r}
\begin{theorem}\label{thm:lagrangian:stratification-lattice}
  Let \( L \colon X \times Y \longrightarrow \{ -1, 1 \} \) be a Lagrangian and assume that \( Y \) is a join semilattice and \( L \) is monotone on \( Y \).
  Assume a join semilattice homomorphism \( \layer\colon Y \longrightarrow \Nat \), \ie~\( \layer(y \sqcup y') = \max(\layer(y), \layer(y')) \) and consider the induced stratification \( Y_n := \{ y \in Y \mid \layer(y) = n \} \).
  Suppose that each \( Y_n \) is of finite height, i.e., it has no infinite increasing chain \( y_0 \sqsubset y_1 \sqsubset \cdots \in Y_n \).
  We assume that the procedure always chooses \( \beta \leftarrow \beta \sqcup_Y \gamma \) in \cref{line:updatebeta} and that \( \gamma \) in \cref{line:gamma} is chosen from the smallest possible layer.
  Then the procedure terminates provided that there exists a positive witness \(\beta^*\).
\end{theorem}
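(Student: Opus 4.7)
The plan is to derive a contradiction from the assumption that \cref{alg:lagrangian:procedure} runs forever. Write $\beta_0, \beta_1, \ldots$ for the successive values of $\beta$ and $\gamma_0, \gamma_1, \ldots$ for the elements chosen on \cref{line:gamma}, and set $n := \layer(\beta^*)$. First I would observe that $\beta^*$ is a ``universal'' positive witness: since $\inf_x L(x,\beta^*) = 1$, we have $L(\alpha,\beta^*) = 1 > 0$ for every $\alpha \in X$, so $\beta^*$ is always a legal candidate on \cref{line:gamma}. The smallest-layer rule therefore forces $\layer(\gamma_i) \le n$ at every iteration. Combined with the homomorphism identity $\layer(\beta_i \sqcup_Y \gamma_i) = \max(\layer(\beta_i), \layer(\gamma_i))$, a one-line induction yields the invariant $\layer(\beta_i) \le M$ for $M := \max(\layer(\beta_0), n)$, so the entire run stays inside $\bigcup_{k=0}^{M} Y_k$.

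Next, I would invoke \cref{thm:lagrangian:primal-dual-method-progress} (whose hypotheses match what we are given: $Y$ a join semilattice, $L$ monotone on $Y$, and the monotone update convention for $\beta$) to conclude that no value of $\beta$ ever repeats. Because each update is a join, $\beta_i \sqsubseteq \beta_{i+1}$, and non-repetition promotes this to a strict chain $\beta_0 \sqsubset \beta_1 \sqsubset \cdots$. Since $\layer$ is a semilattice homomorphism, hence order-preserving, the rank sequence $\layer(\beta_0) \le \layer(\beta_1) \le \cdots$ is a non-decreasing sequence of naturals bounded by $M$, so it eventually stabilizes at some $k \le M$ from an index $N$ onward. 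But then $\beta_N \sqsubset \beta_{N+1} \sqsubset \cdots$ is an infinite strictly increasing chain contained entirely in the single layer $Y_k$, contradicting the finite-height hypothesis on $Y_k$.

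The delicate point I would want to verify carefully is the transition from ``bounded across finitely many layers'' to ``eventually trapped inside a single layer''. It is really the homomorphism property of $\layer$, rather than mere monotonicity, that does the work here: each update $\beta_{i+1} = \beta_i \sqcup_Y \gamma_i$ can raise the rank of $\beta$ only up to $\max(\layer(\beta_i), \layer(\gamma_i))$, and since $\layer(\gamma_i) \le n \le M$, the rank cannot keep strictly increasing forever. Once this is pinned down, the finite-height hypothesis on the stabilized layer $Y_k$ closes the argument routinely, and the proof for the dual case (where $X$ is the stratified semilattice and $L$ anti-monotone on $X$) is entirely symmetric.
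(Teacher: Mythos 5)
Your proof is correct and follows essentially the same route as the paper's (terse) argument: the invariant that $\layer(\beta)$ stays bounded because $\beta^*$ is always a legal choice on line~\ref{line:gamma} together with the homomorphism property, combined with the strict increase of $\beta$ from \cref{thm:lagrangian:primal-dual-method-progress}, and the finite-height hypothesis ruling out an infinite chain once the rank stabilizes in a single layer. You merely spell out the stabilization step the paper leaves implicit (and handle an arbitrary initial $\beta_0$ via $M=\max(\layer(\beta_0),n)$), which is a faithful elaboration rather than a different proof.
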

\begin{proof}
  Similar to that of \cref{thm:lagrangian:stratification}, combining progress and the invariant \( \layer(\beta) \leq \layer(\beta^*)\).
\end{proof}

\begin{remark}
  Although we were only interested in the sign of the optimal outcome of a Lagrangian, seeking the optimal outcome can also be useful for verification in some contexts.
  In this setting, it is natural to choose \( \gamma \) in \cref{line:gamma} in \cref{alg:lagrangian:procedure} from those achieveing the best outcome against \( \alpha \) (\ie~\( L(\alpha,\gamma) = \max \{ L(\alpha,y) \mid y \in Y \} \)).
  A Lagrangian \( L \colon X \times Y \longrightarrow \{ -1, 1 \} \) with a stratification \( Y = \biguplus_{n \in \Nat} Y_n \) induces the \emph{stratified Lagrangian} \( \widehat{L} \colon X \times Y \longrightarrow \Real \) defined by
  \begin{equation*}
    \widehat{L}(x,y) \quad:=\quad
    \begin{cases}
      -1 & \mbox{if \( L(x,y) = -1 \)} \\
      1/n & \mbox{if \( L(x,y) = 1 \) and \( y \in Y_n \).}
    \end{cases}
  \end{equation*}
  The procedure seeking the optimal outcome of \( \widehat{L} \) is the stratified version of \cref{alg:lagrangian:procedure} mentioned in \cref{thm:lagrangian:stratification,thm:lagrangian:stratification-lattice}.
  \qed
\end{remark}

\subsubsection*{Summary: Requirements for Applying our Framework}

We conclude this section by summarizing the process to apply \cref{alg:lagrangian:procedure}.
\begin{itemize}
  \item Develop a Lagrangian \( L \colon X \times Y \longrightarrow P \) with the following conditions:
  \begin{itemize}
    \item \textbf{Soundness}: At least one of \( \inf_x \sup_y L(x,y) \le 0 \) and \( \sup_y \inf_x L(x,y) \ge 0 \) is related to the problem we would like to solve.
    \item \textbf{Monotonicity}: At least one of \( X \) or \( Y \) is a join-lattice, and \( L \) is (anti-)monotone on that component.
  \end{itemize}
  \item Develop solvers for both the primal and dual witness check problems.  Possibly design heuristics for the nondeterministic choices in lines~\ref{line:delta} and~\ref{line:gamma} that lead to good generalization in practice.
\end{itemize}

\begin{remark}
  Hereafter, we assume that the codomain \(P\) of a Lagrangian is a \emph{finite subset of\/ \(\Int\)} with the order inherited from \( \Int \).
  All the examples in this paper satisfy this assumption.
  \qed
\end{remark}

\section{Lagrangians for Safety Verification: CEGAR and ICE}\label{sec:safety}
This section provides the first examples of Lagrangians in verification.
We provide two Lagrangians for the safety verification problem, which induce CEGAR and ICE.

\subsection{Problem setting}
This section focuses on the \emph{safety verification problem} of a given transition system, which is a problem asking if one can reach a bad state in a given transition system.
We give a formal definition.

\begin{definition}\label{def:transition-system}
  A \emph{transition system} \( \TSys \) is a tuple \( (\stateSet[\TSys], \initState[\TSys], \trans[\TSys], \badState[\TSys]) \) where \( |\TSys| \) is the set of \emph{states}, \( \initState[\TSys] \subseteq \stateSet[\TSys] \) is the \emph{initial states}, \( ({\trans[\TSys]}) \subseteq \stateSet[\TSys] \times \stateSet[\TSys] \) is the \emph{transition relation}, and \( \badState[\TSys] \subseteq \stateSet[\TSys] \) is the set of \emph{bad states}.
  A state \( s \in \stateSet[\TSys] \) is \emph{reachable} if there exists a transition sequence \( \initState[\TSys] \ni s_0 \trans[\TSys] s_1 \trans[\TSys] \dots \trans[\TSys] s_n = s \) for some \( n \ge 0 \).
  The transition system \( \TSys \) is \emph{safe} if no bad state \( s \in \badState[\TSys] \) is reachable.
  Otherwise \( \TSys \) is \emph{unsafe}.
  \qed
\end{definition}

\begin{example}\label{eg:transition-system}
  Let \( \TSys_0 \) be the transition system given by \( \stateSet[\TSys_0] = \Int \),
  \( \initState[\TSys_0] = \{ 0 \} \), \( (x \trans[\TSys_0] y) \Leftrightarrow x+1=y \) and \( \badState[\TSys_0] = \{ -3 \} \).
  The set of reachable states is \( \{ n \in \Int \mid n \ge 0 \} \),
  so \( \TSys_0 \) is safe.
  \qed
\end{example}

The unsafety of a transition system can be witnessed by a transition sequence \( \initState[\TSys] \ni s_0 \trans[\TSys] s_1 \trans[\TSys] \dots \trans[\TSys] s_n \in \badState[\TSys] \) from an initial state \( s_0 \) to a bad state \( s_n \).
A safety proof cannot always be possible in such an obvious way.
A common approach is to consider an appropriate set of \emph{predicates} \( \PredSet \)
and try to find a \emph{safe inductive invariant}.
\begin{definition}\label{def:predicate-set-for-a-transition-system}
  A \emph{predicate set} over a transition system \( \TSys \) is a set \( \PredSet \) together with a \emph{satisfaction relation} \( ({\models}) \subseteq \stateSet[\TSys] \times \PredSet \).
  When \( (s,p) \in ({\models}) \), we write \( s \models p \) and say that \( s \) \emph{satisfies} \( p \).
  A subset \( X \subseteq \stateSet[\TSys] \) is an \emph{inductive invariant} if it satisfies the following conditions: (1)~\( \initState[\TSys] \subseteq X \) (\emph{initiation}); and (2)~\( (s \in X) \wedge (s \trans[\TSys] s') \Longrightarrow (s' \in X) \) (\emph{consecution}).
  If an invariant further satisfies (3)~\( \badState[\TSys] \cap X = \emptyset  \) (\emph{safety}), we call it a \emph{safe inductive invariant}.
  A predicate \( p \in \PredSet \) is a \emph{(safe) inductive invariant} if so is \( \{ s \in \stateSet[\TSys] \mid s \models p \} \).
  \qed
\end{definition}

\begin{example}
  Recall the transition system \( \TSys_0 \) in \cref{eg:transition-system}.
  An inductive invariant is \( p(x) :\Leftrightarrow (x \ge 0) \).
  There may be more than one inductive invariant for a system.
  For example, \( p'(x) :\Leftrightarrow (x > -2) \) is another inductive invariant for \( \TSys_0 \).
  \qed
\end{example}

\subsection{Lagrangian for CEGAR}\label{sec:cegar}
\emph{Counter-Example Guided Abstraction Refinement} (known as \emph{CEGAR}) is a famous technique for the safety verification problem~\cite{DBLP:conf/cav/ClarkeGJLV00,DBLP:conf/pldi/BallMMR01}.
The CEGAR procedure manages a finite subset \( A \subseteq \PredSet \) of the predicate set, and it iterates the following two phases.
\begin{enumerate}
  \item[(a)] Abstract the target transition system \( \TSys \) by using the predicate set \( A \) and check its safety.
    If the abstract system is safe, then \( \TSys \) is safe; otherwise, the abstract transition system has a transition sequence to a bad state.
  \item[(b)] Check the feasibility of the abstract transition sequence to a bad state.
    If it is feasible, then \( \TSys \) is unsafe; otherwise, add predicates to \( A \) that suffice to show the infeasibility of the abstract transition sequence.
\end{enumerate}

The abstract transition system for a predicate set \( A \subseteq \PredSet \) is given as follows.
The states \( s,s' \in \stateSet[\TSys] \) are \emph{indistinguishable by \( A \)}, written \( s \stackrel{A}{\approx} s' \), if \( (s \models p) \Leftrightarrow (s' \models p) \) for every \( p \in A \).
A state of the abstract transition system is an equivalence class \( [s]_{A} := \{\, s' \in \stateSet[\TSys] \mid s \stackrel{A}{\approx} s' \,\} \) of \( \stackrel{A}{\approx} \).
The abstract transition system has a transition from \( [s_1]_A \) to \( [s_2]_A \) if there exists a transition from a state in \( [s_1]_A \) to a state in \( [s_2]_A \) (\ie~\( [s_1]_A \ni s_1' \trans[\TSys] s_2' \in [s_2]_A \) for some \( s_1' \) and \( s_2' \)).
An abstract state \( [s]_A \) is an initial state (resp.~a bad state) if it contains an initial state (resp.~a bad state).
We write \( \TSys^A = (\stateSet[\TSys]^A, \initState[\TSys]^A, \trans[\TSys]^A, \badState[\TSys]^A) \) for the abstract transition system.

CEGAR can be seen as an instance of the Lagrangian-based primal-dual method.
Let
\begin{equation*}
  X \::=\: \stateSet[\TSys]^*
  \quad\mbox{and}\quad
  Y \::=\: \finitepowerset(\PredSet)
\end{equation*}
where \( \stateSet[\TSys]^* \) is the set of all finite sequences over \( \stateSet[\TSys] \) and \( \finitepowerset(\PredSet) \) is the set of all finite subsets of \( \PredSet \).
We regard \( Y = \finitepowerset(\PredSet) \) as a poset ordered by set-inclusion \( \subseteq \).
The Lagrangian is given by
\begin{equation*}
  \Lcegar((s_0s_1\dots s_n), A)
  \quad:=\quad
  \begin{cases}
    \:\: -1 \quad & \mbox{if \( \initState[\TSys]^A \ni [s_0]_A \trans[\TSys]^A [s_1]_A \trans[\TSys]^A \cdots \trans[\TSys]^A [s_n]_A \in \badState[\TSys]^A \)} \\
    \:\: 1 & \mbox{otherwise.}
  \end{cases}
\end{equation*}
An ``idealization'' \( \Lcegar' \) of \( \Lcegar \) is obtained by setting \( \PredSet \) to be \( \powerset(\stateSet[\TSys]) \).
\begin{proposition}\label{prop:safety:cegar:idealized}
  \( \Lcegar' \) enjoys strong duality, and its optimal value is \( 1 \) if and only if \( \TSys \) is safe.
\end{proposition}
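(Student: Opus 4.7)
The plan is to split the claim into its two cases (safe and unsafe), exhibit an explicit witness in each direction, and then deduce strong duality from weak duality plus the two-valued codomain.

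First I would handle the unsafe case. If $\TSys$ is unsafe there is a concrete counterexample $x^\ast = s_0 s_1 \cdots s_n$ with $s_0 \in \initState[\TSys]$, $s_n \in \badState[\TSys]$, and $s_i \trans[\TSys] s_{i+1}$ for each $i$. For any abstraction $A \in \finitepowerset(\powerset(\stateSet[\TSys]))$, each concrete transition lifts to $[s_i]_A \trans[\TSys]^A [s_{i+1}]_A$, and $[s_0]_A$ is initial while $[s_n]_A$ is bad in $\TSys^A$. Hence $\Lcegar'(x^\ast, A) = -1$ for every $A$, so $\sup_y \Lcegar'(x^\ast, y) = -1$ and therefore $\inf_x \sup_y \Lcegar'(x, y) = -1$. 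Weak duality (\cref{lem:lagrangian:weak-duality}) forces $\sup_y \inf_x \Lcegar'(x, y) = -1$ as well.

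Next I would handle the safe case. Let $R \subseteq \stateSet[\TSys]$ be the set of reachable states and take $A^\ast := \{R\} \in Y$. Since $\PredSet = \powerset(\stateSet[\TSys])$ in the idealization, $R$ is a legitimate predicate, with $s \models R \iff s \in R$. The equivalence classes of $\stackrel{A^\ast}{\approx}$ are then $R$ and $\stateSet[\TSys] \setminus R$. Initial states lie in $R$, so $\initState[\TSys]^{A^\ast} = \{R\}$; bad states lie outside $R$ by safety, so $\badState[\TSys]^{A^\ast} = \{\stateSet[\TSys] \setminus R\}$; and $R$ is closed under $\trans[\TSys]$, so $R \not\trans[\TSys]^{A^\ast} (\stateSet[\TSys] \setminus R)$. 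The only reachable abstract state is $R$, which is not bad, so no abstract counterexample exists. Therefore $\Lcegar'(x, A^\ast) = 1$ for every $x \in \stateSet[\TSys]^\ast$, giving $\inf_x \Lcegar'(x, A^\ast) = 1$ and hence $\sup_y \inf_x \Lcegar'(x, y) = 1$. Weak duality again propagates this to $\inf_x \sup_y \Lcegar'(x, y) = 1$ since the codomain is $\{-1, 1\}$.

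Combining the two cases yields strong duality and simultaneously characterizes the optimal value: it equals $1$ precisely when $\TSys$ is safe. The only mildly subtle point is the safe-case argument that the abstraction by $\{R\}$ has no abstract counterexample; the crucial ingredient is that the reachable set is forward-closed under $\trans[\TSys]$, which is immediate from the definition of reachability. Everything else is a direct unpacking of definitions together with one application of \cref{lem:lagrangian:weak-duality}.
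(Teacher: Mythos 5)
Your proof is correct and follows essentially the same route as the paper: in the unsafe case a concrete error trace witnesses $\sup_y \Lcegar'(x^\ast,y)=-1$, in the safe case the abstraction $\{R\}$ by the reachable set witnesses $\inf_x \Lcegar'(x,\{R\})=1$, and weak duality with the two-valued codomain then gives strong duality and the characterization of the optimal value. The paper states this more tersely; your write-up just makes explicit the details (lifting of concrete transitions, forward-closure of $R$) that the paper leaves implicit.
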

\begin{proof}
  If \( \TSys \) is unsafe, a concrete error trace \( \initState[\TSys] \ni s_0 \trans[\TSys] \dots \trans[\TSys] s_n \in \badState[\TSys] \) is an optimal choice of \( X \).
  Otherwise, the set \( R \) of reachable states is a safe inductive invariant and \( \{ R \} \) is optimal for $Y$.
\end{proof}

\begin{corollary}\label{cor:safety:cegar:soundness}
  If \( \sup_y \inf_x \Lcegar(x,y) = 1 \), then \( \TSys \) is safe.
  \qed
\end{corollary}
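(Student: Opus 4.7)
The plan is to derive the corollary from the strong-duality result \cref{prop:safety:cegar:idealized} for the idealized Lagrangian $\Lcegar'$, via a natural embedding $\eta\colon Y \to Y'$ from $Y = \finitepowerset(\PredSet)$ into the $Y' = \finitepowerset(\powerset(\stateSet[\TSys]))$ used by the idealization. Concretely, $\eta$ sends a finite predicate set $A$ to the finite family of denotations $\{\,\{s \in \stateSet[\TSys] \mid s \models p\}\mid p \in A\,\}$ induced by the satisfaction relation.

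The key observation is that the indistinguishability relation $\stackrel{A}{\approx}$---and therefore the abstract transition system $\TSys^A$, its initial and bad sets, and the value of $\Lcegar$---depends on $A$ only through $\eta(A)$, so $\Lcegar(x, A) = \Lcegar'(x, \eta(A))$ for every trace $x$. From the hypothesis $\sup_{y \in Y} \inf_x \Lcegar(x,y) = 1$ (together with the codomain being $\{-1, 1\}$) one extracts a witness $A \in Y$ achieving $\inf_x \Lcegar(x, A) = 1$; the preceding identity then yields $\inf_x \Lcegar'(x, \eta(A)) = 1$, hence $\sup_{y'} \inf_x \Lcegar'(x, y') = 1$. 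Applying \cref{prop:safety:cegar:idealized} concludes that $\TSys$ is safe.

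The proof is essentially a chase of definitions, so there is no real obstacle. The one thing that needs care is the compatibility claim $\Lcegar(x,A) = \Lcegar'(x, \eta(A))$, and even this is immediate once one notes that $\stackrel{A}{\approx}$ is defined entirely in terms of the denotations of the predicates in $A$, so replacing predicates by the subsets they denote changes nothing. A fully self-contained alternative bypasses the idealization: given $A$ with $\inf_x \Lcegar(x,A) = 1$, lift any putative concrete counterexample $s_0 \trans[\TSys] \cdots \trans[\TSys] s_n$ to the abstract trace $[s_0]_A \trans[\TSys]^A \cdots \trans[\TSys]^A [s_n]_A$ and derive a contradiction with $\Lcegar((s_0\cdots s_n),A)=1$.
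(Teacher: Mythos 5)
Your proof is correct and matches the paper's (implicit) argument: the corollary is stated without a proof body precisely because it follows from \cref{prop:safety:cegar:idealized} by viewing a finite predicate set as a finite family of state sets via denotations, which is exactly your embedding $\eta$, and the identity $\Lcegar(x,A)=\Lcegar'(x,\eta(A))$ holds since $\stackrel{A}{\approx}$ depends only on denotations. Your self-contained alternative (lifting a concrete error trace to an abstract one) is the standard soundness-of-abstraction argument and is equally valid.
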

If \( \inf_x \sup_y \Lcegar(x,y) = -1 \), then we know that \(Y\) is insufficient to prove the safety of \( \TSys \) but not to say that \( \TSys \) is unsafe.
However, given \( \alpha \) such that \( \sup_y \Lcegar(\alpha,y) = -1 \), we can examine it and may conclude that \( \TSys \) is unsafe.

\cref{cor:safety:cegar:soundness} ensures the soundness criterion.
The monotonicity criterion is trivially met on \( Y \).
We examine the primal and dual witness check problems.
\begin{itemize}
  \item The dual witness check asks to find \( \tau \) such that \( \Lcegar(\tau, A) < 0 \).
    This is equivalent to the safety verification of the abstract transition system \( \TSys^A \), and \( \tau \) is an error trace.
  \item The primal witness check asks to find \( A \) such that \( \Lcegar(\tau, A) > 0 \).
    This is the so-called abstraction refinement and has well-known procedures.
\end{itemize}
So \cref{alg:lagrangian:procedure} with \( \Lcegar \) is applicable to solve the safety verification problem.
\Cref{alg:CEGAR} describes the resulting procedure, which is the standard CEGAR procedure.

\begin{algorithm}[t]
  \caption{~~CEGAR in the standard description (left) and as a primal-dual method (right)}\label{alg:CEGAR}
  \begin{minipage}{.47\linewidth}
    \begin{algorithmic}[1]
      \State let $ A \leftarrow \emptyset $
      \While{$\textbf{true}$}
        \If {$ \TSys^A $ is safe}
          \State $ \mathbf{return}\;(\mathtt{safe}, A) $
        \EndIf
        \State let $ \tau $ be an abstract error trace in $\TSys^A$
        \If {feasibility of $ \tau $ is not refutable}
          \State $ \mathbf{return}\;(\mathtt{unknown}, \tau) $
        \EndIf
        \State let $ \alpha $ be an infeasibility witness for $\tau$
        \State $ A \leftarrow A \cup \alpha $
      \EndWhile
    \end{algorithmic}
  \end{minipage}
  \begin{minipage}{.47\linewidth}
    \begin{algorithmic}[1]
      \State let $ A \leftarrow \emptyset $
      \While{$\textbf{true}$}
        \If {$ \inf_{\tau} \Lcegar(\tau, A) \ge 0 $}
          \State $ \mathbf{return}\;(\mathtt{safe}, A) $
        \EndIf
        \State let $ \tau \in \{ \tau \in \stateSet[\TSys]^* \mid \Lcegar(\tau, A) \not\ge 0 \} $
        \If {$ \sup_\alpha \Lcegar(\vec{s}, \alpha) \le 0 $}
          \State $ \mathbf{return}\;(\mathtt{unknown}, \tau) $
        \EndIf
        \State let $ \alpha \in \{ \alpha \in Y \mid  \Lcegar(\tau, \alpha) \not\le 0 \} $
        \State $ A \leftarrow A \cup \alpha $
      \EndWhile
    \end{algorithmic}
  \end{minipage}
\end{algorithm}

\begin{remark}
  The stratified version of CEGAR~\cite{Jhala2006} can be understood in the Lagrangian framework as follows.
  In this setting, the set of predicates is stratified: \( \PredSet = \biguplus_{n \in \Nat} \PredSet_n \).
  This induces a stratification on \( Y =  \finitepowerset(\PredSet) \) as follows: \( A \in \finitepowerset(\PredSet) \) belongs to the \( n \)-th level where \( n \) is the minimum number such that \( A \subseteq \bigcup_{i = 0}^n \PredSet_i \).
  By \cref{thm:lagrangian:stratification}, the CEGAR procedure is guaranteed to terminate for instances with positive witnesses, as long as newly added predicates are always chosen from the smallest possible level.
  \qed 
\end{remark}

\subsection{Lagrangian for ICE}\label{sec:ice}
The \emph{ICE learning}~\cite{ice} is another approach to the safety verification problem.
It is described as an interaction between a \emph{teacher} and a \emph{learner}.
The teacher tells a finite information on the system, and the leaner generates a candidate of satefy proof that works for the provided information.
Then the teacher checks if the candidate is a proof against the whole system, and if not, tells an additional information on the system that explains why the candidate generated by the leaner is insufficient.
\Cref{eg:safety:ice} explains the interaction by an example.
\begin{example}\label{eg:safety:ice}
  We prove the safety of \( \TSys_0 \) in \cref{eg:transition-system} by ICE.
  Information passed from the teacher to the learner is a triple \( (I', \trans', B') \) of finite subsets of \( \initState[\TSys_0] \), \( \trans[\TSys_0] \) and \( \badState[\TSys_0] \), respectively.
  Let \( S'_0 = (I'_0, \trans[0]', B'_0) = (\emptyset, \emptyset, \emptyset) \) be the initial choice.
  The learner proposes an inductive invariant for the subsystem \( S'_0 \): the subsystem \( S'_0 \) does not have a bad state, so \( p_0(x) :\Leftrightarrow \top \) is an inductive invariant for the subsystem \( S'_0 \).
  Then the teacher checks if \( p_0 \) is an inductive invariant for \( \TSys_0 \), but it is obviously not.
  The teacher adds a bad state \( -3 \) to the subsystem, resulting in \( S'_1 = (\emptyset, \emptyset, \{ -3 \}) \).
  The leaner proposes an inductive invariant for the subsystem \( S'_1 \): the subsystem does not have an initial state, so \( p_1(x) :\Leftrightarrow \bot \) is an inductive invariant for the subsystem \( S'_1 \).
  The teacher then checks if \( p_1 \) is an inductive invariant for \( \TSys_0 \), but it is not.
  The teacher adds an initial state \( 0 \) to the subsystem: \( S'_2 = (\{ 0 \}, \emptyset, \{ -3 \}) \).
  The learner proposes an inductive invariant for the subsystem \( S'_2 \): one can choose \( p_2(x) :\Leftrightarrow (x \mathbin{\mathrm{mod}} 2 = 0) \), which separates the initial state \( 0 \) from the bad state \( -3 \).
  The teacher then checks if \( p_2 \) is an inductive invariant for \( \TSys_0 \), but it is not because \( p_2 \) is not closed under the transition relation.
  The teacher adds some examples of transitions, yielding \( S'_3 = (\{ 0 \}, \{ (0,1), (2,3), (8,9), (-4, -3) \}, \{ -3 \}) \).
  Now, the learner has a much narrower range of candidates: an inductive invariant \( p_3 \) for \( S'_3 \) must be true on \( 0,1 \) and must be false on \( -4, -3 \).
  A candidate is \( p_3(x) :\Leftrightarrow (x > -2) \).
  The teacher can ensure that \( p_3 \) is an inductive invariant for %
  \( \TSys_0 \).
  \qed
\end{example}

Then ICE-learning can be seen as an instance of the Lagrangian-based primal-dual method.
Let
\begin{equation*}
  X \::=\: \finitepowerset(\initState[\TSys]) \times \finitepowerset(\trans[\TSys]) \times \finitepowerset(\badState[\TSys])
  \quad\mbox{and}\quad
  Y \::=\: \PredSet.
\end{equation*}
Intuitively, \( X \) tells us partial information on the transition system.
The Lagrangian is given by
\begin{equation*}
  \Lice(S', p) =
  \begin{cases}
    \:\: 1 \quad & \mbox{\( p \) is an inductive invariant for the subsystem \( S' \) of \( \TSys \)} \\
    \:\: -1 & \mbox{otherwise.}
  \end{cases}
\end{equation*}
So \( \Lice(S', \varphi) \) is $1$ if and only if \( \varphi \) witnesses the safety of the subtransition system \( S' \).
The ``idealization'' \( \Lice' \) of \( \Lice \) is obtained by setting \( Y := \powerset(\stateSet[\TSys]) \).
\begin{proposition}
  \( \Lice' \) enjoys the strong duality, and its optimal value is \( 1 \) iff \( \TSys \) is safe.
\end{proposition}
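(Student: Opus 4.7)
The plan is to prove both claims (strong duality and the characterization of the optimal value) simultaneously by doing a case split on whether $\TSys$ is safe, and in each case exhibit a witness that makes one of the two quantifier orderings attain the extreme value, and then invoke weak duality (\cref{lem:lagrangian:weak-duality}) to pin down the other side.

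First I would handle the safe case. Suppose $\TSys$ is safe. Let $R \subseteq \stateSet[\TSys]$ be the set of reachable states. Then $R$ is a safe inductive invariant of $\TSys$, and hence it is also a safe inductive invariant of every finite subsystem $S' = (I',\trans',B') \in X$: initiation holds because $I' \subseteq \initState[\TSys] \subseteq R$; consecution holds because $\trans' \subseteq \trans[\TSys]$ and $R$ is closed under $\trans[\TSys]$; safety holds because $R \cap \badState[\TSys] = \emptyset$ and $B' \subseteq \badState[\TSys]$. Therefore $\Lice'(S', R) = 1$ for every $S'$, showing $\inf_{S'} \Lice'(S', R) = 1$ and hence $\sup_y \inf_x \Lice'(x,y) = 1$. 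Combined with weak duality this forces $\inf_x \sup_y \Lice'(x,y) = 1$ as well.

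Next I would handle the unsafe case. Suppose $\TSys$ is unsafe, witnessed by a sequence $\initState[\TSys] \ni s_0 \trans[\TSys] s_1 \trans[\TSys] \cdots \trans[\TSys] s_n \in \badState[\TSys]$. Take the finite subsystem
\[
S'_* \;:=\; \bigl(\{s_0\},\; \{(s_0,s_1),\ldots,(s_{n-1},s_n)\},\; \{s_n\}\bigr) \;\in\; X.
\]
I claim no $y \in \powerset(\stateSet[\TSys])$ is an inductive invariant for $S'_*$: by initiation $s_0 \in y$; by repeated consecution along the transitions of $S'_*$ we get $s_1,\ldots,s_n \in y$; but safety demands $s_n \notin y$, a contradiction. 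Hence $\sup_y \Lice'(S'_*, y) = -1$, so $\inf_x \sup_y \Lice'(x,y) = -1$, and by weak duality $\sup_y \inf_x \Lice'(x,y) = -1$ too.

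In both cases the primal and dual optima coincide, giving strong duality, and the common value is $1$ precisely when $\TSys$ is safe. The main thing to be careful about is the unsafe case: one has to exhibit a single finite subsystem that is already unverifiable, rather than argue indirectly, and the natural choice is the subsystem consisting only of the states and transitions on a concrete error trace. Apart from this, the argument is essentially the dual of the one used for $\Lcegar'$ in \cref{prop:safety:cegar:idealized}, with the roles of $X$ and $Y$ exchanged.
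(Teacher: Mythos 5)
Your proof is correct and follows essentially the same route as the paper's (much terser) proof: in the safe case, the set of reachable states is the optimal choice on the $Y$ side, and in the unsafe case, the finite subsystem carved out of a concrete error trace is the optimal choice on the $X$ side, with weak duality closing the gap in each case. You have simply filled in the details (verifying initiation/consecution/safety for $R$ on arbitrary finite subsystems, and the initiation-plus-consecution-versus-safety contradiction for the trace subsystem) that the paper leaves implicit.
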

\begin{proof}
  If \( \TSys \) is safe, an optimal choice of \( Y \) is the set of reachable states.
  If \( \TSys \) is unsafe, an optimal choice of \( X \) is an error trace.
\end{proof}
\begin{corollary}
  If \( \sup_y \inf_x \Lice(x,y) = 1 \), then \( \TSys \) is safe.
  \qed
\end{corollary}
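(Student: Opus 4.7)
The plan is to show directly that \( \sup_y \inf_x \Lice(x,y) = 1 \) yields a predicate \( p \in \PredSet \) that is a safe inductive invariant for the full system \( \TSys \), from which safety follows by definition. Since the codomain of \( \Lice \) is the two-element set \( \{-1,1\} \), the hypothesis that the supremum equals \( 1 \) means there actually exists \( p \in \PredSet \) with \( \inf_x \Lice(x,p) = 1 \). By the definition of \( \Lice \), this says that \( p \) is an inductive invariant for \emph{every} finite subsystem \( S' = (I', \trans[]', B') \) of \( \TSys \).

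Then I would upgrade this ``every finite subsystem'' statement to a safe inductive invariant for the whole system by testing against minimal subsystems, one for each verification condition. For initiation, given any \( s_0 \in \initState[\TSys] \), apply the invariant to \( S' = (\{s_0\}, \emptyset, \emptyset) \) to deduce \( s_0 \models p \). For consecution, given any transition \( s \trans[\TSys] s' \), apply it to \( S' = (\emptyset, \{(s,s')\}, \emptyset) \) to deduce that \( s \models p \) implies \( s' \models p \). For safety, given any \( s \in \badState[\TSys] \), apply it to \( S' = (\emptyset, \emptyset, \{s\}) \) to deduce \( s \not\models p \). Together these yield that \( \{s \in \stateSet[\TSys] \mid s \models p\} \) is a safe inductive invariant for \( \TSys \), which implies that no bad state is reachable, \ie~\( \TSys \) is safe.

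An alternative, more conceptual route is to reduce to the idealization: the map \( p \mapsto \{s \mid s \models p\} \) gives an order-preserving embedding of \( \PredSet \) into \( \powerset(\stateSet[\TSys]) \) under which \( \Lice(S',p) = \Lice'(S', \{s \mid s \models p\}) \). Hence \( \sup_y \inf_x \Lice(x,y) \le \sup_y \inf_x \Lice'(x,y) \), and by the previous proposition the right-hand side equals \( 1 \) iff \( \TSys \) is safe. Either route is straightforward; the only mildly delicate point is checking that a ``local'' witness, \ie~a predicate certified against arbitrary \emph{finite} subsystems, is genuinely a global safe inductive invariant. The argument above makes this explicit and is really the content of the \( \qed \) already attached to the corollary in the statement.
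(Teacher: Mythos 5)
Your proof is correct, but your primary argument takes a different (and more self-contained) route than the paper's. The paper gives the corollary no separate proof: it is meant to follow immediately from the preceding proposition about the idealization \( \Lice' \) (with \( Y' = \powerset(\stateSet[\TSys]) \)), combined with the observation that replacing a predicate \( p \in \PredSet \) by its satisfaction set \( \{ s \mid s \models p \} \) preserves the Lagrangian value, so \( \sup_y \inf_x \Lice(x,y) \le \sup_{y'} \inf_x \Lice'(x,y') \) and safety follows from the proposition --- this is exactly your ``alternative, more conceptual route'' (note that for this step you need only value preservation, not an order-embedding; injectivity is irrelevant and need not hold). Your main argument instead proves soundness directly: since the codomain is the two-point lattice, the supremum is attained, giving a predicate \( p \) with \( \inf_x \Lice(x,p) = 1 \), and testing it against the singleton subsystems \( (\{s_0\},\emptyset,\emptyset) \), \( (\emptyset,\{(s,s')\},\emptyset) \) and \( (\emptyset,\emptyset,\{s\}) \) yields initiation, consecution and safety for the full system, hence a safe inductive invariant of \( \TSys \). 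This is more elementary --- it never invokes the idealization or strong duality, only the soundness direction, and it exhibits the global invariant explicitly --- whereas the paper's route is a one-liner given the proposition and mirrors the CEGAR case. The one reading you rely on, namely that ``inductive invariant for the subsystem \( S' \)'' in the definition of \( \Lice \) includes the safety condition with respect to the subsystem's bad states, is the paper's intended one (``\( \Lice(S',\varphi) \) is \(1\) if and only if \( \varphi \) witnesses the safety of the subtransition system \( S' \)''), so your verification-condition checks go through.
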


The Lagrangian \(\Lice\) satisfies both soundness and monotonicity criteria.
Unlike in the CEGAR case, the \( X \) component has a join semilattice structure and \( \Lice \) is anti-monotone on \( X \).
We examine the primal and dual witness checks.
\begin{itemize}
  \item The dual witness check asks to find \( S' \) such that \( \Lice(S', p) = -1 \).
    This is the teacher.
  \item The primal witness check asks to find \( p \) such that \( \Lice(S', p) = 1 \).
    This is the learner.
\end{itemize}
So the variant of \cref{alg:lagrangian:procedure} that monotonically updates \( \alpha \) is applicable to this setting.
\Cref{alg:ICE} describes the resulting procedure (which, unlike \cref{alg:lagrangian:procedure}, starts from the dual side).

\begin{algorithm}[t]
  \caption{~~ICE in the standard description (left) and as a primal-dual method (right)
  }\label{alg:ICE}
  \begin{minipage}{.47\linewidth}
    \begin{algorithmic}[1]
      \State let $ S' \leftarrow (\emptyset, \emptyset, \emptyset) $
      \While{$\textbf{true}$}
        \If {Safety of $ S' $ is unprovable}
          \State $ \mathbf{return}\;(\mathtt{unknown}, S') $
        \EndIf
        \State let $ p $ be a safety proof for $S'$
        \If {$ p $ is a safety proof for $\TSys$}
          \State $ \mathbf{return}\;(\mathtt{safe}, p) $
        \EndIf
        \State let $ S'' $ be counterexamples for $p$
        \State $ S' \leftarrow S' \cup S'' $
      \EndWhile
    \end{algorithmic}
  \end{minipage}
  \begin{minipage}{.47\linewidth}
    \begin{algorithmic}[1]
      \State let $ S' \leftarrow (\emptyset, \emptyset, \emptyset) $
      \While{$\textbf{true}$}
        \If {$ \sup_p \Lice(S', p) = -1 $}
          \State $ \mathbf{return}\;(\mathtt{unknown}, S') $
        \EndIf
        \State let $ p \in \{ p \in Y \mid  \Lice(S', p) = 1 \} $
        \If {$ \inf_{S''} \Lice(S'', p) = 1 $}
          \State $ \mathbf{return}\;(\mathtt{safe}, p) $
        \EndIf
        \State let $ S'' \in \{ S'' \in X \mid \Lice(S'', p) = -1 \} $
        \State $ S' \leftarrow S' \cup S'' $
      \EndWhile
    \end{algorithmic}
  \end{minipage}
\end{algorithm}

\section{Lagrangian for Primal-Dual Houdini}\label{sec:houdini}
This section analyzes \emph{primal-dual Houdini}~\cite{Padon2022}, a procedure that motivated this paper, in terms of the Lagrange duality.
The Lagrangian for primal-dual Houdini yields monotone progress on both sides of the duality;
it is symmetric in the sense that the primal and dual problems have the same high-level structure and can be swapped, changing the sign of the Lagrangian;
and interestingly showing that the Lagrangian is well-defined requires a non-trivial lemma.

\subsection{Warmup: CEGAR for Cartesian Abstraction}
\label{sec:ccegar}

As a stepping stone towards developing a Lagrangian for primal-dual Houdini, we first consider a version of CEGAR that operates not on predicate abstraction as in \Cref{sec:cegar}, but on Cartesian abstraction~\cite{DBLP:conf/tacas/BallPR01}. Given a set of predicates \( P \), the \emph{Cartesian} or \emph{conjunctive} abstraction it induces considers inductive invariants that can be expressed as a conjunction of predicates from \( P \). This abstraction is coarser than the Boolean or predicate abstraction, which considers invariants that can be expressed as an arbitrary Boolean formula over predicates from \( P \) (i.e., using conjunction, disjunction, and negation). While in the case of predicate abstraction the abstract counterexample takes the form of an abstract trace (i.e., a sequence of transitions),
in the case of Cartesian abstraction the abstract counterexample can be thought of as a directed acyclic graph (DAG) of transitions. We abstract over the details of the DAG structure,
and focus just on the set of states that participate in the abstract counterexample.

As in \Cref{sec:safety}, let us fix a transition system \( \TSys \) (\Cref{def:transition-system}) and a predicate set \( \PredSet \) (\Cref{def:predicate-set-for-a-transition-system}), \ie~a set with a satisfaction relation \( ({\models}) \subseteq \stateSet[\TSys] \times \PredSet  \). To define a Lagrangian for CEGAR for Cartesian abstraction, we let

\begin{equation*}
  X = \finitepowerset(\stateSet[\TSys])
  \quad\mbox{and}\quad
  Y = \finitepowerset(\PredSet).
\end{equation*}

We now define the value of \( \Lccegar(x,y) \) according to whether there is a conjunctive inductive invariant over predicates from \( y \) that proves that \( \proj{\TSys}{x} \) is safe, where \( \proj{\TSys}{x} \) is the transition system \(\TSys\) \emph{restricted} to states from \(x\), i.e.,
\( \stateSet[\proj{\TSys}{x}] = x \),
\( \initState[\proj{\TSys}{x}] = \initState[\TSys] \cap x \),
\( \trans[\proj{\TSys}{x}] = \trans[\TSys] \cap\, (x \times x) \), and
\( \badState[\proj{\TSys}{x}] = \badState[\TSys] \cap x \).
We define the Lagrangian \( \Lccegar \colon X \times Y \longrightarrow \{ -1, +1 \} \).
\begin{equation}
  \Lccegar(x,y) = \begin{cases}
    -1 & \mbox{if no \( \vartheta \subseteq y \) is a safe inductive invariant of \( \proj{\TSys}{x} \)} \\
    +1 & \mbox{otherwise}
  \end{cases}
  \label{eq:lagrangian-for-ccegar}
\end{equation}

Given a finite set of predicates \(y \in Y \), checking whether \(\inf_x \Lccegar(x,y) \geq 0\) (i.e., the dual witness check) amounts to checking whether \(\TSys\) can be proven safe using the Cartesian abstraction defined by \(y\). The well-known Houdini algorithm~\cite{Houdini,DBLP:journals/ipl/FlanaganJL01} is a procedure for answering this question, and it operates by a straightforward fixpoint computation to compute the strongest inductive invariant of \( \TSys \) that can be expressed as a conjunction of predicates from \( y \).
The algorithm also computes a set of (at most \(|y|\)) transitions that show why no larger subset of \(y\) is an inductive invariant for \( \TSys \);
or, in our terms, it produces a set of states \(x\) such that no larger subset of \(y\) is an inductive invariant for \( \proj{\TSys}{x} \).

Given a finite set of states \(x\), the primal witness check, i.e., checking if  \(\sup_y \Lccegar(x,y) \leq 0\), corresponds to finding the right predicates that can be used to prove safety of \(\proj{\TSys}{x}\) using Cartesian abstraction. That is, it corresponds to abstraction refinement, similarly to the primal witness check problem in \Cref{sec:cegar}.

\subsection{Induction Duality}

Let \( \TSys \) be a transition system
and \( \PredSet \) be a predicate set.
For a subset \( \vartheta \subseteq \PredSet \) of predicates, \( s \models \vartheta \) means \( \forall p \in \vartheta.\:s \models p \).
Similarly, for a subset \( \varpi \subseteq \stateSet[\TSys] \) of states, \( \varpi \models p \) means \( \forall s \in \varpi.\: s \models p \).
Finally, \( \varpi \models \vartheta \) means \( s \models p \) for every \( s \in \varpi \) and \( p \in \vartheta \).
Below we sometimes abuse the notation and use $p$ either for a single predicate or for a finite set of predicates.
We extend the definition of an inductive invariant (\Cref{def:predicate-set-for-a-transition-system}) to finite sets of predicates by interpreting the sets conjunctively. That is, a finite set of predicates \( \vartheta \subseteq \PredSet \) is an inductive invariant of if its conjunction, denoted \( \bigwedge \vartheta \), is an inductive invariant.

Primal-dual Houdini operates by analyzing \( \TSys \) together with another transition system \( \TSysI = (\stateSet[\TSysI], \initState[\TSysI], \trans[\TSysI], \badState[\TSysI]) \) which is defined over finite sets of predicates,
i.e., \( \stateSet[\TSysI] = \finitepowerset(\PredSet) \).
\( \TSysI \) is assumed to be an \emph{induction dual} of \( \TSys \), meaning that the following conditions are satisfied:%
\footnote{This presentation of induction duality is a slight adaptation of~\cite{Padon2022}. For example, the original paper assumes that the only element in \( \initState[\TSysI] \) is the empty set (representing \( \top \)), which (ID2) generalizes; it also assumes that \( \badState[\TSys] \) is represented by a single predicate \( p_0 \in \PredSet \), \ie~\( \badState[\TSys] = \{ s \in \stateSet[\TSys] \mid s \not\models p_0 \} \), and that \( \badState[\TSysI] = \{ y \in \finitepowerset(\PredSet) \mid p_0 \in y \} \), which (ID3) generalizes.
The original paper also assumes that transitions in \(\TSysI\) only grow the set of predicates, which we do not assume here.
  }
\begin{enumerate}
  \item[(ID1)] \( s_0 \models p \) for every \( s_0 \in \initState[\TSys] \) and \( p \in \stateSet[\TSysI] \), that is, the initial states satisfy all predicates; \label{it:id1}
  \item[(ID2)] \( s \models p_0 \) for every \( s \in \stateSet[\TSys] \) and \( p_0 \in \initState[\TSysI] \), that is, the initial predicates 
      are satisfied by all states; \label{it:id2}
  \item[(ID3)] \( s \not\models p \) for every \( s \in \badState[\TSys] \) and \( p \in \badState[\TSysI] \),
  that is, every bad state is excluded by every ``bad'' predicate set; and \label{it:id3}
  \item[(ID4)] if \( (s \trans[\TSys] s') \wedge (p \trans[\TSysI] p') \wedge (s \models p) \wedge (s \models p') \wedge (s' \models p) \), then \( (s' \models p') \), that is, transitions in \( \trans[\TSys] \) and in \( \trans[\TSysI] \) restrict each other in a particular way. \label{it:id4}
\end{enumerate}

The intuition behind this definition is that traces in \(\TSysI\) represents incremental induction proofs over \(\TSys\).
Specifically, (ID4) means that if we assume \(p\) is an invariant (of \(\TSys\)) and \( p \trans[\TSysI] p' \), then \(p'\) is also an invariant.
As the following lemma shows, a path in \( \TSysI \) implies an inductive invariant of \( \TSys \).

\begin{lemma}[\cite{Padon2022}]\label{lem:houdini:basic-lemma}
  Assume induction-dual transition systems \( \TSys \) and \( \TSysI \).
  If \( \initState[\TSysI] \ni p_0 \trans[\TSysI] p_1 \trans[\TSysI] \dots \trans[\TSysI] p_n \),
  then \( \bigcup_{i=0}^{n} p_i \) is an inductive invariant for \( \TSys \).
\end{lemma}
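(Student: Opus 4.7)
The plan is to verify the two defining conditions of an inductive invariant for $I := \bigcup_{i=0}^n p_i$: initiation, i.e., $s_0 \models I$ for every $s_0 \in \initState[\TSys]$; and consecution, i.e., whenever $s \models I$ and $s \trans[\TSys] s'$, also $s' \models I$. Condition (ID3) will play no role since we are not claiming safety, only inductiveness.

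Initiation is immediate from (ID1). Fix any $s_0 \in \initState[\TSys]$ and $q \in I$; then $q \in p_i$ for some $i$, and since $p_i \in \stateSet[\TSysI]$, (ID1) applied to $p_i$ yields $s_0 \models p_i$ and hence $s_0 \models q$. For consecution, fix $s \models I$ and $s \trans[\TSys] s'$ and prove by induction on $j \in \{0, \ldots, n\}$ that $s' \models p_j$. The base case $j=0$ uses (ID2): since $p_0 \in \initState[\TSysI]$, every state---and in particular $s'$---satisfies $p_0$. For the inductive step, assume $s' \models p_{j-1}$ and invoke (ID4) with $p := p_{j-1}$ and $p' := p_j$. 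The five premises are discharged as follows: the transition $p_{j-1} \trans[\TSysI] p_j$ is furnished by the given path in $\TSysI$; $s \trans[\TSys] s'$ is our hypothesis; $s \models p_{j-1}$ and $s \models p_j$ both follow from $s \models I$ since $p_{j-1} \cup p_j \subseteq I$; and $s' \models p_{j-1}$ is the induction hypothesis. Thus (ID4) delivers $s' \models p_j$, closing the induction. Taking $j$ through $n$ gives $s' \models p_j$ for every $j$, hence $s' \models I$.

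The main subtlety is recognizing that (ID4) carries five premises and that the one not directly furnished by the ambient setup---namely $s' \models p_{j-1}$---is exactly what the induction hypothesis supplies. Without the explicit induction on $j$, the fifth premise appears to be missing, which is why a naive ``pointwise'' argument would stall. Apart from arranging this alignment, the proof is a direct chaining of (ID1), (ID2), and (ID4) along the path $p_0 \trans[\TSysI] \cdots \trans[\TSysI] p_n$ in the dual system, and no further obstacle is anticipated.
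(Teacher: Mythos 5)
Your proof is correct and follows essentially the same route as the paper: initiation from (ID1), and consecution by induction on the position $j$ along the path, using (ID2) for the base case and (ID4) with the induction hypothesis $s' \models p_{j-1}$ for the step. The extra detail you add (unfolding $\models$ on unions for initiation and explicitly listing the premises of (ID4)) is just a more verbose rendering of the paper's argument.
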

\begin{proof}
  Initiation follows from (ID1).
  For consecution, suppose
  \( \initState[\TSysI] \ni p_0 \trans[\TSysI] p_1 \trans[\TSysI] \dots \trans[\TSysI] p_n \)
  and assume \( s \models p_0 \cup p_1 \cup \cdots \cup p_n \) and \( s \trans[\TSys] s' \).
  We prove \( s' \models p_i \) for every \( 0 \leq i \leq n \) by induction on \( i \).
  For \( i = 0 \), the claim follows from (ID2).
  For \( i > 0 \), since \( s \models p_{i-1} \cup p_i \) and \( s' \models p_{i-1} \),
  we conclude \( s' \models p_i \) from (ID4)---the duality condition on transitions.
\end{proof}
\begin{corollary}
  For \( \TSys \) and \( \TSysI \) that are induction dual,
  if some \( p \in \badState[\TSysI] \) is reachable in \( \TSysI \) then \( \TSys \) is safe.
\end{corollary}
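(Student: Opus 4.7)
The plan is to chain the previous lemma with the definition of a safe inductive invariant and property (ID3). Specifically, reachability of some \( p \in \badState[\TSysI] \) in \( \TSysI \) means that there exists a path \( \initState[\TSysI] \ni p_0 \trans[\TSysI] p_1 \trans[\TSysI] \cdots \trans[\TSysI] p_n \) with \( p_n = p \). By \cref{lem:houdini:basic-lemma}, the union \( \vartheta := \bigcup_{i=0}^{n} p_i \) is then an inductive invariant for \( \TSys \). It remains to establish the safety condition \( \badState[\TSys] \cap \{ s \in \stateSet[\TSys] \mid s \models \vartheta \} = \emptyset \).

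To show safety, I would invoke (ID3): since \( p = p_n \in \badState[\TSysI] \), no bad state \( s \in \badState[\TSys] \) can satisfy any predicate in \( p_n \), hence no bad state satisfies \( \vartheta \supseteq p_n \) (recall that satisfaction of a finite set of predicates is interpreted conjunctively). Combining initiation and consecution (already granted by \cref{lem:houdini:basic-lemma}) with this safety property, \( \vartheta \) is a safe inductive invariant for \( \TSys \), and therefore \( \TSys \) is safe by \cref{def:predicate-set-for-a-transition-system}.

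There is essentially no obstacle here; the corollary is a direct composition of the already-proved \cref{lem:houdini:basic-lemma} (handling initiation and consecution) with condition (ID3) (handling the safety clause). The only small care point is to observe that \( \vartheta \) being conjunctively interpreted means \( s \models \vartheta \) implies \( s \models p_n \), so that the non-satisfaction granted by (ID3) on the single component \( p_n \) propagates to non-satisfaction of the whole union. No additional hypothesis or case analysis is required.
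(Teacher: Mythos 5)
Your proof is correct and follows exactly the paper's route: the paper's own proof is just ``Follows from \cref{lem:houdini:basic-lemma} and (ID3),'' and you have spelled out precisely that composition (the lemma gives initiation and consecution for \( \vartheta = \bigcup_i p_i \), and (ID3) plus conjunctive satisfaction of \( \vartheta \supseteq p_n \) gives safety). One tiny imprecision: (ID3) gives \( s \not\models p_n \) read conjunctively (some predicate in \( p_n \) fails), not that a bad state fails \emph{every} predicate in \( p_n \); your final ``care point'' already uses only the correct weaker reading, so the argument stands.
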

\begin{proof}
  Follows from \Cref{lem:houdini:basic-lemma} and (ID3).
\end{proof}
Dually, a path \( \initState[\TSys] \ni s_0 \trans[\TSys] s_1 \trans[\TSys] \dots \trans[\TSys] s_n \) in \( \TSys \) induces an inductive invariant for \( \TSysI \), and reachability to a bad state in \( \TSys \) implies safety of the dual transition system \( \TSysI \).

\subsection{Lagrangian for Primal-Dual Houdini}
\label{sec:pdh:lagrangian}

We can understand primal-dual Houdini as a simultaneous CEGAR for Cartesian abstraction for both \(\TSys\) and \(\TSysI\),
where the dual witness check in CEGAR for \(\TSys\) also solves the primal witness check for \(\TSysI\),
and the dual witness check for \(\TSysI\) solves the primal witness check for \(\TSys\). The fact that this can be done requires a non-trivial lemma about the connection between \(\TSys\) and \(\TSysI\) (and their restrictions).

Formally, let
\begin{equation*}
  X = \finitepowerset(\stateSet[\TSys])
  \quad\mbox{and}\quad
  Y = \finitepowerset(\PredSet).
\end{equation*}
The Lagrangian \( \Lpdh \colon X \times Y \longrightarrow \{ -1, 0, 1 \} \) is defined by
\begin{equation}
  \Lpdh(x,y) := \begin{cases}
    -1 & \mbox{if no \( \vartheta \subseteq y \) is a safe inductive invariant of \( \proj{\TSys}{x} \)} \\
    1 & \mbox{if no \( \varpi \subseteq x \) is a safe inductive invariant of \( \proj{\TSysI}{y} \)} \\
    0 & \mbox{otherwise}
  \end{cases}
  \label{eq:lagrangian-for-houdini}
\end{equation}
where \( \proj{\TSys}{x} \) is the subsystem of \( \TSys \) consisting of states in \( x \) and
\(\proj{\TSysI}{y}\) is the subsystem of \( \TSys \) consisting of sets of predicates from \( y \) (i.e., more formally this can be written as
\(\proj{\TSysI}{\powerset(y)}\)).
We need a slightly advanced result of the dual transition system to understand this definition, as well as to confirm its well-definedness.
The following theorem is new to this paper, an abstraction of the progress proof in \citet{Padon2022}.

\begin{theorem}\label{thm:houdini:well-definedness}
  Assume dual transition systems \( \TSys \) and \( \TSysI \).
  Then at least one of the following holds:
  \begin{itemize}
    \item some \( \vartheta \subseteq \PredSet \) is a safe inductive invariant for \( \TSys \), or
    \item some \( \varpi \subseteq \stateSet[\TSys] \) is a safe inductive invariant for \( \TSysI \).
  \end{itemize}
\end{theorem}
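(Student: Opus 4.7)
My approach is a case analysis on whether \(\TSys\) or \(\TSysI\) has a reachable bad state, leveraging \Cref{lem:houdini:basic-lemma} and its dual. If \(\TSysI\) has a reachable bad state, say via a path \(\initState[\TSysI] \ni p_0 \trans[\TSysI] \cdots \trans[\TSysI] p_n \in \badState[\TSysI]\), then \Cref{lem:houdini:basic-lemma} already yields that \(\vartheta := \bigcup_{i=0}^{n} p_i\) is an inductive invariant of \(\TSys\). Since \(p_n \subseteq \vartheta\) and (ID3) forbids any \(s \in \badState[\TSys]\) from satisfying \(p_n\), the invariant \(\vartheta\) is also safe, so the first disjunct holds.

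Dually, if \(\TSys\) has a reachable bad state via \(\initState[\TSys] \ni s_0 \trans[\TSys] \cdots \trans[\TSys] s_n \in \badState[\TSys]\), I would take \(\varpi := \{s_0, \ldots, s_n\}\) and argue that the induced set \(\varpi^* := \{q \in \stateSet[\TSysI] \mid \varpi \models q\}\) is a safe inductive invariant of \(\TSysI\). Initiation uses (ID2) (every initial predicate-set is satisfied by every state), safety uses (ID3) applied to \(s_n\), and consecution---given \(q \in \varpi^*\) and \(q \trans[\TSysI] q'\)---is proved by showing \(s_i \models q'\) for every \(i\) by induction on \(i\): the base case uses (ID1), and the step applies (ID4) to the transition \(s_{i-1} \trans[\TSys] s_i\) with premises \(s_{i-1} \models q\), \(s_{i-1} \models q'\) (inductive hypothesis), and \(s_i \models q\) to conclude \(s_i \models q'\).

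In the remaining case both \(\TSys\) and \(\TSysI\) are safe. Let \(R\) be the reachable set of \(\TSys\) and \(R_I\) the reachable set of \(\TSysI\). Extending the two constructions above to infinite unions of traces, the same inductions show that \(R^* := \{q \in \stateSet[\TSysI] \mid R \models q\}\) is an inductive invariant of \(\TSysI\), and that \(\hat{R_I} := \bigcup_{p \in R_I} p \subseteq \PredSet\) is an inductive invariant of \(\TSys\). If \(R^* \cap \badState[\TSysI] = \emptyset\), then \(\varpi := R\) witnesses the second disjunct; if \(\{s \mid s \models \hat{R_I}\} \cap \badState[\TSys] = \emptyset\), then \(\vartheta := \hat{R_I}\) witnesses the first.

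The main obstacle is the remaining sub-case in which both intersections are nonempty, producing an \(s^\dag \in \badState[\TSys]\) that satisfies every \(q \in R_I\) and a \(q^\dag \in \badState[\TSysI]\) satisfied by every \(s \in R\). Here (ID3) gives \(s^\dag \not\models q^\dag\), but neither of the naive candidates above is safe as is. I would aim for a contradiction by a diagonal construction combining the two witnesses: the pair \((s^\dag, q^\dag)\) should let us either extend the \(\TSysI\)-reachability from some \(q \in R_I\) along an (ID4)-valid transition into \(q^\dag\), contradicting the assumed safety of \(\TSysI\); or, dually, extend \(\TSys\)-reachability to include \(s^\dag\), contradicting the assumed safety of \(\TSys\). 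Executing this diagonal step rigorously is the most delicate part, and I expect it to mirror the iterative progress argument of~\citet{Padon2022} that the authors cite as the source of this abstraction.
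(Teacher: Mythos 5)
Your opening moves are fine: the two reachable-bad-state cases are correct (the second is the exact dual of \Cref{lem:houdini:basic-lemma}), and your Claims that \(R\) is an inductive invariant for \( \TSysI \) and \( \hat{R_I}=\bigcup_{p\in R_I}p \) is one for \( \TSys \) both go through by the same ID1/ID2/ID4 inductions. The genuine gap is the final sub-case, which is exactly where the whole difficulty of the theorem sits, and the strategy you propose for it cannot work: that sub-case is \emph{satisfiable}, so there is no contradiction to be extracted from \((s^\dag,q^\dag)\). Concretely, take \( \stateSet[\TSys]=\{a,b,c\} \), \( \initState[\TSys]=\{a\} \), no transitions, \( \badState[\TSys]=\{c\} \), \( \PredSet=\{p_1,p_2\} \) with \( a\models p_1,p_2 \), \( b\models p_1 \) only, \( c\models p_1 \) only; let \( \stateSet[\TSysI]=\finitepowerset(\PredSet) \), \( \initState[\TSysI]=\{\emptyset\} \), no transitions, \( \badState[\TSysI]=\{\{p_2\}\} \). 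Conditions (ID1)--(ID4) all hold, both systems are safe, yet \( R=\{a\} \) satisfies the bad \( \TSysI \)-state \( \{p_2\} \) (so \(R\) is not a \emph{safe} invariant for \( \TSysI \)) and \( \hat{R_I}=\emptyset \) does not exclude \( c \) (so it is not a safe invariant for \( \TSys \)). The theorem still holds here --- \( \vartheta=\{p_2\} \) is a safe inductive invariant for \( \TSys \) --- but it is witnessed by an invariant that is strictly stronger than anything your reachability-based candidates can produce. So ``both intersections nonempty'' does not entail unsafety of either system, and no diagonal extension of reachability can close the case.

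The paper's proof avoids reachability altogether and instead works with the \emph{maximum} inductive invariant \( \vartheta\subseteq\PredSet \) of \( \TSys \) (it exists because initiation is free by (ID1) and inductive invariants, read conjunctively, are closed under union) together with \( \varpi:=\{s\mid s\models\vartheta\} \). The two key facts are the tight correspondence \( \varpi\models p \Leftrightarrow p\in\vartheta \) (maximality gives the left-to-right direction) and closure of \( \vartheta \) under \( \trans[\TSysI] \), proved with (ID4); from these, for each bad \( q\in\badState[\TSysI] \) one case-splits on whether \( q\subseteq\vartheta \): if yes, \( \vartheta \) is a safe invariant for \( \TSys \) (safety via (ID3)); if no, \( \varpi \) is an inductive invariant for \( \TSysI \) avoiding \( q \), and taking the union over all bad \( q \) in the latter case finishes the proof. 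To repair your argument you would have to replace \( R \) and \( \hat{R_I} \) by this maximal-invariant pair (or an equivalent fixpoint construction); as written, the crux of the theorem is not established.
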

\begin{proof}
  \ifappendix
  See \Cref{sec:appx:proof-houdini}.
  \else
  See~\cite[Appendix~A]{popl25extended}.
  \fi
\end{proof}

\Cref{thm:houdini:well-definedness} shows that the conditions for \( \Lpdh(x,y) = 1 \) and for \( \Lpdh(x,y) = -1 \) do not hold simultaneously.
So \( \Lpdh \) given by \cref{eq:lagrangian-for-houdini} is well-defined.
Both \( X \) and \( Y \) are join semilattices, and \( L \) is anti-monotone in \( X \) and monotone in \( Y \).
So the general method applies to this Lagrangian, yielding a procedure in \cref{alg:Houdini} that enjoys the progress property.
The resulting procedure is the core of primal-dual Houdini in \cite{Padon2022}.
\begin{proposition}
  \( \sup_y \inf_x \Lpdh(x,y) \ge 0 \) implies the satefy of \( \TSys \).
\end{proposition}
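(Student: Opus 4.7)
My plan is to derive safety of $\TSys$ directly from a concrete witness produced by the assumption, arguing by contradiction. The hypothesis $\sup_y \inf_x \Lpdh(x,y) \ge 0$ yields some $y^* \in \finitepowerset(\PredSet)$ such that $\Lpdh(x, y^*) \ge 0$ for every $x \in \finitepowerset(\stateSet[\TSys])$. I would then assume toward a contradiction that $\TSys$ is unsafe, which by \cref{def:transition-system} produces a finite counterexample trace $s_0 \trans[\TSys] s_1 \trans[\TSys] \dots \trans[\TSys] s_n$ with $s_0 \in \initState[\TSys]$ and $s_n \in \badState[\TSys]$.

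The crucial step is to package this trace into a finite $x$ that forces $\Lpdh(x, y^*) = -1$. I would take $x := \{s_0, s_1, \ldots, s_n\} \in \finitepowerset(\stateSet[\TSys])$. Because $\proj{\TSys}{x}$ restricts states, initial states, bad states, and transitions to $x$, the trace survives the restriction intact: $s_0 \in \initState[\TSys] \cap x = \initState[\proj{\TSys}{x}]$, $s_n \in \badState[\TSys] \cap x = \badState[\proj{\TSys}{x}]$, and each $(s_i, s_{i+1}) \in \trans[\TSys] \cap (x \times x) = \trans[\proj{\TSys}{x}]$. Hence $\proj{\TSys}{x}$ is itself unsafe.

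From the unsafety of $\proj{\TSys}{x}$, no subset of $\stateSet[\proj{\TSys}{x}] = x$ can be a safe inductive invariant of $\proj{\TSys}{x}$: any inductive invariant $S$ must contain $s_0$ by initiation and hence, by consecution along the preserved trace, must also contain $s_n \in \badState[\proj{\TSys}{x}]$, contradicting safety. In particular, no $\vartheta \subseteq y^*$ induces a safe inductive invariant via $\{s \in x \mid s \models \vartheta\}$. Therefore the first case of the definition of $\Lpdh$ in \cref{eq:lagrangian-for-houdini} applies, giving $\Lpdh(x, y^*) = -1$ and contradicting $\Lpdh(x, y^*) \ge 0$.

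I do not anticipate a substantial obstacle here. In particular, the argument avoids any use of \cref{thm:houdini:well-definedness} or of the induction-dual structure on $\TSysI$, because the first clause of the definition of $\Lpdh$ already pins the value to $-1$ once we exhibit a finite $x$ whose restriction is unsafe---the second clause cannot simultaneously hold by well-definedness, but we do not need to check this explicitly. The only care required is purely notational: verifying that finite restriction of $\TSys$ to the set of states visited by a trace preserves that trace, which is immediate from the definition of $\proj{\TSys}{x}$.
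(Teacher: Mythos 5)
Your proof is correct and follows essentially the same route as the paper: contrapositively, an error trace $s_0 \trans[\TSys] \cdots \trans[\TSys] s_n$ is packaged into $x = \{s_0,\dots,s_n\} \in \finitepowerset(\stateSet[\TSys])$, the trace survives restriction so $\proj{\TSys}{x}$ admits no safe inductive invariant, and hence $\Lpdh(x,y) = -1$ for every $y$, contradicting the hypothesis. Your detour through an explicit witness $y^*$ (justified since the codomain is finite) is a harmless repackaging of the paper's direct observation that $\sup_y \Lpdh(x,y) = -1$.
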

\begin{proof}
  If \( \TSys \) is unsafe, there exists an error trace \( \initState[\TSys] \ni s_0 \trans[\TSys] \dots \trans[\TSys] s_n \in \badState[\TSys] \).
  Letting \( x = \{ s_0,\dots,s_n \} \), there is no safe inductive invariant of \( \TSys{\upharpoonright_x} \), so \( \sup_y L(x,y) = -1 \).
\end{proof}
Even if \( \sup_y \inf_x \Lpdh(x,y) = -1 \), we cannot conclude that \( \TSys \) is unsafe: it just shows that \( \PredSet \) is not sufficient to prove the safety of \( \TSys \).
Similar to the case of CEGAR and ICE, analysis of \( S_G \in X \) satisfying \( \sup_y \Lpdh(S_G, y) \le 0 \) may provide an unsafety proof.

The Lagrangian \( \Lpdh \) is symmetric in the sense that swapping \( \TSys \) and \( \TSysI \) does not essentially change the Lagrangian.
That means, writing \( \Lpdh^{\TSys, \TSysI} \) for the Lagrangian for the dual transition systems \( (\TSys, \TSysI) \), we have \( \Lpdh^{\TSys, \TSysI}(x,y) = -\Lpdh^{\TSysI, \TSys}(y,x) \).
By this symmetry, the primal and dual witness check problems are essentially the same.
Primal-dual Houdini~\cite{Padon2022} solves these subproblems by using the famous Houdini procedure~\cite{Houdini,DBLP:journals/ipl/FlanaganJL01}.

\Cref{thm:houdini:well-definedness} also suggests a connection to CEGAR.
By \cref{thm:houdini:well-definedness}, if \( \Lpdh(x,y) = 1 \) (\ie~no \( \varpi \subseteq x \) is a safe inductive invariant of \( \proj{\TSysI}{y} \)), then some \( \vartheta \subseteq y \) is a safe inductive invariant of \( \TSys \).
So \( \Lpdh(x,y) \ge 0 \) if and only if some \( \vartheta \subseteq y \) is a safe inductive invariant of \( \TSys \).
Therefore finding \( P \) such that \( \Lpdh(S_G,P) \ge 0 \) is essentially the abstraction refinement problem, which asks to find a predicate set that proves the safety of an approximation of the transition system \( \TSys \).
In primal-dual Houdini (\Cref{alg:Houdini}), line~10 computes \( P \) that satisfies \( \Lpdh(x,y) \ge 1 \) instead of \( \Lpdh(x,y) \ge 0 \),
and this is a requirement stronger than the standard abstraction refinement.
This strategy for updating the predicate set is the characteristic feature of primal-dual Houdini.

\begin{algorithm}[t]
  \caption{~~Primal-Dual Houdini}\label{alg:Houdini}
  \begin{algorithmic}[1]
    \Require ${\textsc{Primal-Dual Houdini}}$
    \State let $ S_G \leftarrow \emptyset $
    \State let $ P_G \leftarrow \{ p \} $ for some $ p \in \badState[\TSysI] $
      \While{$\textbf{true}$}
        \If {$ \inf_{S} \Lpdh(S, P_G) \ge 0 $}
          \State $ \mathbf{return}\;(\mathtt{safe}, P_G) $
        \EndIf
        \State let $ S \in \{ S \in X \mid \Lpdh(S, P_G) = -1 \} $
        \State $ S_G \leftarrow S_G \cup S $
        \If {$ \sup_{P} \Lpdh(S_G, P) \le 0 $}
          \State $ \mathbf{return}\;(\mathtt{unknown}, S_G) $
        \EndIf
        \State let $ P \in \{ P \in Y \mid  \Lpdh(S_G, P) = 1 \} $
        \State $ P_G \leftarrow P_G \cup P $
      \EndWhile
    \end{algorithmic}
\end{algorithm}

\subsection{A More Abstract View}

Our exposition above was rather close to that of~\cite{Padon2022}.
Specifically, we used sets of predicates for \(\TSysI\).
The use of sets of predicates is consistent with the perspective of incremental induction and constructing invariants in a conjunctive domain.
However, the Lagrangian formulation suggests that a more abstract structure is enough to define primal-dual Houdini, as follows. %

In the more abstract form, we assume transition systems \(\TSys\) and \(\TSysI\),
but do not assume anything on the state space \(\stateSet[\TSysI]\), i.e., we do not assume it consists of predicates or sets of predicates.
The only assumption is that there is a binary relation between
\(\stateSet[\TSys]\) and \(\stateSet[\TSysI]\), and that the two transition systems are induction duals, that is, conditions (ID1)--(ID4) hold. This is very similar to the induction-dual graphs of~\cite[Section 3.1]{Padon2022}.

Next, we assume arbitrary join semilattices \( X \) and \( Y \). That is, \( X \) is not
assumed to be finite sets of states, and \( Y \) is not assumed to be
finite sets of predicates.
Instead, we assume a \emph{restriction mapping},
which maps every element of \( X \) to a transition system \( \proj{\TSys}{x} \) that is a subsystem of \( \TSys \) in the following sense:
\( \stateSet[\proj{\TSys}{x}] \subseteq  \stateSet[\TSys] \),
\( \initState[\proj{\TSys}{x}] = \initState[\TSys] \cap \stateSet[\proj{\TSys}{x}]\),
\( \trans[\proj{\TSys}{x}]  = \trans[\TSys] \cap (\stateSet[\proj{\TSys}{x}] \times \stateSet[\proj{\TSys}{x}]) \), and
\( \badState[\proj{\TSys}{x}] = \badState[\TSys] \cap \stateSet[\proj{\TSys}{x}]\).
We further require that the restriction mapping is monotone,
i.e.,
if \( x \sqsubseteq_X x' \) then \( \stateSet[\proj{\TSys}{x}] \subseteq  \stateSet[\proj{\TSys}{x'}] \),
and that the restriction is \emph{unbounded} in the following sense:
\( \sup_x \stateSet[\proj{\TSys}{x}] = \stateSet[\TSys] \).
We assume a similar restriction mapping for $Y$ and $\TSysI$.

With this structure over the arbitrary sets $X$ and $Y$, we can define the Lagrangian by the following generalization of \cref{eq:lagrangian-for-houdini}:

\begin{equation}
  \Lpdh(x,y) = \begin{cases}
    -1 & \mbox{if no \( \vartheta \subseteq \stateSet[\proj{\TSysI}{y}] \) is a safe inductive invariant of \( \proj{\TSys}{x} \)} \\
    1 & \mbox{if no \( \varpi \subseteq \stateSet[\proj{\TSys}{x}] \) is a safe inductive invariant of \( \proj{\TSysI}{y} \)} \\
    0 & \mbox{otherwise}
  \end{cases}
  \label{eq:lagrangian-for-houdini-abstract}
\end{equation}

It is not too hard to see that \Cref{thm:houdini:well-definedness} generalizes to the setting \cref{eq:lagrangian-for-houdini-abstract} and that this Lagrangian is well-defined.
Note that the Lagrangian of \Cref{sec:pdh:lagrangian} is obtained from the more abstract one by letting
\( \stateSet[\TSysI] = \finitepowerset(\PredSet) \),
\( X = \finitepowerset(\stateSet[\TSys]) \),
\( Y = \finitepowerset(\PredSet) \),
and defining
\( \stateSet[\proj{\TSys}{x}] = x \), and
\( \stateSet[\proj{\TSysI}{y}] = \powerset(y) \)
(note the difference in the definition of the restriction mapping between \(X\) and \(Y\)).

\section{Lagrangians for Termination Verification}\label{sec:termination}
So far, we have focused on the safety verification problem.
This section discusses how our Lagrangian-based approach applies to termination verification, presenting ICE-based and CEGAR-based procedures.
Intuitively, the Lagrangians can be obtained by replacing predicates in the safety verification with ranking functions.
This idea works well for ICE, but for CEGAR, a problem arises with the monotonicity requirement.
From our perspective, the concept of \emph{disjunctive well-foundedness}~\cite{Podelski2004} is an idea to address the monotonicity requirement.

\subsection{Termination and Ranking Function}
Let \( \TSys = (\stateSet[\TSys], \initState[\TSys], \trans[\TSys]) \) be a transition system.
In this section, we omit the bad states \( \badState[\TSys] \) as this component is irrelevant to the verification problem in this section.
An \emph{infinite trace} is an infinite sequence \( s_0 s_1 s_2 \dots \in \stateSet[\TSys]^{\omega} \) of states such that \( s_0 \in \initState[\TSys] \) and \( s_i \trans[\TSys] s_{i+1} \) for every \( i \).
A transition system is \emph{terminating} if it has no infinite trace.

The termination of a transition system \( \TSys \) can be witnessed by a \emph{ranking function}.
In this paper, it is a function \( r \colon \stateSet[\TSys] \longrightarrow \kappa \) from states to an ordinal number \( \kappa \) such that \( \initState[\TSys] \ni s_0 \trans[\TSys]^* s \trans[\TSys] s' \) implies \( r(s) > r(s') \) (readers who are not familiar with ordinal numbers may set \( \kappa = \Nat \)).
So \( r \) must decrease for every reachable transition (but a transition \( s \trans[\TSys] s' \) from an unreachable state \( s \) does not require anything on \( r \)).
The existence of a ranking function is a sound and complete criterion for termination.

\subsection{ICE for Termination}
It is straightforward to modify the Lagrangian for the safety verification in \cref{sec:ice} to handle the termination verification.
Let \( \RankingFunc \subseteq (\stateSet[\TSys] \to \Nat) \) be a set of candidates for ranking functions, and let
\begin{equation*}
  X \::=\: \finitepowerset(\initState[\TSys]) \times \finitepowerset(\trans[\TSys])
  \quad\mbox{and}\quad
  Y \::=\: \RankingFunc.
\end{equation*}
The \( X \)-component is the same as the \( X \)-component for ICE (but the bad-state information is omitted).
The \( Y \)-component is the set of candidate proofs, which in the case of termination verification is the set of candidate ranking functions.
(Recall that in the case of safety verification \( Y \) is the set of predicates, thus in termination verification ranking functions assume the role of predicates in safety verification.)
The Lagrangian \( L_{\mathrm{T\mbox{-}ICE}} \colon X \times Y \longrightarrow \{ -1,1 \} \) is defined by
\begin{equation*}
  L_{\mathrm{T\mbox{-}ICE}}(S', r) =
  \begin{cases}
    \:\: 1 \quad & \mbox{\( r \) is a ranking function for the subsystem \( S' \) of \( \TSys \)} \\
    \:\: -1 & \mbox{otherwise.}
  \end{cases}
\end{equation*}
The \( X \)-component has an obvious join semilattice structure, and \( L_{\mathrm{T\mbox{-}ICE}} \) is anti-monotone on \( X \).
So the basic primal-dual procedure is applicable, provided that the primal and dual witness check problems are tractable.

We now discuss soundness.
An ``idealized'' version \( L'_{\mathrm{T\mbox{-}ICE}} \colon X \times Y' \longrightarrow \{-1,1\} \) of \( L_{\mathrm{T\mbox{-}ICE}} \) is given by setting \( X' := \powerset(\initState[\TSys]) \times \powerset(\trans[\TSys]) \) and \( Y' := (\stateSet[\TSys] \to \kappa) \) for a sufficiently large ordinal number \( \kappa \).\footnote{%
  One can choose \( \kappa \) as the minimum ordinal that has the same cardinality as \( \powerset(\stateSet[\TSys]) \).}
We define \( L'_{\mathrm{T\mbox{-}ICE}}(S', r) = 1 \) if and only if \( r \) is a ranking function
for the subsystem \( S' \) of \( \TSys \).
\begin{proposition}
  \( L'_{\mathrm{T\mbox{-}ICE}} \) enjoys strong duality, and \( \inf_x \sup_y L'_{\mathrm{T\mbox{-}ICE}}(x,y) = 1 \) iff\/ \( \TSys \) terminates.
\end{proposition}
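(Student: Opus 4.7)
The plan is to reduce both sides of the duality to the classical question of whether $\TSys$ admits a ranking function into $\kappa$, and then invoke the standard equivalence between existence of a ranking function and termination.

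First I would establish a monotonicity lemma: if $r \in Y'$ is a ranking function for $\TSys$ itself, then $r$ remains a ranking function for every subsystem $S'$ obtained by restricting $\initState[\TSys]$ and $\trans[\TSys]$. This is immediate because any reachable transition of $S'$ is also a reachable transition of $\TSys$, so the required decrease $r(s) > r(s')$ is inherited.

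Next, I would use this lemma to collapse both sides of the duality to the same existence statement. The dual value $\sup_y \inf_x L'_{\mathrm{T\mbox{-}ICE}}(x,y)$ equals $1$ exactly when some $r \in Y'$ is a ranking function for every subsystem $S'|_x$; instantiating at the maximal $x = (\initState[\TSys], \trans[\TSys]) \in X'$ forces $r$ to be a ranking function for $\TSys$, and the monotonicity lemma shows conversely that any ranking function for $\TSys$ witnesses the condition uniformly in $x$. The primal value $\inf_x \sup_y L'_{\mathrm{T\mbox{-}ICE}}(x,y)$ equals $1$ exactly when every subsystem admits some ranking function; taking the maximal $x$ again yields a ranking function for $\TSys$, which by monotonicity handles every smaller $x$. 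Thus both quantities equal $1$ iff $\TSys$ admits a ranking function into $\kappa$, yielding strong duality---in the complementary case where no such $r$ exists, weak duality (\cref{lem:lagrangian:weak-duality}) combined with the binary codomain $\{-1,1\}$ forces both sides to equal $-1$.

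Finally, I would invoke the classical characterization: $\TSys$ terminates iff it admits a ranking function into a sufficiently large ordinal. The reverse direction is immediate from well-foundedness of the ordinals, since any infinite trace would induce an infinite strictly decreasing sequence of ranks. For the forward direction, I would build $r$ by transfinite recursion on the converse of $\trans[\TSys]$ restricted to states reachable from $\initState[\TSys]$, setting $r(s) = \sup\{\, r(s')+1 \mid s \trans[\TSys] s',\; s' \text{ reachable}\,\}$, and extending arbitrarily on the unreachable part. The main subtlety---and essentially the only real obstacle---is verifying that the ordinals produced by this recursion fit into the chosen $\kappa$, which is exactly what the footnote addresses by taking $\kappa$ of cardinality at least $|\powerset(\stateSet[\TSys])|$.
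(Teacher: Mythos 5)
Your proof is correct and takes essentially the same route as the paper: the same sup-of-successor-ranks construction of a ranking function in the terminating case, the maximal element of \(X'\) (the full system) to force the value to \(-1\) in the non-terminating case, and the observation that a ranking function for \(\TSys\) remains one for every subsystem. You are in fact slightly more careful than the paper's proof, which defines \(r\) ``by induction on \(\trans[\TSys]\)'' without restricting to reachable states or spelling out why the ranks fit into \(\kappa\).
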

\begin{proof}
  If \( \TSys \) terminates, a ranking function \( r \) for \( \TSys \) is given by
  \begin{align*}
    r(s) &\quad=\quad (\mbox{The number of steps remaining before termination}) \\
    &\quad=\quad \sup\,\{ r(s') + 1 \mid s \trans[\TSys] s' \}.
  \end{align*}
  This is a definition by induction on \( \trans[\TSys] \).
  Then \( L'_{\mathrm{T\mbox{-}ICE}}(S', r) = 1 \) since \( S' \) is a subsystem of \( \TSys \).
  If \( \TSys \) is not terminating, \( L'_{\mathrm{T\mbox{-}ICE}}(\TSys, r) = -1 \) for every \( r \in Y' \).
\end{proof}

\begin{lemma}\label{lem:termination:ice:soundness}
  \( \inf_{S'} L_{\mathrm{T\mbox{-}ICE}}(S', r) = 1 \) implies the termination of\/ \( \TSys \).
  \qed
\end{lemma}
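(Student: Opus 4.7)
The plan is to show that the hypothesis forces \( r \) to be a ranking function for the whole system \( \TSys \), and then to conclude termination from the standard fact that an \( \Nat \)-valued ranking function precludes infinite traces.

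First I would unfold definitions. The assumption \( \inf_{S'} L_{\mathrm{T\mbox{-}ICE}}(S', r) = 1 \) says that for every finite subsystem \( S' = (I', \trans') \) with \( I' \in \finitepowerset(\initState[\TSys]) \) and \( \trans' \in \finitepowerset(\trans[\TSys]) \), the function \( r \) is a ranking function for \( S' \). I then need to show that \( r \) is a ranking function for the full system \( \TSys \), i.e., that whenever \( \initState[\TSys] \ni s_0 \trans[\TSys] s_1 \trans[\TSys] \cdots \trans[\TSys] s_n \) and \( s_n \trans[\TSys] s_{n+1} \), we have \( r(s_n) > r(s_{n+1}) \).

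The key step is a finite-witness construction. Given such a reachable transition, take
\[
  I' \;=\; \{ s_0 \}
  \qquad\text{and}\qquad
  \trans' \;=\; \{ (s_0,s_1),\, (s_1,s_2),\, \ldots,\, (s_{n-1},s_n),\, (s_n,s_{n+1}) \}.
\]
Both are finite, so \( S' = (I', \trans') \in X \), and by hypothesis \( L_{\mathrm{T\mbox{-}ICE}}(S', r) = 1 \), which means \( r \) is a ranking function for \( S' \). Within \( S' \) the state \( s_n \) is reachable from \( s_0 \in I' \) via \( \trans' \), and \( s_n \trans' s_{n+1} \), so by definition \( r(s_n) > r(s_{n+1}) \), as required.

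Finally I would close the argument by contradiction: suppose \( \TSys \) admits an infinite trace \( s_0 \trans[\TSys] s_1 \trans[\TSys] s_2 \trans[\TSys] \cdots \) with \( s_0 \in \initState[\TSys] \). By the previous step, \( r(s_i) > r(s_{i+1}) \) for every \( i \), giving an infinite strictly decreasing sequence in \( \Nat \), which is impossible. Hence \( \TSys \) terminates. I do not foresee a real obstacle here; the only thing to be careful about is that the candidate set \( \RankingFunc \subseteq (\stateSet[\TSys] \to \Nat) \) lands in \( \Nat \) (not in an arbitrary ordinal), so the well-foundedness argument in the last step is immediate, and the construction of the finite subsystem relies only on the fact that any single reachable transition is witnessed by a finite prefix of transitions.
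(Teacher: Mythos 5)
Your proof is correct and uses essentially the same argument as the paper: the paper proves the contrapositive (if \( r \) is not a ranking function for \( \TSys \), the violating reachable transition yields a finite subsystem \( S' = (\{s_0\}, \{(s_i,s_{i+1})\}) \) with \( L_{\mathrm{T\mbox{-}ICE}}(S',r) = -1 \)), which is exactly your finite-witness construction read in the other direction. Your additional explicit step that an \(\Nat\)-valued ranking function precludes infinite traces is left implicit in the paper (it is stated earlier in the section as the standard soundness of ranking functions), so there is no substantive difference.
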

\begin{proof}
  If \( r \) is not a ranking function of \( \TSys \), there exists a transition \( \initState[\TSys] \ni s_0 \trans[\TSys] \dots \trans[\TSys] s_n \trans[\TSys] s_{n+1} \) such that \( r(s_n) \not> r(s_{n+1}) \).
  Then the subsystem \( S' = (\{ s_0 \}, \{ (s_i,s_{i+1}) \mid 0 \le i \le n\}) \) satisfies \( L'_{\mathrm{T\mbox{-}ICE}}(S', r) = -1 \).
\end{proof}

We discuss the tractability of the primal and dual witness check.

The dual witness check is tractable because whether \( \inf_{S'} L_{\mathrm{T\mbox{-}ICE}}(S', r) = -1 \) is reducible to a safety verification problem.
Let \( \TSys^{(r)} \) be the transition system (with bad states) given by \( \stateSet[\TSys^{(r)}] := \stateSet[\TSys] \times \stateSet[\TSys] \), \( \initState[\TSys^{(r)}] := \{ (s, s') \mid s \in \initState[\TSys], s \trans[\TSys] s' \} \), \( ({\trans[\TSys^{(r)}]}) := \{ ((s, s'), (s', s'')) \mid s' \trans[\TSys] s'' \} \) and \( \badState[\TSys^{(r)}] := \{ (s, s') \mid r(s) \le r(s') \} \).
A state of \( \TSys^{(r)} \) is a pair \( (s, s') \) of a previous state \( s \) and a current state \( s' \), and a pair is bad if and only if it violates the ranking condition with respect to \( r \).
If the system \( \TSys^{(r)} \) is safe, \( r \) is a ranking function for \( \TSys \).
The safety verification of \( \TSys^{(r)} \) can be solved by procedures
such as those discussed in \cref{sec:safety,sec:houdini}.
If \( \TSys^{(r)} \) is unsafe, an error trace \( (s_0, s_1) (s_1, s_2) \dots (s_n, s_{n+1}) \in \stateSet[\TSys^{(r)}]^* \) induces \( S' = (\{ s_0 \}, \{ (s_0, s_1) (s_1, s_2) \dots (s_n, s_{n+1}) \}) \).

On the contrary, it is hard to check the primal witness, which asks to find a ranking function \( r \) that works well on the subsystem \( S' \).
Since \( S' \) is a terminating finite transition system, a ranking function can in theory be given by mapping each point to the number of remaining steps, but this would not be expected to behave well for points outside of \( S' \).
To find generalized ranking functions, methods have been proposed that use template-based synthesis~\cite{Unno2021} as well as machine learning techniques, such as decision tree learning~\cite{Kura2021} and support vector machines~\cite{Li2020}.

\subsection{CEGAR for Termination}\label{sec:termination:dwf}
The Lagrangian for CEGAR has a semilattice structure on \( Y \) and \( \Lcegar \) is monotone on \( Y \).
In the context of termination verification, \( Y \) consists of (sets of) candidate ranking functions, but it is not straightforward to introduce the join semilattice structure to the \( Y \)-component.

A \emph{disjunctively well-founded relation}~\cite{Podelski2004} is a concept to address this issue: a relation \( \succ \) is \emph{disjunctively well-founded} if it is a finite union of well-founded relations.
For a finite set \( R \in \finitepowerset(\RankingFunc) \) of candidate ranking functions, the relation \( \succ_R \) defined by \( (s \succ_R s') :\Leftrightarrow \exists r \in R. r(s) > r(s') \) is a disjunctively well-founded relation.
The set \( \finitepowerset(\RankingFunc) \) has an obvious lattice structure, and \( R \subseteq R' \in \finitepowerset(\RankingFunc) \) implies \( ({\succ_R}) \subseteq ({\succ_{R'}}) \).

Let
\(
  X \::=\: \{ s_0 s_1 \dots s_n \in \stateSet[\TSys] \mid \initState[\TSys] \ni s_0 \trans[\TSys] \dots \trans[\TSys] s_n \}
\),
\( Y \::=\: \finitepowerset(\RankingFunc) \),
and
\begin{equation*}
  L_{\mathrm{T\mbox{-}CEGAR}}(s_0 \dots s_n, R) =
  \begin{cases}
    \:\: 1 \quad & \mbox{\( \forall i. \forall j. (i < j) \Rightarrow (s_i \succ_R s_j) \)} \\
    \:\: -1 & \mbox{otherwise.}
  \end{cases}
\end{equation*}
Note that \( \succ_R \) must satisfy \( s_i \succ_R s_j \) for every \( i < j \), including the case that \( i+1 \neq j \), \ie~\( s_j \) is not the immediate successor of \( s_i \), unlike the condition for ranking functions.
Then \( Y \) is a join semilattice and \( L_{\mathrm{T\mbox{-}CEGAR}} \) is monotone on \( Y \).
When \( \sup_R \inf_\tau L_{\mathrm{T\mbox{-}CEGAR}}(\tau, R) = 1 \), then \( \TSys \) is terminating by the following theorem.
\begin{theorem}[{\citet[Theorem~1]{Podelski2004}}]
  A transition system \( \TSys \) is terminating if there exists a disjunctively well-founded relation \( \succ \) such that \( \initState[\TSys] \ni s_0 \trans[\TSys]^* s \trans[\TSys]^+ s' \) implies \( s \succ s' \).
  \qed
\end{theorem}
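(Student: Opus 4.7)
The plan is to proceed by contradiction using the classical Ramsey-theoretic argument that underlies Podelski and Rybalchenko's original proof. Assume $\TSys$ is not terminating. Then there exists an infinite trace $s_0 \trans[\TSys] s_1 \trans[\TSys] s_2 \trans[\TSys] \cdots$ with $s_0 \in \initState[\TSys]$. The goal is to derive from this infinite trace an infinite descending chain in one of the finitely many well-founded components of $\succ$, which is the desired contradiction.

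First I would exploit the hypothesis on $\succ$: for every $i < j$, we have $\initState[\TSys] \ni s_0 \trans[\TSys]^{*} s_i \trans[\TSys]^{+} s_j$, so $s_i \succ s_j$. Since $\succ$ is disjunctively well-founded, write $\succ \;=\; \succ_1 \cup \succ_2 \cup \cdots \cup \succ_k$ with each $\succ_\ell$ well-founded. For each unordered pair $\{i,j\}$ with $i<j$, pick any index $c(i,j) \in \{1,\dots,k\}$ such that $s_i \succ_{c(i,j)} s_j$. This defines a $k$-coloring of the edges of the complete graph on the vertex set $\Nat$.

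Next I would invoke the infinite Ramsey theorem: any finite-color edge coloring of the complete graph on $\Nat$ admits an infinite monochromatic clique. Hence there exists an infinite subsequence of indices $i_1 < i_2 < i_3 < \cdots$ and a single color $\ell \in \{1,\dots,k\}$ such that $c(i_m, i_n) = \ell$ for all $m < n$. In particular, $s_{i_1} \succ_\ell s_{i_2} \succ_\ell s_{i_3} \succ_\ell \cdots$ is an infinite descending chain in $\succ_\ell$, contradicting the well-foundedness of $\succ_\ell$. This yields the desired contradiction and hence termination of $\TSys$.

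The main obstacle is conceptual rather than technical: one must notice that the hypothesis gives $s_i \succ s_j$ not merely for consecutive pairs $j = i+1$, but for \emph{all} pairs $i < j$ along the trace (because $\trans[\TSys]^{+}$ appears on the right). Without this, a single $\succ_\ell$ need not contain a descending chain; the monochromatic Ramsey clique crucially relies on the totality of the pair constraints. Once this is observed, the rest is a direct application of the infinite Ramsey theorem and the well-foundedness of each $\succ_\ell$. No additional machinery beyond Ramsey's theorem is needed, so the proof is short modulo this classical tool.
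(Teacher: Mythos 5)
Your proof is correct and follows essentially the same route as the source the paper cites for this result: the paper does not reprove it but defers to \citet[Theorem~1]{Podelski2004}, whose argument is exactly the Ramsey-theoretic one you give (color each pair \( \{i,j\} \) along a hypothetical infinite trace by a well-founded component witnessing \( s_i \succ s_j \), extract an infinite monochromatic clique, and contradict well-foundedness of that component). Your emphasis that the hypothesis yields \( s_i \succ s_j \) for \emph{all} \( i<j \), not just consecutive indices, is precisely the point that makes the argument go through.
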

\begin{corollary}
  If \( \sup_y \inf_x L_{\mathrm{T\mbox{-}CEGAR}}(x,y) = 1 \), then \( \TSys \) is terminating.
  \qed
\end{corollary}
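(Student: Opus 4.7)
The plan is to reduce the claim to a direct application of Podelski and Rybalchenko's theorem (just stated above), with the only real work being to unfold the definition of $\sup_y \inf_x L_{\mathrm{T\mbox{-}CEGAR}}$ into a form that matches the hypothesis of that theorem.

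First, I would unfold the assumption. Since the codomain is $\{-1,1\}$, $\sup_y \inf_x L_{\mathrm{T\mbox{-}CEGAR}}(x,y) = 1$ means there exists some $R \in \finitepowerset(\RankingFunc)$ with $\inf_x L_{\mathrm{T\mbox{-}CEGAR}}(x,R) = 1$, i.e., for every initial trace $s_0 s_1 \dots s_n \in X$ and every pair of indices $0 \le i < j \le n$, we have $s_i \succ_R s_j$.

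Next, I would check that this $R$ witnesses the hypothesis of the Podelski--Rybalchenko theorem with respect to $\succ_R$. Fix any $s, s'$ with $\initState[\TSys] \ni s_0 \trans[\TSys]^* s \trans[\TSys]^+ s'$. Spelling out $\trans[\TSys]^*$ and $\trans[\TSys]^+$ gives a concrete finite transition sequence $\initState[\TSys] \ni s_0 \trans[\TSys] s_1 \trans[\TSys] \dots \trans[\TSys] s_i \trans[\TSys] \dots \trans[\TSys] s_j$ with $s_i = s$, $s_j = s'$, and $i < j$. This sequence is an element of $X$, so by the previous paragraph $s_i \succ_R s_j$, i.e., $s \succ_R s'$, as required.

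Finally, $\succ_R = \bigcup_{r \in R} \{(s,s') \mid r(s) > r(s')\}$ is a finite union of well-founded relations (each $r \in \RankingFunc$ takes values in $\Nat$), so $\succ_R$ is disjunctively well-founded. Applying the Podelski--Rybalchenko theorem then yields that $\TSys$ is terminating.

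I do not anticipate any real obstacle: the definitions of $L_{\mathrm{T\mbox{-}CEGAR}}$ and $\succ_R$ are chosen precisely so that ``$L_{\mathrm{T\mbox{-}CEGAR}}(\tau, R) = 1$ on every initial trace'' is essentially a restatement of the Podelski--Rybalchenko hypothesis. The only subtlety is ensuring the quantification ``for all $i < j$'' (not just consecutive indices) in the definition of $L_{\mathrm{T\mbox{-}CEGAR}}$ lines up with the $\trans[\TSys]^*\!\circ\trans[\TSys]^+$ pattern in the theorem, which it does by taking the concrete witnessing trace as described.
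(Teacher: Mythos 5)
Your proof is correct and matches the paper's intent: the paper states this as an immediate corollary of the Podelski--Rybalchenko theorem (with no written proof), and your argument simply spells out that unfolding — extracting a witness $R$, turning any $\initState[\TSys] \ni s_0 \trans[\TSys]^* s \trans[\TSys]^+ s'$ into a concrete initial trace in $X$, and using that $\succ_R$ is disjunctively well-founded (which the paper already notes for finite $R \subseteq \RankingFunc$). No gaps.
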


The dual witness check, which asks to find \( \tau \) such that \( L_{\mathrm{T\mbox{-}ICE}}(\tau, R) = -1 \), is reducible to a safety verification problem.
The idea is similar to the above case, but the first component \( s \) of a pair \( (s, s') \) is now a past state that is not necessarily the previous state.
Formally, let \( \TSys^{(R)} \) be the transition system (with bad states) given by \( \stateSet[\TSys^{(R)}] := \stateSet[\TSys] \times \stateSet[\TSys] \), \( \initState[\TSys^{(R)}] := \{ (s, s') \mid s \in \initState[\TSys], s \trans[\TSys] s' \} \), \( ({\trans[\TSys^{(R)}]}) := \{ ((s_1, s_1'), (s_2, s_2')) \mid s_1' \trans[\TSys] s_2' \mbox{ and } s_2 = s_1 \vee s_2 = s_1' \} \) and \( \badState[\TSys^{(R)}] := \{ (s, s') \mid s \not\succ_R s' \} \).
For an error trace \( (s_0, s_0') (s_1, s_1') \dots (s_n, s_n') \in \stateSet[\TSys^{(R)}]^* \) for \( \TSys^{(R)} \), we have \( \initState[\TSys] \ni s_0 \trans[\TSys] s_0' \trans[\TSys] s_1' \trans[\TSys] \dots \trans[\TSys] s_n' \), \( s_n \not\succ_R s_n' \) and \( s_n \in \{ s_0, s_0', s_1', \dots, s_{n-1}' \} \).

The primal witness check is more tractable than \( L_{\mathrm{T\mbox{-}ICE}} \) since it suffices to find a ranking function that works for a single trace.

The basic primal-dual procedure applied to \( L_{\mathrm{T\mbox{-}CEGAR}} \) is (the core of) the procedure known as \textsc{Terminator}~\cite{Cook2005,Cook2006}.

\section{Lagrangian for Quantified Linear Arithmetic Solver}\label{sec:qlra}
This section deals with a problem that is closely related to verification, but different in nature from the safety/termination verification problems we have dealt with so far.
The problem is validity checking of first-order predicate logic formulas, and this section analyses the procedure given by \citet{Farzan2016} from the viewpoint of Lagrange duality.
Basically, the Lagrangian corresponds to Skolemization, but as discussed in \citet{Farzan2016}, Skolemization alone is not sufficient.
Interestingly, our framework of Lagrangian duality and basic primal-dual procedure clarifies the issue: as we shall see, it is the monotonicity criterion.

\subsection{First-Order Formula and Skolemization}\label{sec:qlra:skolem}
This section focuses on first-order formulas.
We consider the theory of linear rational arithmetic, \textsf{LRA}, which is the theory that \citet{Farzan2016} mainly dealt with.
A term is given by \( t ::= x \mid c \mid t_1 + t_2 \mid c \cdot t \) (where \( c \in \mathbb{Q} \)) and atomic predicates are \( t_1 < t_2 \) and \( t_1 \le t_2 \).  We assume that formulas are in prenex normal form, \ie{}, \( \varphi = \mathcal{Q}_1 x_1. \dots \mathcal{Q}_k x_k. \vartheta \) where \( \mathcal{Q}_i \in \{ \forall, \exists \} \) and \( \vartheta \) is quantifier-free.
Since \textsf{LRA} enjoys effective quantifier elimination (\ie, given a quantified formula \( \varphi(x,\vec{y}) \), one can effectively construct a quantifier-free formula \( \psi(\vec{y}) \) such that \( \psi(\vec{y}) \leftrightarrow \exists x. \varphi(x, \vec{y}) \)), the validity checking problem for first-order logic formulas over \textsf{LRA} is decidable.

\emph{Skolemization} is a satisfiability-preserving translation of first-order logic formulas, introducing new functional symbols.
Consider a quantified formula:
\begin{equation}
  \forall a \in \Rational. \exists b \in \Rational. \forall c \in \Rational. \varphi(a,b,c).
  \label{eq:qlra:example-formula}
\end{equation}
If this formula is true, there exists a function \( f_b \colon \Rational \to \Rational \) that maps a value assigned to \( a \) to an appropriate value for \( b \), \ie, for every \( v \in \Rational \),
\begin{equation*}
  \forall c \in \Rational. \varphi(v, f_b(v), c),
\end{equation*}
and \( f_b \) is called a \emph{Skolem function} for \( b \).
By using the Skolem function, the validity of \cref{eq:qlra:example-formula} is reduced to the problem to find an appropriate function \( f_b \) such that
\begin{equation*}
  \forall a \in \Rational. \forall c \in \Rational. \varphi(a, f_b(a), c)
\end{equation*}
is valid, and this validity problem can be solved by SMT solvers (provided that \( f_b \) is describable as a term in \textsf{LRA}).
Conversely, if the formula in \cref{eq:qlra:example-formula} is false, its negation \( \exists a \in \Rational. \forall b \in \Rational. \exists c \in \Rational. \neg\varphi(a,b,c) \) is true, so there exist Skolem functions \( f_a \colon \mathbb{Z} \) and \( f_c \colon \mathbb{Z} \to \mathbb{Z} \) such that
\begin{equation*}
  \forall b \in \mathbb{Z}. \neg \varphi(f_a, b, f_c(b)).
\end{equation*}

The above observation yields the following Lagrangian:
letting
\begin{align*}
  X &= \mbox{(Skolem functions for \( a \) and \( c \))} = (\mathbb{Z} \times (\mathbb{Z} \to \mathbb{Z})) \\
  Y &= \mbox{(Skolem function for \( b \))} = (\mathbb{Z} \to \mathbb{Z}),
\end{align*}
the Lagrangian \( L \colon X \times Y \longrightarrow \{ -1,1 \} \) is given by
\begin{equation*}
  L((f_a, f_c), f_b) = 1
  \quad:\Leftrightarrow\quad
  \mbox{\( \varphi(f_a, f_b(f_a), f_c(f_b(f_a))) \) is true}.
\end{equation*}

The optimal value of the dual optimization problem is \( 1 \) if and only if there exists an appropriate Skolem function for \( b \).
To see this, note that
\begin{align*}
  \inf_{f_a,f_c} L((f_a, f_c), \beta) = -1
  \quad&\Leftrightarrow\quad
  \exists f_a \in \Rational. \exists f_c \in (\Rational \to \Rational). \varphi(f_a, \beta(f_a), f_c(\beta(f_a))) \mbox{ is false}
  \\
  &\Leftrightarrow\quad
  \exists a \in \Rational. \exists c \in \Rational. \varphi(a, \beta(a), c) \mbox{ is false},
\end{align*}
so, by negating the both sides,
\begin{equation*}
  \inf_{f_a,f_c} L((f_a, f_c), \beta) = 1
  \quad\Leftrightarrow\quad
  \forall a \in \Rational. \forall c \in \Rational. \varphi(a, \beta(a), c).
\end{equation*}
Hence, \( \beta \in Y \) witnessing \( \sup_{y \in Y} \inf_{x \in X} L(x,y) = 1 \) is a Skolem function for \( b \) that witnesses the truth of the formula in \cref{eq:qlra:example-formula}.
Dually, a witness \( \alpha \in X \) of \( \inf_{x \in X} \sup_{y \in Y} L(x,y) = -1 \) is a pair of Skolem functions for \( a \) and \( c \) that witnesses the falsity of the formula.

In the above setting, both \(X\) and \(Y\) are the sets of all possible Skolem functions in the semantic domains, so there is an appropriate choice for exactly one of \( X \) and \( Y \).
Therefore, we have the strong duality:
\begin{equation*}
  \sup_{f_b} \inf_{f_a,f_c} L((f_a,f_c), f_b)
  \quad=\quad
  \inf_{f_a,f_c} \sup_{f_b} L((f_a,f_c), f_b),
\end{equation*}
and the optimal value coincides with the truth of the formula.

This argument can be applied to arbitrary formulas.
Given a first-order predicate logic formula \(\psi\) with quantifiers, there exists a Lagrangian \( L_{\psi} \colon X_\psi \times Y_\psi \longrightarrow \{-1,1\} \), where \( X_\psi \) (resp.~\( Y_\psi \)) is the set of Skolem functions for universally (resp.~existentially) quantified variables in \( \psi \), such that \( L_\psi \) satisfies the strong duality and the optimal value of \( L_\psi \) is the truth of \( \psi \).

\subsection{Quantified LRA Solver by Farzan and Kincaid}
Very roughly, the procedure by \citet{Farzan2016} is an instance of the basic primal-dual procedure for Lagrangians in \cref{sec:qlra:skolem}.
However, we cannot directly apply the basic primal-dual procedure since the Lagrangian does not satisfy the monotonicicy criterion, \ie~the sets \( X \) and \( Y \) of Skolem functions are not poset (and the Lagrangian is not monotone).
Another, relatively minor issue is that the sets \( X \) and \( Y \) are too large, containing elements that have no finite representations.

\citet{Farzan2016} introduced the notion of \emph{strategy skeletons} to address the above issues.
Intuitively, a strategy skeleton is a variant of Skolem functions, albeit with the additional capacity to select multiple values, claiming that at least one of the selected values is appropriate.
The inclusion of the set of selected values is an order for strategy skeletons, and the Lagrangian is monotone (or antimonotone) with respect to this order.
\begin{definition}[Strategy Skeleton]
  The set of \emph{\textsc{SAT} strategy skeletons} is defined by the grammar
  \begin{equation*}
    \textstyle
    \pi \quad::=\quad
    \bullet \mid \forall y. \pi \mid \bigsqcup_{i=1,\dots,n} t_i.\pi_i
  \end{equation*}
  where \( t_i \) is a term of \textsf{LRA} and \( t_i \neq t_j \) if \( i \neq j \).
  For a formula $\psi$ in prenex normal form
  we write \( \vec{x} \vdash \psi \lhd \pi \) to denote that \( \pi \) is a \textsc{SAT} strategy skeleton for $\psi$
  and satisfies \( \mathrm{fv}(\pi) \subseteq \{\vec{x}\}\).
  Formally, it is the relation defined by the following rules:
  \begin{equation*}
    \dfrac{\mathit{QF}(\vartheta)}{
      \vec{x} \vdash \vartheta \lhd \bullet
    }
    \quad
    \dfrac{\vec{x}, y \vdash \psi \lhd \pi}{
      \vec{x} \vdash (\forall y.\psi) \lhd (\forall y.\pi)
    }
    \quad
    \dfrac{\vec{x} \vdash \psi \lhd \pi_i \mbox{ for every \( i \)} \qquad \forall i. \mathrm{fv}(t_i) \subseteq \{\vec{x}\}}{
      \vec{x} \vdash (\exists y.\psi) \lhd (\bigsqcup_i t_i.\pi_i)
    }
  \end{equation*}
  The strategy skeleton \( \bullet \) is for quantifier-free formulas, to which we have nothing to choose.
  The strategy skeleton \( \forall y.\pi \) is for universally-quantified formulas \( \forall y.\psi \): a \textsf{SAT} strategy skeleton does nothing on \( \forall y \), and \( \pi \) describes a way to choose values for \( \exists \)-quantified variables in \( \psi \).
  The strategy skeleton \( \bigsqcup_{i=1,\dots,n} t_i.\pi_i \) is for \( \exists y.\psi \), meaning that an appropriate choice for \( y \) should be found in \( \{t_1,\dots,t_n \} \).
  An \textsf{UNSAT} strategy skeleton can be defined similarly:
  \begin{gather*}
    \textstyle
    \varrho \quad::=\quad
    \bullet \mid \exists y. \varrho \mid \bigsqcap_{i = 1,\dots,n} t_i.\varrho_i
    \\
    \dfrac{\mathit{QF}(\vartheta)}{
      \vec{x} \vdash \bullet \rhd \vartheta
    }
    \quad
    \dfrac{\vec{x}, y \vdash \varrho \rhd \psi}{
      \vec{x} \vdash (\exists y. \varrho) \rhd (\exists y. \psi)
    }
    \quad
    \dfrac{\vec{x} \vdash \varrho_i \rhd \psi \mbox{ for every \( i \)} \qquad \forall i. \mathrm{fv}(t_i) \subseteq \{\vec{x}\}}{
      \vec{x} \vdash (\bigsqcap_i t_i. \varrho_i) \rhd (\forall y.\psi)
    }.
  \end{gather*}
  For a sentence \( \varphi \) in prenex normal form, let \( \SATSkeleton(\varphi) \) and \( \UNSATSkeleton(\varphi) \) be the set of \textsf{SAT} and \textsf{UNSAT} strategy sketelons for \( \varphi \) with no free variable, \ie~\( \SATSkeleton(\varphi) := \{ \pi \mid {}\vdash \varphi \lhd \pi \} \) and \( \UNSATSkeleton(\varphi) := \{ \varrho \mid {}\vdash \varrho \rhd \varphi \} \).
  \qed
\end{definition}

\begin{example}[{\cite{Farzan2016}}]\label{eg:qlra:example}
  Let
  \begin{equation*}
    \varphi
    \quad:=\quad
    \exists w. \forall x. \exists y. \forall z. \vartheta(w,x,y,z),
    \qquad \vartheta(w,x,y,z) \:\equiv\: ((y < 1 \vee 2w < y) \wedge (z < y \vee x < z)).
  \end{equation*}
  A \textsf{SAT} strategy skeleton is
  \begin{equation*}
    \pi
    \quad:=\quad
    0 \,.\, \forall x \,.\, \big((x \,.\, \forall z \,.\, \bullet) \sqcup (2x \,.\, \forall z \,.\, \bullet) \big).
  \end{equation*}
  This skeleton selects \( 0 \)
  for \( w \) and \( x \)
  or \( 2x \)
  for \( y \) depending on the branch of \( \sqcup \).
  \qed
\end{example}

The set \( \SATSkeleton(\varphi) \) is a preordered set by the order defined by the following rules:
\begin{equation*}
  \dfrac{\mathstrut}{\bullet \le \bullet}
  \quad
  \dfrac{\pi \le \pi'}{\forall y.\pi \le \forall y.\pi'}
  \quad
  \dfrac{\forall i. \exists j. t_i = t'_j \wedge \pi_i \le \pi'_j}{\bigsqcup_i t_i.\pi_i \le \bigsqcup_j t'_j.\pi'_j}
\end{equation*}
The join operation \( \cup \) is defined by \(\bullet \cup \bullet := \bullet\), \( (\forall y. \pi) \cup (\forall y. \pi') := \exists y. (\pi \cup \pi')
 \) and
\begin{equation*}
  \textstyle
  (t.\pi) \cup (\bigsqcup_{i \in I} t_i.\pi_i) :=
  \begin{cases}
    (t.\pi) \sqcup (\bigsqcup_{i \in I} t_i.\pi_i) & \quad\mbox{if \( t \neq t_i \) for every \( i \)} \\
    (t. (\pi \cup \pi_j)) \sqcup (\bigsqcup_{i \in (I \setminus \{j\})} t_i.\pi_i) & \quad\mbox{if \( t = t_j \)}.
  \end{cases}
\end{equation*}
Dually \( \UNSATSkeleton(\varphi) \) has a similar structure, obtained by replacing \( \sqcup \) with \( \sqcap \) (\eg~\( \varrho \le (\varrho \sqcap \varrho') \)).
We write the join in \( \UNSATSkeleton(\varphi) \) as \( \cup \), since it computes the union of candidates.

A pair \( (\varrho, \pi) \in \UNSATSkeleton(\varphi) \times \SATSkeleton(\varphi) \) of \textsf{SAT} and \textsf{UNSAT} skeletons for the same formula \( \varphi \) induces a quantifier-free formula \( \langle \varrho \,|\, \varphi \,|\, \pi \rangle \) defined by induction on \( \varphi \) as follows:
\begin{align*}
  \begin{array}{rclcl}
  \langle \bullet \,| & \hspace{-7pt} \vartheta \hspace{-7pt} & |\, \bullet \rangle
  &\quad:=\quad& \vartheta
  \\
  \langle \bigsqcap_i t_i.\varrho_i \,| & \hspace{-7pt} \forall x.\varphi \hspace{-7pt} & |\, \forall x.\pi \rangle
  &\quad:=\quad& \bigwedge_i \langle \varrho_i \,|\, \varphi[t_i/x] \,|\, \pi[t_i/x] \rangle
  \\
  \langle \exists x.\varrho \,| & \hspace{-7pt}\exists x.\varphi\hspace{-7pt} & |\, \bigsqcup_i t_i.\pi_i \rangle
  &\quad:=\quad& \bigvee_i \langle \varrho[t_i/x] \,|\, \varphi[t_i/x] \,|\, \pi_i \rangle.
  \end{array}
\end{align*}
Let \( L_{\mathrm{FK}} \colon \UNSATSkeleton(\varphi) \times \SATSkeleton(\varphi) \longrightarrow \{ -1,1 \} \) be the Lagrangian defined by
\begin{equation*}
  L_{\mathrm{FK}}(\varrho, \pi) = 1
  \quad:\Leftrightarrow\quad
  \langle \varrho \,|\, \varphi \,|\, \pi \rangle \mbox{ is true}.
\end{equation*}
This Lagrangian is written as \( L^\varphi_{\mathrm{FK}} \) if the formula \( \varphi \) should be clarified.
The procedure by \citet{Farzan2016} is an instance of the basic primal-dual procedure for \( L_{\mathrm{FK}} \).

\begin{example}[{\cite{Farzan2016}}]
  Recall the formulas \( \varphi \) and \( \vartheta \) and the \textsc{SAT} strategy skeleton \( \pi \) in \cref{eg:qlra:example}.
  An example of an \textsf{UNSAT} strategy skeleton is
  \begin{equation*}
    \varrho
    \quad:=\quad
    \exists w \,.\, (-1) \,.\, \exists y \,.\, \big((u_1 \,.\, \bullet) \sqcap (u_2 \,.\, \bullet) \big),
    \qquad
    u_1(w,y) = y,
    \quad
    u_2(w,y) = (w+y)/2
  \end{equation*}
  Then, the calculation of \( \langle \varrho \,|\, \varphi \,|\, \pi \rangle \) proceeds as
  \begin{align*}
    &
    \Big\langle \exists y. ((u_1(0, y) \,.\, \bullet) \sqcap (u_2(0,y) \,.\, \bullet))\,\Big|\, \exists y. \forall z. \vartheta(0, -1, y, z) \,\Big|\, ((-1) \,.\, \forall z \,.\, \bullet) \sqcup ((-2) \,.\, \forall z \,.\, \bullet) \Big\rangle
    \\
    =\: &
    \Big\langle ((u_1(0, -1) \,.\, \bullet) \sqcap (u_2(0,-1) \,.\, \bullet))\,\Big|\, \forall z. \vartheta(0, -1, -1, z) \,\Big|\, \forall z \,.\, \bullet \Big\rangle
    \\
    & \quad \vee \Big\langle ((u_1(0, -2) \,.\, \bullet) \sqcap (u_2(0,-2) \,.\, \bullet))\,\Big|\, \forall z. \vartheta(0, -1, -2, z) \,\Big|\, \forall z \,.\, \bullet \Big\rangle
    \\
    =\: &
    \Big(\vartheta\big(0, -1, -1, u_1(0, -1)\big) \ \wedge\  \vartheta\big(0, -1, -1, u_2(0, -1)\big) \Big)
    \\
    & \quad \vee \Big( \vartheta\big(0, -1, -2, u_1(0, -2)\big) \ \wedge\ \vartheta\big(0, -1, -2, u_2(0, -2)\big) \Big)
  \end{align*}
  so we have \( L_{\mathrm{FK}}(\varrho,\pi) = \langle \varrho \,|\, \varphi \,|\, \pi \rangle = \big(\bot \wedge \top\big) \vee \big(\bot \wedge \bot\big) = -1 \).
  \qed
\end{example}

\begin{lemma}
  \( L_{\mathrm{FK}} \colon \UNSATSkeleton(\varphi) \times \SATSkeleton(\varphi) \longrightarrow \{-1,1\} \) is anti-monotone on the first argument and monotone on the second argument.
\end{lemma}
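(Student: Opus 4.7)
The plan is to prove both statements simultaneously by structural induction on the formula $\varphi$, since the shapes of $\varrho \in \UNSATSkeleton(\varphi)$ and $\pi \in \SATSkeleton(\varphi)$ are synchronized with the quantifier prefix of $\varphi$. Before starting the induction, I would establish a small substitution lemma: if $\pi \le \pi'$ in $\SATSkeleton(\psi)$ then $\pi[t/x] \le \pi'[t/x]$, and analogously on the UNSAT side. This follows by routine induction on the derivation of $\pi \le \pi'$, since substitution only affects the terms $t_i$ labeling the $\sqcup$-branches and does not alter the skeleton's tree shape.

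I will focus on monotonicity in the second argument; anti-monotonicity in the first is completely dual. Fix $\varrho$ and assume $\pi \le \pi'$. In the base case $\varphi$ is quantifier-free, so $\pi = \pi' = \bullet$ and $\varrho = \bullet$, and both sides evaluate to $\varphi$. In the case $\varphi = \forall x.\psi$ we have $\varrho = \bigsqcap_i t_i.\varrho_i$, $\pi = \forall x.\pi_0$ and $\pi' = \forall x.\pi_0'$ with $\pi_0 \le \pi_0'$. By the substitution lemma $\pi_0[t_i/x] \le \pi_0'[t_i/x]$, and the induction hypothesis gives $\langle \varrho_i \,|\, \psi[t_i/x] \,|\, \pi_0[t_i/x] \rangle \Rightarrow \langle \varrho_i \,|\, \psi[t_i/x] \,|\, \pi_0'[t_i/x] \rangle$ for each $i$, so the conjunction defining $\langle \varrho \,|\, \varphi \,|\, \pi \rangle$ implies $\langle \varrho \,|\, \varphi \,|\, \pi' \rangle$.

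The more interesting case is $\varphi = \exists x.\psi$, where $\varrho = \exists x.\varrho_0$, $\pi = \bigsqcup_i t_i.\pi_i$, $\pi' = \bigsqcup_j t_j'.\pi_j'$, and $\pi \le \pi'$ unfolds to: for every $i$ there exists $j$ with $t_i = t_j'$ and $\pi_i \le \pi_j'$. If $\langle \varrho \,|\, \varphi \,|\, \pi \rangle$ is true then some disjunct $\langle \varrho_0[t_i/x] \,|\, \psi[t_i/x] \,|\, \pi_i \rangle$ holds; picking the witnessing $j$, the equation $t_i = t_j'$ rewrites this as $\langle \varrho_0[t_j'/x] \,|\, \psi[t_j'/x] \,|\, \pi_i \rangle$, and the induction hypothesis using $\pi_i \le \pi_j'$ then yields $\langle \varrho_0[t_j'/x] \,|\, \psi[t_j'/x] \,|\, \pi_j' \rangle$, which is one of the disjuncts of $\langle \varrho \,|\, \varphi \,|\, \pi' \rangle$.

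Anti-monotonicity in the first argument proceeds by the exact same template, swapping $\sqcup$/$\vee$ with $\sqcap$/$\wedge$ and reversing the direction of implication: in the $\forall$-case one uses the re-indexing furnished by the order on $\bigsqcap$ to match each conjunct of the evaluation of $\varrho$ with a conjunct of the evaluation of $\varrho'$, and in the $\exists$-case one propagates implication through each disjunct via the induction hypothesis together with the substitution lemma. The main obstacle, if any, is bureaucratic rather than mathematical: one must align the pair (monotonicity direction, quantifier case) with the two evaluation clauses so that the re-indexing granted by the order on $\sqcup$/$\sqcap$ is applied on the correct side of the implication. Once this bookkeeping is in place, each case reduces to a one-line calculation from the defining equations of $\langle \varrho \,|\, \varphi \,|\, \pi \rangle$.
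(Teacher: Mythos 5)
Your proof is correct and takes essentially the same route as the paper: the paper's own argument is just the informal observation that a larger skeleton contributes more \( \sqcup \)- (resp.\ \( \sqcap \)-) branches and hence more disjuncts (resp.\ conjuncts) in \( \langle \varrho \,|\, \varphi \,|\, \pi \rangle \), and your structural induction with the substitution lemma is the rigorous rendering of exactly that observation. No gaps.
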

\begin{proof}
  Intuitively, if \( \pi \le \pi' \), then \( \pi' \) has more components than \( \pi \) connected by \( \sqcup \).
  Then \( \langle \varrho | \varphi | \pi' \rangle \) has more components than \( \langle \varrho | \varphi | \pi \rangle \) connected by \( \vee \).
  So, the validity of the latter implies the validity of the former.
  The anti-monotonicity on \( \UNSATSkeleton(\varphi) \) is similar.
\end{proof}

Prior to the discussion of correctness, we discuss the decidability of the primal/dual witness check problems, following \citet{Farzan2016}.
We discuss the dual witness check here; the primal is essentially equivalent.
Assume a \textsf{SAT} strategy skeleton \( \pi \in \SATSkeleton(\varphi) \).
Then \( \inf_\varrho L^\varphi_{\mathrm{FK}}(\varrho, \pi) = 1 \) is reducible to the validity of the formula \( \varphi | \pi \rangle \) given as follows:
\begin{gather*}
  \textstyle
  \vartheta |\bullet \rangle := \vartheta
  \qquad
  (\forall x.\varphi) | \forall x.\pi \rangle := \forall x. (\varphi | \pi \rangle)
  \qquad
  (\exists x.\varphi) | \bigsqcup_i t_i.\pi_i \rangle := \bigvee_i (\varphi[t_i/x]) | \pi_i \rangle.
\end{gather*}
Since \( \varphi | \pi \rangle \) is a formula with no \( \exists \), its validity can be checked by an SMT solver.
\begin{example}\label{eg:qlra:smt-and-counterstrategy-extraction}
  Recall the formulas \( \varphi \) and \( \vartheta \) and a \textsf{SAT} strategy skeleton \( \pi \) in \cref{eg:qlra:example}.
  Then
  \begin{align*}
    \varphi|\pi\rangle
    \quad&=\quad
    \forall x. \big((\forall z_1. \vartheta(0,x,x,z_1)) \vee (\forall z_2. \vartheta(0,x,2x,z_2)) \big),
  \end{align*}
  which is invalid.
  For example, consider the assignment \( x = -1 \), \( z_1 = -1 \) and \( z_2 = -2 \).
  This assignment gives a \textsf{UNSAT} strategy skeleton \( \varrho := \forall w.(-1).\forall y. \big((-1).\bullet \,\sqcup\, (-2).\bullet\big) \) such that \( L_{\mathrm{FK}}(\varrho,\pi) = -1 \).
  For another \textsf{SAT} strategy skeleton \( \pi' := (-2).\forall x.(x+1).\forall z.\bullet \), we have
  \(
    \varphi|\pi'\rangle
    =
    \forall x. \forall z. \vartheta(-2, x, x+1, z)
  \),
  which is valid.
  \qed
\end{example}

\begin{lemma}\label{lem:qlra:smt}
  \( \inf_\varrho L^\varphi_{\mathrm{FK}}(\varrho, \pi) = 1 \) if and only if \( \varphi | \pi \rangle \) is valid.
\end{lemma}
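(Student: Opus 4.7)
The plan is to prove the lemma by structural induction on $\varphi$ (equivalently on $\pi$), strengthening it to a pointwise claim that covers open formulas. Concretely, I will show: for every $\vec{x} \vdash \varphi \lhd \pi$ and every assignment $\sigma$ to $\vec{x}$, the formula $\varphi | \pi \rangle$ is true under $\sigma$ if and only if, for every UNSAT strategy skeleton $\varrho$ with $\vec{x} \vdash \varrho \rhd \varphi$, the quantifier-free formula $\langle \varrho | \varphi | \pi \rangle$ is true under $\sigma$. The lemma then follows by universally quantifying $\sigma$: validity of $\varphi | \pi \rangle$ is truth under every $\sigma$, which by the strengthened claim is equivalent to $\langle \varrho | \varphi | \pi \rangle$ being true under every $\sigma$ for every $\varrho$, and for the sentence $\varphi$ of the lemma this coincides with $\inf_\varrho L^\varphi_{\mathrm{FK}}(\varrho,\pi) = 1$.

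The base case ($\varphi$ quantifier-free, $\pi = \varrho = \bullet$) is immediate since both sides reduce to $\vartheta$ evaluated under $\sigma$. For the universal case $\varphi = \forall y.\psi$ with $\pi = \forall y.\pi'$, we have $\varphi|\pi\rangle = \forall y.(\psi|\pi'\rangle)$, and UNSAT skeletons take the form $\bigsqcap_j t_j.\varrho_j$ yielding $\bigwedge_j \langle \varrho_j\,|\,\psi[t_j/y]\,|\,\pi'[t_j/y]\rangle$. Combining the induction hypothesis for $\psi$ with a routine substitution lemma (syntactic substitution of a term $t$ for $y$ corresponds to extending $\sigma$ with $y \mapsto t[\sigma]$), the equivalence reduces to showing that a property holding ``for every $r \in \Rational$'' is equivalent to its holding ``for every LRA term $t$'' after substituting for $y$. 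This is where the theory enters: every rational is a constant LRA term, so the two quantifications coincide; one inclusion is subsumption, the other is realized by taking $t$ to be the constant $r$.

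The main obstacle is the existential case $\varphi = \exists y.\psi$ with $\pi = \bigsqcup_i t_i.\pi_i$. Here $\varphi|\pi\rangle = \bigvee_i (\psi[t_i/y])|\pi_i\rangle$, while $\langle \exists y.\varrho'' | \varphi | \pi\rangle = \bigvee_i \langle \varrho''[t_i/y]\,|\,\psi[t_i/y]\,|\,\pi_i\rangle$, so the strengthened claim reduces to: some disjunct $\psi|\pi_i\rangle$ is true under $\sigma[y \mapsto t_i[\sigma]]$ iff, for every $\varrho''$, the disjunction over $i$ of $\langle \varrho''[t_i/y] | \psi[t_i/y] | \pi_i\rangle$ is true under $\sigma$. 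The forward direction uses the induction hypothesis at the successful branch, instantiating the quantified $\varrho''$ at $t_i$ via the substitution lemma. The reverse direction is the delicate step: by contrapositive, if no branch is uniformly successful, the induction hypothesis only supplies, for each $i$, a per-branch UNSAT sub-skeleton $\varrho''_i$ that falsifies the $i$-th conjunct at $\sigma[y \mapsto t_i[\sigma]]$; I must combine these into a \emph{single} $\varrho''$ that simultaneously falsifies every branch, so that $\exists y.\varrho''$ is a global falsifier. The key move is to take the join $\varrho'' := \bigcup_i \varrho''_i$ in the semilattice $\UNSATSkeleton(\psi)$; then $\varrho'' \ge \varrho''_i$ for each $i$, and by the (pointwise) anti-monotonicity of $L_{\mathrm{FK}}$ on its first argument, established just above, each $\langle \varrho''[t_i/y] | \psi[t_i/y] | \pi_i\rangle$ is still false under $\sigma$, contradicting the assumption. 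This step is precisely where strategy skeletons, rather than raw Skolem functions, earn their keep: the semilattice structure and anti-monotonicity enable the join-based construction that bare Skolem functions would not support.
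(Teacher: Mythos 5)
Your proposal is correct and follows essentially the same route as the paper's proof: structural induction on the skeleton, with the falsifying $\forall$-instance realized by a rational constant term, and per-branch \textsf{UNSAT} falsifiers at a $\bigsqcup$ combined via the semilattice join. The only differences are presentational: you strengthen to a pointwise claim over open formulas and assignments (the paper instead keeps everything closed by substituting closed terms during the induction and runs the two directions as separate inductions), and you make explicit the pointwise anti-monotonicity needed to justify the join step, which the paper's proof uses implicitly when it asserts that the combined skeleton falsifies both branches.
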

\begin{proof}
\ifappendix
See \Cref{sec:appx:proof-of-smt-correctness}.
\else
See~\cite[Appendix~B]{popl25extended}.
\fi
\end{proof}

\begin{theorem}
  For every closed formula \( \varphi \) over \textsf{LRA}, the Lagrangian \( L^\varphi_{\mathrm{FK}} \) enjoys the strong duality, and its optimal value coincides with the validity of \( \varphi \).
\end{theorem}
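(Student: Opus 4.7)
The plan is to derive strong duality from a stronger inductive claim---that every prenex \textsf{LRA} formula admits strategy skeletons whose induced quantifier-free formulas are logically equivalent to it---then combine with \cref{lem:qlra:smt} (and its dual) and weak duality. As a first step, I would establish the dual of \cref{lem:qlra:smt}: for every \( \varrho \in \UNSATSkeleton(\varphi) \), \( \sup_\pi L^\varphi_{\mathrm{FK}}(\varrho, \pi) = -1 \) iff a symmetrically defined formula \( \langle \varrho \,|\, \varphi \), obtained by instantiating the \( \forall \)-quantifiers of \( \varphi \) with the terms of \( \varrho \) while leaving the \( \exists \)-quantifiers in place, is unsatisfiable. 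This is a routine dualization of the proof of \cref{lem:qlra:smt}.

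The core inductive lemma would state: for every prenex formula \( \psi(\vec{x}) \) over \textsf{LRA}, there exist \( \pi \in \SATSkeleton(\psi) \) and \( \varrho \in \UNSATSkeleton(\psi) \) with free variables in \( \vec{x} \) such that \( \psi \,|\, \pi \rangle \equiv \psi \) and \( \langle \varrho \,|\, \psi \equiv \psi \) as formulas in \( \vec{x} \). I would prove this by induction on \( \psi \). The quantifier-free base case takes \( \pi = \varrho = \bullet \). The \( \forall \)-case for \( \pi \) and the \( \exists \)-case for \( \varrho \) follow directly from the induction hypothesis. The essential case is \( \exists y. \psi(\vec{x}, y) \) for the \( \pi \)-statement: by effective quantifier elimination for \textsf{LRA} (\eg, Ferrante--Rackoff or Loos--Weispfenning virtual substitution), there exist finitely many terms \( t_1(\vec{x}), \dots, t_n(\vec{x}) \) such that \( \exists y. \psi(\vec{x}, y) \equiv \bigvee_i \psi(\vec{x}, t_i(\vec{x})) \). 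Applying the induction hypothesis to each \( \psi[t_i/y] \) yields \( \pi_i \) with \( \psi[t_i/y] \,|\, \pi_i \rangle \equiv \psi[t_i/y] \), and the skeleton \( \bigsqcup_i t_i. \pi_i \) then satisfies \( (\exists y. \psi) \,|\, \bigsqcup_i t_i. \pi_i \rangle \equiv \bigvee_i \psi[t_i/y] \equiv \exists y. \psi \). The dual \( \forall \)-case for \( \varrho \) is handled symmetrically using the same quantifier-elimination property.

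Applying the inductive lemma to the closed \( \varphi \) gives \( \pi^* \) and \( \varrho^* \) with \( \varphi \,|\, \pi^* \rangle \equiv \varphi \equiv \langle \varrho^* \,|\, \varphi \). If \( \varphi \) is true then \( \varphi \,|\, \pi^* \rangle \) is valid, hence by \cref{lem:qlra:smt} we obtain \( \inf_\varrho L^\varphi_{\mathrm{FK}}(\varrho, \pi^*) = 1 \), forcing \( \sup_\pi \inf_\varrho L^\varphi_{\mathrm{FK}}(\varrho, \pi) = 1 \); if \( \varphi \) is false then \( \langle \varrho^* \,|\, \varphi \) is unsatisfiable, hence by the dual of \cref{lem:qlra:smt} we obtain \( \sup_\pi L^\varphi_{\mathrm{FK}}(\varrho^*, \pi) = -1 \), forcing \( \inf_\varrho \sup_\pi L^\varphi_{\mathrm{FK}}(\varrho, \pi) = -1 \). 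Combined with weak duality (\cref{lem:lagrangian:weak-duality}), the two optimal values coincide and equal the truth value of \( \varphi \), establishing strong duality and the correctness claim. The main obstacle is the existential case of the inductive lemma: it depends crucially on the \textsf{LRA}-specific fact that every existential quantifier can be eliminated in favor of a finite disjunction of term-substitution instances. This is exactly the property that justifies the strategy skeleton formalism of \citet{Farzan2016} and that would fail over theories lacking such effective disjunctive Skolemization.
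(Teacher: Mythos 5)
Your proof is correct and takes essentially the same route as the paper's: both hinge on the Farzan--Kincaid observation that \textsf{LRA} quantifiers can be eliminated by substituting finitely many terms, which yields a skeleton \( \pi^* \) with \( \varphi|\pi^*\rangle \) equivalent to \( \varphi \), and both then conclude via \cref{lem:qlra:smt} together with weak duality. The only difference is presentational: you spell out the \textsf{UNSAT}-skeleton construction and the dual of \cref{lem:qlra:smt} explicitly, whereas the paper treats only the \textsf{SAT} side and handles the invalid-\( \varphi \) case implicitly by the symmetry of the two kinds of skeletons.
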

\begin{proof}
  The result follows from \citet{Farzan2016}.
  To show the strong duality, it suffices to prove that, for a valid formula \( \varphi \), there exists a \textsf{SAT} strategy skeleton \( \pi \in \SATSkeleton(\varphi) \) such that \( \inf_{\varrho} L^\varphi_{\mathrm{FK}}(\varrho, \pi) = 1 \).
  The basic observation in \citet{Farzan2016} is that the quantifier elimination \( \vartheta(y) \leftrightarrow \exists x. \psi(x,y) \) can be achieved by substitution, \ie~there exists a finite set \( \{ t_1,\dots,t_n \} \) of terms such that \( (\exists x.\psi(x,y)) \leftrightarrow (\psi(t_1,y) \vee \dots \vee \psi(t_n,y)) \).\footnote{\citet{Farzan2016} was inspired by \emph{model-based projection}~\cite{Komuravelli2014}, which is closely related to quantifier elimination (see \cite[Section~4.1]{Farzan2016}).}
  Iterative application of this result yields a \textsf{SAT} strategy skeleton \( \pi \) such that \( \varphi \leftrightarrow (\varphi|\pi\rangle) \).

  Since \( \psi(t) \to (\exists x.\psi(x)) \) is valid for every \( \psi \) and \( t \), the formula \( \varphi|\pi\rangle \), obtained by instantiating \( \exists \)-variables, is stronger than \( \varphi \).
  By \cref{lem:qlra:smt}, \( \inf_\varrho L^{\varphi}_{\mathrm{FK}}(\varrho, \pi) = 1 \) implies the validity of \( \varphi|\pi\rangle \) and hence the validity of \( \varphi \).
\end{proof}

The algorithm given by \citet{Farzan2016} is an instance of the basic primal-dual procedure for the Lagrangian \( L_{\mathrm{FK}} \).
Given a \textsf{SAT} strategy skeleton \( \pi \), the check of \( \inf_\varrho L_{\mathrm{FK}}(\varrho,\pi) = 1 \) is reducible to the validity of \( \varphi|\pi\rangle \), which can de solved by an SMT solver.
If \( \inf_\varrho L_{\mathrm{FK}}(\varrho,\pi) = -1 \), or equivalently, \( \varphi|\pi\rangle \) is invalid, the SMT solver generates a counter-model for \( \varphi|\pi\rangle \), from which a \textsf{UNSAT} strategy skeleton \( \varrho \) can be constructed as illustrated in \cref{eg:qlra:smt-and-counterstrategy-extraction}.
But their procedure has an important twist here: the assignment to each variable in the counter-model, which is a concrete rational number, is converted to a term.
The terms are chosen from those playing an important role in quantifier elimination or model-based projection (\cf~\cite[Section~4.1]{Farzan2016}).
As the set of such terms is finite, \citet{Farzan2016} in effect considered finite subsets of \( \SATSkeleton(\varphi) \) and \( \UNSATSkeleton(\varphi) \), and this finiteness guarantees that Farzan and Kincaid's procedure is terminating.

\section{Lagrangian for Fixed-Point Logic over Quantified Linear Arithmetic}\label{sec:fixed-point-logic}
This section demonstrates the usefulness of our Lagrangian-based approach by developing a solver of the validity problem for a fixed-point logic over quantified linear arithmetic.
The validity problem is closely related to game solving against liveness winning criteria as in \citet{Heim2024}.

Our observation is that the difficulty of this problem can actually be decomposed into the difficulties of termination analysis and of quantifiers, each of which has been addressed in \cref{sec:termination,sec:qlra}.
So a Lagrangian for the validity problem of fixed-point logic with quantifiers is obtained as a simple combination of ideas in these sections.

We have developed a prototype implementation of the basic primal-dual procedure for the proposed Lagrangian and provide an evaluation.

\subsection{Fixed-Point Logic}
We define the syntax and semantics of the fixed-point logic studied in this section.
Although the discussion in this section is applicable to first-order structures in general to some extent, we will focus on the fixed-point logic over linear integer arithmetic, \textsf{LIA}.
An atomic predicate \( p \) is either \( = \) or \( \le \), and a term is given by \( t ::= n \mid x \mid t + t' \mid n \times t \mid t \bmod n \) (where $n$ is an integer and $x$ is a variable).

A \emph{(fixed-point-free) formula} is defined by the following grammar:
\begin{equation*}
    \varphi,\psi \quad::=\quad p(\vec{t}) \mid \neg p(\vec{t}) \mid P(\vec{t}) \mid \varphi \wedge \psi \mid \varphi \vee \psi \mid \forall x. \varphi \mid \exists x. \varphi,
\end{equation*}
where \( t \) is a term in \textsf{LIA}, \( p \) is an atomic predicate in \textsf{LIA} (\ie~\( = \) or \( \le \)), and \( P \) is a user-defined predicate.
For notational simplicity, we will also use logical formulas that do not follow the above syntax in the strict sense but can be transformed into the above form (\eg~\( (x < 42 \wedge i \neq 0) \Rightarrow P(x+i) \), which is equivalent to \( (42 \le x \vee i = 0 \vee P(x+1)) \)).
Predicates \( P \) are defined by mutual recursion:
\begin{equation*}
    P_1(\vec{x}_1) \stackrel{\Mode_1}{=} \varphi_1
    \:;\:
    P_2(\vec{x}_2) \stackrel{\Mode_2}{=} \varphi_2
    \:;\dots;\:
    P_n(\vec{x}_n) \stackrel{\Mode_n}{=} \varphi_n
\end{equation*}
where \( \vec{x}_i \) is a sequence of variables, \( \Mode_i \in \{ \nu, \mu \} \), and \( \varphi_i \) is a fixed-point-free formula with free variables in \( \{ \vec{x}_i, P_1,\dots,P_n \} \).
If \( \Mode_i = \nu \) (resp.~\( \mu \)), then \( P_i \) is the greatest solution (resp.~least solution) of the equation \( P_i(\vec{x}_i) = \varphi_i \) (for details, see the formal semantics defined below).
The order of definitions matters: \( \cdots;\: P_i(\vec{x}_i) \stackrel{\Mode_i}{=} \varphi_i \:;\: P_{i+1}(\vec{x}_{i+1}) \stackrel{\Mode_{i+1}}{=} \varphi_{i+1} \:; \cdots \) differs from \( \cdots;\: P_{i+1}(\vec{x}_{i+1}) \stackrel{\Mode_{i+1}}{=} \varphi_{i+1} \:;\: P_i(\vec{x}_i) \stackrel{\Mode_i}{=} \varphi_i \:; \cdots \) when \( \Mode_i \neq \Mode_{i+1} \).
For readers familiar with parity conditions, the left has higher priority.
A formula is a pair \( (\varphi, \Defs) \) of a fixed-point free formula \( \varphi \) with a mutually-recursive definition \( \Defs \) of predicates in \( \varphi \).

\begin{example}[{\citet[Example~1.1]{Heim2024}}]
    This example is a formalization of the following control problem.
    Consider a system with a single integer variable \( x \colon \Int \).
    For each step, the environment chooses \( i \in \Int \).
    If \( x < 42 \) or \( i = 0 \), the system successfully terminates.
    If \( x \ge 42 \) and \( i \neq 0 \), the system proceeds to the next step, updating \( x \).
    In this case, the controller can choose the next value of \( x \) from \( (x+i) \) and \( (x-i) \).
    The problem is to synthesize a controller that will always make the system terminate (or, if such a controller does not exist, to answer that it does not exist).

    The existence of a controller is equivalent to the validity of \( \forall x. P(x) \) where
    \begin{align*}
        P(x) &\quad\stackrel{\mu}{=}\quad \forall i. \Big(\big((x < 42 \vee i = 0\big) \Rightarrow \top) \wedge \big((x \ge 42 \wedge i \neq 0) \Rightarrow (P(x+i) \vee P(x-i))\big) \Big)
    \end{align*}
    The user-defined predicate \( P(n) \) means that the system with an appropriate controller terminates from the state \( x = n \).
    The right-hand-side of the definition describes the one-step transition of the system; the universal quantification \( \forall i \) expresses that \( i \) is chosen by the environment.
    The predicate \( P \) is defined as the least fixed-point; this reflects the requirement that the system should not perform infinite step transitions.
    The formula \( \forall x. P(x) \) is valid, as we shall see.
    \qed
\end{example}

We formally define the semantics of a mutually-recursive definition of predicates.
First, we deal with a single equation \( P(\vec{x}) \stackrel{\xi}{=} \varphi \).
We write \( \varphi(P)(\vec{x}) \) to make explicit that \( \varphi \) depends on \( P \) and \( \vec{x} \).
Let \( \Domain := (\Int^{\ell} \to \{0,1\}) \) be the semantic domain for the predicate \( P \), where \( \ell \) is the length of \( \vec{x} \).
This is a complete lattice by the point-wise order (\ie, for \( f,g \in \Domain \), \( f \le g \) if and only if \( f(\vec{n}) \le g(\vec{n}) \) for every \( \vec{n} \in \Int^{\ell} \)).
Since \( \neg P \) does not appear in \( \varphi \), the function \( \varphi \colon \Domain \longrightarrow \Domain \) mapping \( d \in \Domain \) to \( \varphi(d)\) (which is a predicate \( \Int^\ell \ni \vec{n} \mapsto \varphi(d)(\vec{n}) \in \{0,1\} \)) is a monotone function.
Then, Knaster-Tarski theorem shows that the set \( \{ d \in \Domain \mid d = \varphi(d) \} \) has both the least and greatest elements.
When \( \xi = \mu \) (resp.~\( \xi = \nu \)), then \( P \) is defined as the least solution (resp.~greatest solution).

The semantics of a mutually-recursive definition
\begin{equation*}
    \Defs
    \quad=\quad
    \big(
        P_1(\vec{x}_1) \stackrel{\Mode_1}{=} \varphi_1
        \:;\:
        P_2(\vec{x}_2) \stackrel{\Mode_2}{=} \varphi_2
        \:;\dots;\:
        P_n(\vec{x}_n) \stackrel{\Mode_n}{=} \varphi_n
    \big)
\end{equation*}
is given as follows.
Let \( \Domain_i := (\Int^{\ell_i} \to \{0,1\}) \) be the semantic domain for the predicate \( P_i \), where \( \ell_i \) is the length of \( \vec{x}_i \).
We solve the equations in the right-to-left direction.
A subtlety here is that \( \varphi_n \) may contain \( P_1,\dots,P_{n-1} \); we write \( \varphi_n(P_1,\dots,P_{n-1},P_n)(\vec{x}_n) \) to clarify the dependency.
For each \( (d_1,\dots,d_{n-1}) \in \Domain_1 \times \dots \times \Domain_{n-1} \),
the equation \( P_n(\vec{x}_n) \stackrel{\Mode_n}{=} \varphi_n(d_1,\dots,d_{n-1}, P_n)(\vec{x}_n) \) determines a value \( \lambda_n(d_1,\dots,d_{n-1}) \in \Domain_{n} \) depending on \( d_1,\dots,d_{n-1} \).
The solutions \( \lambda_n(d_1,\dots,d_{n-1}) \) parameterized by \( (d_1,\dots,d_{n-1}) \in \Domain_1 \times \dots \times \Domain_{n-1} \) determine a function \( \lambda_n \colon \Domain_1 \times \dots \times \Domain_{n-1} \to \Domain_n \).
Substituting \( P_n \) with \( \lambda_n(P_1,\dots,P_{n-1}) \)
yields a system of equations without \( P_n \), and then we solve the equation for \( P_{n-1} \).
Iteratively applying this process yields \( P_k = \lambda_k(P_1,\dots,P_{k-1}) \) for each \( k= 1,\dots,n \) depending on \( P_1,\dots,P_{k-1} \).
Let \( \delta_1 = \lambda_1() \), \( \delta_2 = \lambda_2(\delta_1) \), and \( \delta_k = \lambda_k(\delta_1,\dots,\delta_{k-1}) \) for general \( k \).
The assignment on \( P_1,\dots,P_n \) determined by \( \Defs \) is \( (\delta_1,\dots,\delta_n) \in \Domain_1 \times \dots \times \Domain_n \).
The formula \( (\varphi, \Defs) \) is \emph{valid} if \( \varphi = \varphi(\delta_1,\dots,\delta_n) \) is valid.

\subsection{Lagrangian and Primal-Dual Procedure}
As we have mentioned at the end of the previous subsection, the validity problem becomes easier if one can remove the least fixed-point \( \mu \) and existential quantifier \( \exists \).
We have already discussed a way to remove an existential quantifier \( \exists \) in \cref{sec:qlra}, in which we used a \textsf{SAT} strategy skeleton to guide a process of removal.
The removal of the least fixed-point has been addressed in \cref{sec:termination}, albeit in a slightly different form: a ranking function or disjunctively-well-founded relation is used to reduce the termination problem (a typical \( \mu \)-property) to a safety problem (a typical \( \nu \)-property).

The Lagrangian of this section simply combines these data, so the \( Y \)-component is a pair of a \textsf{SAT} strategy skeleton and a disjunctively-well-founded relation.
The \( X \)-component is similar: it describes choices of \( \forall \)-variables and ranking information for \( \nu \)-recursions.

We formally describe the details.
Assume a pair \( (\varphi, \Defs) \), where
\begin{equation*}
    \Defs
    \quad=\quad
    \big(
        P_1(\vec{x}_1) \stackrel{\Mode_1}{=} \varphi_1
        \:;\:
        P_2(\vec{x}_2) \stackrel{\Mode_2}{=} \varphi_2
        \:;\dots;\:
        P_n(\vec{x}_n) \stackrel{\Mode_n}{=} \varphi_n
    \big).
\end{equation*}
Let \( \ell_i \) be the length of \( \vec{x}_i \).
For simplicity, we assume that \( \varphi, \varphi_1,\dots, \varphi_n \) are in prenex normal form.
Let \( \RankingFunc_i \) be the set of candidate ranking functions available for \( P_i \).
Let \( L := \{ i \mid \xi_i = \mu \} \) and \( G := \{ i \mid \xi_i = \nu \} \) be the sets of indices of \( \mu \)- and \( \nu \)-predicates, repectively.

The Lagrangian is a function \( L_{\mathrm{Fix}} \colon X \times Y \longrightarrow \{-1,1\} \), where
\begin{align*}
    X &\textstyle \quad:= \prod_{i \in G} \finitepowerset(\RankingFunc_i) \times \UNSATSkeleton(\varphi) \times \prod_{i = 1}^n \UNSATSkeleton(\varphi_i)
    \\
    Y &\textstyle \quad:= \prod_{i \in L} \finitepowerset(\RankingFunc_i) \times \SATSkeleton(\varphi) \times \prod_{i = 1}^n \SATSkeleton(\varphi_i).
\end{align*}
Recall that in \cref{sec:qlra}, which dealt with logical expressions with quantifiers, the two arguments of the Lagrangian produce a logical formula whose validity is trivially computable (\ie~a logical formula without quantifiers).
A similar situation can be observed for fixed-point logic: two arguments of the Lagrangian induce a logic formula whose validity is computable.
The problematic constructs in fixed-point logic formulas are quantifiers and recursions;
the quantifiers are eliminated by the strategy skeletons, and we avoid the divergence in the evaluation of the recursion by immediately stopping the evaluation when a ranking information violation is detected.
Then, we define the value of the Lagrangian as the outcome of the evaluation.

Given \( x = ((R_i)_{i \in G}, \varrho, (\varrho_i)_{i = 1,\dots,n}) \in X \) and \( y = ((R_i)_{i \in L}, \pi, (\pi_i)_{i = 1,\dots,n}) \), we formally define the value \( L_{\mathrm{Fix}}(x,y) \) of the Lagrangian as follows.
We first apply the strategy skeletons \( (\varrho, (\varrho_i)_i) \) and \( (\pi, (\pi_i)_i) \) and construct a quantifier-free approximation \( (\varphi', \Defs') \) of the input \( (\varphi, \Defs) \).
This is simply obtained by applying the translation \( \langle \varrho \,|\, \varphi \,|\, \pi \rangle \) defined in \cref{sec:qlra}.
That is, \( \varphi' := \langle \varrho \,|\, \varphi \,|\, \pi \rangle \) and \( P'_i(\vec{x}_i) \stackrel{\xi_i}{=} \langle \varrho_i \,|\, \varphi_i\,|\, \pi_i \rangle \).
Note that \( (\varphi', \Defs') \) is neither over- nor under-approximation.

The formula \( (\varphi', \Defs') \) has no quantifier nor free variable, which we regard as a ``program'' with recursively defined functions \( P'_1,\dots,P'_n \) with two kinds of non-deterministic branches, demonic \( \wedge \) and angelic \( \vee \).
We run the program, recording all visited states and monitoring them for violations of the disjunctively-well-founded relations \( R_i \).
So, a configuration is a pair \( (\psi, \vec{V}) \) of a formula \( \psi \) and a list of sets \( \vec{V} = (V_1,\dots,V_n) \) in which \( V_i \) consists of actual arguments \( \vec{n}_i \) such that \( P_i(\vec{n}_i) \) has been visited.
The reduction relation is defined by
\begin{align*}
    (\psi_1 \mathrel{\square} \psi_2, (V_1,\dots,V_n)) &\longrightarrow (\psi_i, (V_1,\dots,V_n)) & {\square} \in \{ \wedge, \vee \},\: i \in \{ 1,2 \} \\
    (P'_i(\vec{n}_i), (V_1,\dots,V_n)) &\longrightarrow (\varphi'_i[\vec{n}_i/\vec{V}_i], (V_1,\dots,V_i\cup\{(\vec{n}_i)\},\dots,V_n)) & \forall \vec{m} \in V_i.\: \vec{m} \succ_{R_i} \vec{n}_i \\
    (P'_i(\vec{n}_i), (V_1,\dots,V_n)) &\longrightarrow \mathtt{false} & \xi_i = \mu \wedge \exists \vec{m} \in V_i.\: \vec{m} \not\succ_{R_i} \vec{n}_i \\
    (P'_i(\vec{n}_i), (V_1,\dots,V_n)) &\longrightarrow \mathtt{true} & \xi_i = \nu \wedge \exists \vec{m} \in V_i.\: \vec{m} \not\succ_{R_i} \vec{n}_i
\end{align*}
(where \(\varphi'\) is the formula defined above, stemming from the definition of \(P'_i\))
and this reduction process always terminates because \( \succ_{R_i} \) is disjunctively well-founded.
A normal form is \( \mathtt{true} \), \( \mathtt{false} \) or \( (\mathit{atomic}, \vec{V }) \) for some closed atomic formula \( \mathit{atomic} \), so a normal form is associated to a truth value.
Then we can reverse the reduction backwards and assign a truth value to each configuration.
A configuration \( (\psi, \vec{V}) \) with \( \psi = \psi_1 \wedge \psi_2 \) or \( \psi_1 \vee \psi_2 \) has successors, and its truth value is the conjuction or the disjunction of those of the successors depending on \( \psi \).
We define \( L_{\mathrm{Fix}}(x,y) = 1 \) if and only if the value assigned to \( (\varphi', (\emptyset,\dots,\emptyset)) \) is true.
Note that \( L_{\mathrm{Fix}}(x,y) \) is computable (provided that every ranking function in \( \RankingFunc_i \) is computable).

We say that a subset \( \RankingFunc \subseteq (Z \to \Nat) \) of ranking functions on a set \( Z \) is \emph{fine} if, for every \( x,y \in Z \), there exists \( r \in \RankingFunc \) such that \( r(x) > r(z) \).
\begin{theorem}
    Assume that \( \RankingFunc_i \) is fine for every \( i \).
    Then \( \sup_y \inf_x L_{\mathrm{Fix}}(x,y) = 1 \) implies the truth of the input formula.
\end{theorem}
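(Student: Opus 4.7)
I will prove the contrapositive: assuming $(\varphi, \Defs)$ is invalid, I show that for every $y^* = ((R_i^*)_{i \in L}, \pi^*, (\pi_i^*)_i) \in Y$ there exists $x \in X$ with $L_{\mathrm{Fix}}(x, y^*) = -1$. The first step is a syntactic reduction: substitute the \textsf{SAT} strategy skeletons $\pi^*, \pi_i^*$ into $\varphi$ and into each $\varphi_i$ to obtain a fixpoint system in which only universal quantifiers remain, to be resolved later by the refuter. Because \textsf{SAT}-skeleton instantiation weakens each definition pointwise ($\varphi_i|\pi_i^*\rangle$ implies $\varphi_i$), monotonicity of the mutual Knaster--Tarski fixpoints yields $\delta'_i \le \delta_i$ for the new interpretations, so the instantiated top-level formula remains invalid.

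I then extract a concrete refuter strategy against the $\forall$-only system. Using the ordinal approximant characterisation of mutual fixpoints (solved in the priority order of the equations), invalidity is witnessed by a well-founded refutation tree in which the refuter picks terms for the universal quantifiers and unfolding schedules for predicate calls: $\nu$-predicates are unfolded to finite depth at which their body becomes refutably false, while $\mu$-predicates may require unbounded unfolding along a chain witnessing $\mu$-divergence. The refuter's universal-quantifier choices are packaged directly into \textsf{UNSAT} strategy skeletons $\varrho, \varrho_i$.

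To build the $\nu$-ranking component of $x$, I use fineness. Let $S_i \subseteq \Int^{\ell_i}$ collect all arguments at which $P_i$ (for $i \in G$) is called inside the refuter's strategy. For each ordered pair $(\vec{m}, \vec{n})$ of distinct elements of $S_i$, fineness of $\RankingFunc_i$ provides some $r_{\vec{m},\vec{n}} \in \RankingFunc_i$ with $r_{\vec{m},\vec{n}}(\vec{m}) > r_{\vec{m},\vec{n}}(\vec{n})$; let $R'_i$ be the finite union of all these. Then $\vec{m} \succ_{R'_i} \vec{n}$ holds for every pair of $\nu$-arguments appearing in the refuter's tree, so no $\nu$-call on this tree is prematurely truncated to $\mathtt{true}$. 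Set $x := ((R'_i)_{i \in G}, \varrho, (\varrho_i)_i)$.

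Finally I match the Lagrangian's monitored execution against the refutation tree. The combined choices $\pi^*, \pi_i^*, \varrho, \varrho_i$ steer the angelic/demonic branching along the refuter's path; the $R'_i$'s keep $\nu$-calls alive long enough to expose the refuter's falsity witnesses; and any infinite $\mu$-chain on this path must eventually violate the fixed finite $R_i^*$ and collapse to $\mathtt{false}$, which is precisely what the refuter's $\mu$-divergence step requires. Propagating these values upward through the tree yields $\mathtt{false}$, so $L_{\mathrm{Fix}}(x, y^*) = -1$, contradicting $\inf_x L_{\mathrm{Fix}}(x, y^*) = 1$. The main obstacle is making the refuter-strategy extraction rigorous for the mixed-polarity mutually recursive system, i.e.\ showing that only finitely many distinct $\nu$-arguments are needed along each branch (so fineness suffices); this is the content of a priority-respecting simultaneous-approximant induction on the $\Mode_i$'s, where each $\nu$-predicate is replaced by a sufficiently deep finite unfolding and each $\mu$-predicate is handled via its least-fixpoint chain.
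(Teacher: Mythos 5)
Your overall strategy is the same as the paper's proof sketch: argue the contrapositive, fix $y^*$, obtain a refuter strategy against the proponent restricted to the \textsf{SAT} skeletons (your Knaster--Tarski monotonicity argument that skeleton instantiation strengthens each $\varphi_i$ and hence shrinks the solutions $\delta_i$ is a clean way to justify this restriction, which the paper leaves implicit via game semantics), observe that infinite $\mu$-chains are cut to $\mathtt{false}$ by the fixed relations $R_i^*$ in $y^*$, and use fineness to assemble the $\nu$-side rankings and package the refuter's universal choices into \textsf{UNSAT} skeletons.

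There is, however, a gap at the finiteness step, which is exactly where the paper does something you omit. For $x$ to be an element of $X$, both the \textsf{UNSAT} skeletons and each $R'_i$ must be \emph{finite} objects, so you need the set $S_i$ of $\nu$-arguments occurring in the \emph{entire} refutation tree to be finite --- per-branch finiteness of distinct $\nu$-arguments, which is what your closing paragraph proposes to establish by a priority-respecting approximant induction, is not enough (and the approximant induction is also not obviously needed, since you already invoke the $R_i^*$-truncation to make each branch finite). The paper's route is: every play consistent with the refuter's winning strategy either ends in a false atomic formula or is dominated by $\mu$-recursion and is therefore truncated to $\mathtt{false}$ by $y^*$ after finitely many steps, so all branches are finite; the tree is finitely branching because the \textsf{SAT} skeletons offer only finitely many options; hence by K\"onig's lemma the whole tree is finite, which is what makes $S_i$ finite and your pairwise-fineness construction of $R'_i$ (and the finite $\bigsqcap$-skeletons) legitimate. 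You should add this K\"onig-style argument. A further caveat, shared with the paper's own sketch, is that fineness only supplies ranking functions for \emph{distinct} pairs: if a branch revisits the same $\nu$-argument, the call is truncated to $\mathtt{true}$ no matter which $R'_i$ you choose, so one must also argue that the refuter's strategy can be taken so that no play (against the skeleton-restricted proponent) repeats a $\nu$-argument; your proposal, like the paper, does not address this point.
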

\begin{proof}[Proof Sketch]
    We prove the contraposition.  Assume that the input formula is false, and assume \( y \in Y \).
    It suffices to find \( x \in X \) such that \( \sup_y \inf_x L_{\mathrm{Fix}}(x,y) = -1 \).
    By the game semantics of the fixed-point logic, the opponent has a winning strategy \( \sigma \) for the game induced by the input formula.
    For every proponent strategy \( \sigma' \), the play determined by \( \sigma \) and \( \sigma' \) reaches a false atomic formula or diverges with an infinitely deep recursive calls to a \( \mu \)-predicate.
    The evaluation of such a recursive call eventually violates the disjunctively-well-founded relation given by \( x \).
    So the fact that \( \sigma' \) loses against \( \sigma \) is witnessed by a finite set of finite reduction sequences.
    This is particularly true when \( \sigma' \) chooses each assignment of \( \exists \) following the \textsf{SAT} strategy skeletons in \( x \).
    When the proponent follows the \textsf{SAT} strategy skeleton, the proponent always has finitely many options at each moment.
    So the fact that the proponent cannot win the game following the \textsf{SAT} strategy skeletons in \( x \) can be witnessed by a finitely branching tree with no infinite path, so it is a finite tree by K\"onig's lemma.
    Thanks to the finiteness, together with the fineness assumption on \( \RankingFunc \), we can give \( y \) that beats \( x \).
\end{proof}

Both of the \( X \)- and \( Y \)-components are join semilattices, and the Lagrangian \( L_{\mathrm{fix}} \) is anti-monotone on \( X \) and monotone on \( Y \).
So the monotonicity requirement is satisfied.
Our implementation uses a variant of the basic primal-dual procedure that is monotone on both \( \alpha \in X \) and \( \beta \in Y \).

Let us examine the primal and dual witness check problems.
An interesting observation for the termination analysis with disjunctively-well-founded relations is that the dual witness check is reducible to a safety problem, using a transition system with an additional component that records one of the already visited states.
One may expect that the same idea should apply to the problem of this section as well, but here is a subtlety.
The soundness of the reduction relies on the setting of \cref{sec:termination}, in which all non-deterministic branches in the transition system are demonic branching \( \wedge \) (\ie~whatever the choice, the system should terminate).
In the presence of angelic branches \( \vee \), the reduction is unsound in general, because the proponent can make a choice that depends on the value of the additional component, which is a kind of cheating as the additional component is not a part of the real state of the transition system.
One way to restore soundness is to ensure that choices are independent of additional components by making the dependencies explicit using (fresh) functional symbols.
This is the approach of our implementation.

\begin{example}
    Conisder the predicate \( P \) defined by \( P(x) \stackrel{\mu}{=} \exists z. P(z) \).
    Then \( P(x) \) is false for every \( x \).
    Applying the translation in \cref{sec:termination:dwf}, we obtain \( P'(x,y) \stackrel{\nu}{=} \exists z. (y \succ_R z) \wedge P'(z,y) \wedge P'(z,z) \), but this formula becomes true when we appropriately choose a disjunctively-well-founded relation \( R \).
    Note that \( P' \) is the greatest fixed-point, corresponding to the fact that \cref{sec:termination:dwf} translates the termination problem to a safety problem.
    Let \( r(x) := \max(x, 0) \) and \( r'(x) := \max(3-x,0) \).
    Our strategy is to choose \( z \) from \( 0,1 \) so that \( z \neq y \).
    Then \( y \succ_R z \) holds for every \( y \), and thus \( P'(x,y) \) is true for every \( x \) and \( y \).
    Note that the choice of \( z \) depends on the additional component \( y \).

    Our implementation reduces the problem to find an appropriate function \( f \) such that \( P''(x,y) \stackrel{\nu}{=} (y \succ_R f(x)) \wedge P'(f(x),y) \wedge P'(f(x),f(x)) \).
    The definition of \( P'' \) is obtained by replacing \( z \) with \( f(x) \), expressing that the choice of \( z \) should only depend on \( x \).
    This approach is sound, and there is no pair \( (f,R) \) that makes \( P''(x,y) \) true.
    \qed
\end{example}

\subsection{Propotype Implementation and Evaluation}

We implemented the new method proposed in this section as {\sc MuStrat}, a prototype validity checker for a fixpoint-logic over quantified linear integer and real arithmetic. Additionally, for evaluation, we conducted comparative experiments with {\sc MuVal}~\cite{Unno2023}, the existing state-of-the-art fixpoint logic validity checker.  We ran the tools on (1) the 335 (non-)termination benchmarks from termCOMP (C Integer category), (2) the 202 fixpoint logic benchmarks provided by the authors of {\sc MuVal}~\cite{Unno2023}, and (3) the 47 game-solving benchmarks from \cite{Heim2024}, with a timeout of 300 seconds in the StarExec environment. The results are summarized in scatter plots, respectively in Figures~\ref{fig:termCOMP}, \ref{fig:popl2023}, and \ref{fig:popl2024}. In (1) and (2), although the total number of solved problems is slightly lower than {\sc MuVal}, {\sc MuStrat} successfully solved 6 and 7 problems respectively that MuVal timed out on. Additionally, in problems where strategy synthesis is crucial, such as non-termination verification and branching-time properties verification, MuStrat was able to find witnesses faster than {\sc MuVal} in more cases. (Note that in Figure~\ref{fig:termCOMP}, the red dots represent non-terminating instances.) {\sc MuStrat} also has an engineering advantage as it uses the highly efficient {\sc Spacer} as the backend CHC solver, allowing it to quickly solve examples that require the synthesis of complex inductive invariants. On the other hand, {\sc MuVal}, through CEGIS iterations, can simultaneously synthesize the necessary ranking functions and inductive invariants, which may be inter-dependent, allowing it to solve problem instances requiring complex ranking functions that {\sc MuStrat} times out on. On the other hand, the dedicated game solver {\sc rpgsolve}, proposed in the paper that provided the benchmark set (3)~\cite{Heim2024}, successfully solved all but two problems. To achieve performance close to {\sc rpgsolve}, implementation tuning unrelated to the core algorithm (e.g., efficiently handling the DAG-shaped resolution proofs returned by Spacer without expanding them into trees) is necessary. This is future work for tool development and is out of scope for the present paper. Additionally, it should be noted that {\sc rpgsolve} targets a limited subset of fixpoint logic, making it inapplicable to the benchmark sets (1) and (2).

\begin{figure}[t]
    \begin{tabular}{ccc}
        \begin{minipage}[t]{0.3\hsize}
            \centering
            \includegraphics[scale=0.33]{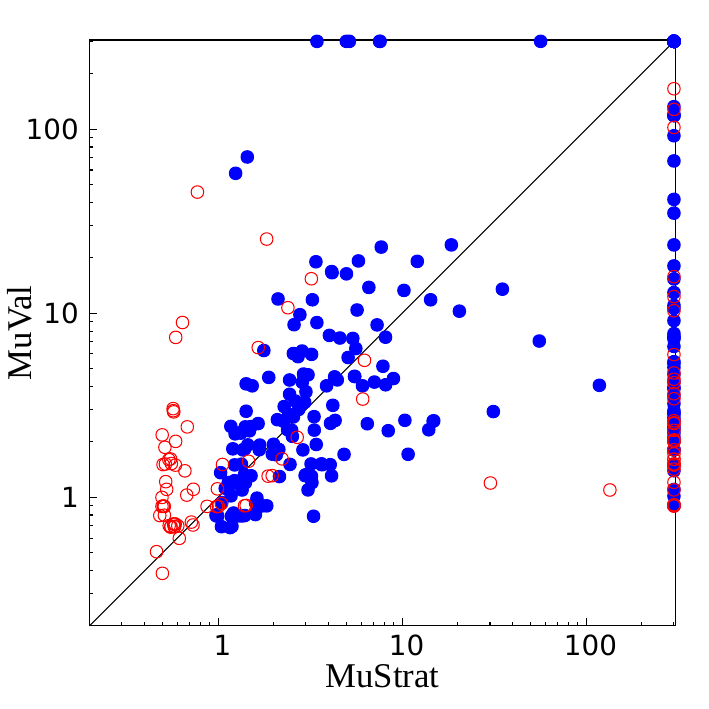}
            \caption{{\bf MuStrat} vs. {\bf MuVal} on the (non-)term. benchmark set from termCOMP (C Integer).}
            \label{fig:termCOMP}
        \end{minipage} &
        \begin{minipage}[t]{0.3\hsize}
            \centering
            \includegraphics[scale=0.33]{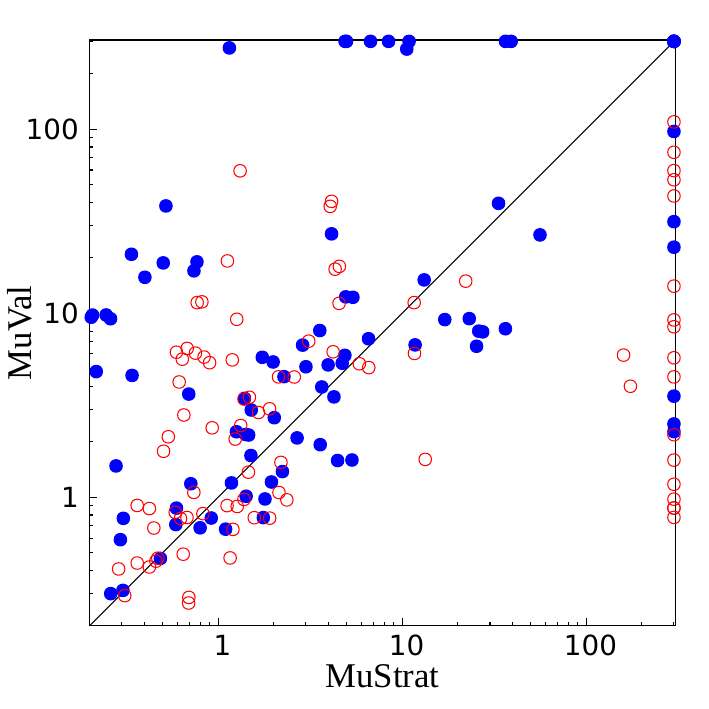}
            \caption{{\bf MuStrat} vs. {\bf MuVal} on the fixed-point logic benchmark set from \cite{Unno2023}.}
            \label{fig:popl2023}
        \end{minipage} &
        \begin{minipage}[t]{0.3\hsize}
            \centering
            \includegraphics[scale=0.33]{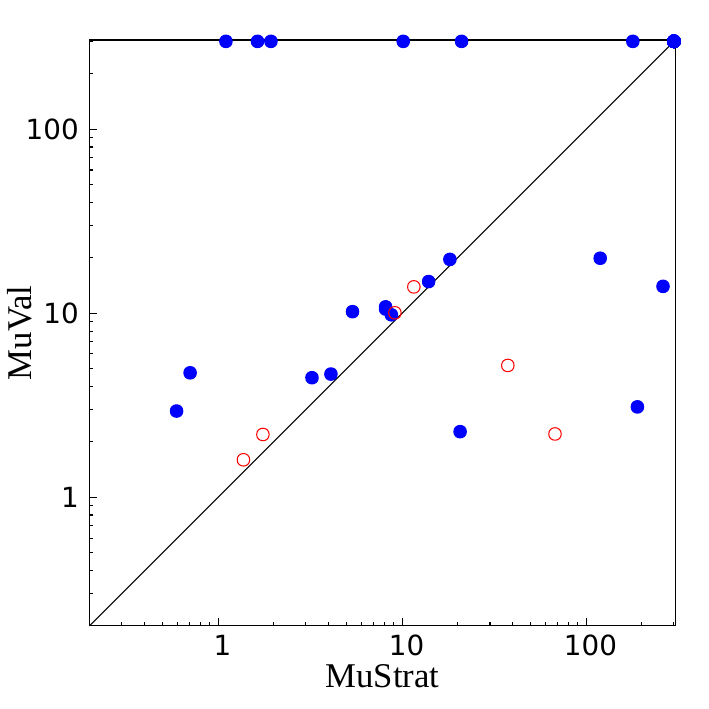}
            \caption{{\bf MuStrat} vs. {\bf MuVal} on the game solving benchmark set from \cite{Heim2024}.}
            \label{fig:popl2024}
        \end{minipage}
    \end{tabular}
\end{figure}

\section{Related Work and Concluding Remarks}\label{sec:related}

The development of unifying frameworks for algorithms in program analysis and verification is a common theme.
Abstract Interpretation~\cite{CC77,CC79} is perhaps the most prominent example of a mathematical framework that describes a variety of program analysis algorithms.
While in some instances the primal or dual problems in our framework can be seen as solved by computing an abstract fixpoint in a suitable domain (e.g., in CEGAR and in primal-dual Houdini), it seems to us that the entirety of \Cref{alg:lagrangian:procedure} is not an instance of abstract fixpoint computation.
We consider it an interesting avenue for future work to better understand the connection between our Lagrangian-based algorithmic framework and abstract interpretation. In particular, primal-dual search may be related to widening and narrowing.

In optimization and constraint solving, the so called inference dual and the relaxation dual (which is a special case of the inference dual) have been observed to underlie many algorithms~\cite{Hooker2000,Hooker2006}. In these terms, the duality of our generalized Lagrangian can be seen as both an inference dual and a relaxation dual, in the sense that, e.g., some $x \in X$ such that $\sup_y L(x,y) = v$ can be seen as a proof that $\inf_x \sup_y L(x,y) \leq v$.
In some cases \Cref{alg:lagrangian:procedure} can yield what \cite{Hooker2006} calls a nogood-based search method, but \Cref{alg:lagrangian:procedure} seems to be more general. Exploring the connection between our framework and more algorithms and mathematical frameworks for constraint satisfaction and optimization is an interesting avenue for future work.

Applications of linear/semidefinite programming and their (relaxation) duals to verification can be found in the literature.  \citet{Cousot2005} proposed an approach to reduce a verification problem into linear or semidefinite programming using Lagrangian relaxation as a key component.  \citet{Gonnord2015} synthesizes ranking functions by formulating the problem as a linear programming problem. Their use of extremal counterexamples to guarantee finiteness and therefore termination bears some resemblance to the choice of terms (mentioned at the end of \cref{sec:qlra}) that guarantee termination in the Quantified LRA solver discussed in \cref{sec:qlra}.

The duality in verification and constraint solving often appears as the duality between the proponent and opponent of a game.
For example, the quantified SMT solver by \citet{Farzan2016} studied in \cref{sec:qlra} is presented in terms of games.
Many quantified SMT solvers~\citet{Bjoerner2015,Murphy2024,DBLP:conf/cade/BonacinaGV23} are inspired by the game semantics of logical formulas and the duality inherent in it.
For example, \citet{Murphy2024} improved the approach by \citet{Farzan2016}: whereas the original method adds a strategy skeleton that could beat the current opponent strategy, the new method seeks a strategy skeleton that not only beats the current opponent strategy but also is winning on a subgame.
It is an interesting challenge to see if these advanced ideas can be understood and developed within a Lagrangian-based framework.

An important family of procedures that this paper does not deal with is IC3/PDR~\cite{Bradley2011,Een2011} and its relatives such as GPDR~\cite{Hoder2012} and \textsc{Spacer}~\cite{Komuravelli2014,Komuravelli2015,VediramanaKrishnan2023,Tsukada2024}.
We developed some candidate Lagrangians aiming to capture the behavior of PDR, some of which seem to capture PDR to a certain extent.
Detailed analysis of PDR based on Lagrangians is an important topic for future work.
According to \citet{Tsukada2022a}, a game solving procedure by \citet{DBLP:journals/pacmpl/FarzanK18} can be related to PDR, so the analysis of PDR would be beneficial to understand their procedure as well.

The primal-dual algorithms of~\cite{Padon2022,Unno2023} have inspired the development of our Lagrangian-based framework.
\Cref{sec:houdini} presented a suitable Lagrangian that shows that the high-level structure of primal-dual Houdini~\cite{Padon2022} is an instance of our framework. We believe the algorithm of~\cite{Unno2023} can similarly be captured by our framework, and we are still developing a suitable Lagrangian. One challenge there is that the primal and dual sides exchange information only under certain validity conditions. Therefore, either developing a suitable Lagrangian or possibly extending our framework is a planned future work.

\begin{acks}
We thank the anonymous reviewers for comments which improved the paper.
The research leading to these results has received funding from the
European Research Council under the European Union's Horizon 2020 research and
innovation programme (grant agreement No [759102-SVIS]).
This research was partially supported by the Israel Science Foundation (ISF) grant No.\ 2117/23 and \grantsponsor{JSPS}{JSPS}{} KAKENHI Grant Numbers \grantnum{JSPS}{JP20H04162}, \grantnum{JSPS}{JP20H05703}, \grantnum{JSPS}{JP19H04084}, \grantnum{JSPS}{JP24H00699}, \grantnum{JSPS}{JP23K24826}, and \grantnum{JSPS}{JP23K24820}.
This research was partially supported by a research grant from the Center for New Scientists at the Weizmann Institute of Science and
by a grant from the Azrieli Foundation.
\end{acks}

\bibliographystyle{ACM-Reference-Format}
\bibliography{jabref,manual}

\ifappendix
\clearpage
\appendix
\section{Proof of Theorem~\ref{thm:houdini:well-definedness}}\label{sec:appx:proof-houdini}

In this proof, we will refer to ``\( \TSys \) with bad states \( B \)'' the transition system obtained by replacing the bad states of \( \TSys \) with \( B \).
The key to the proof is the following claim:
\begin{claim*}
For each \( p \in \badState[\TSysI] \), at least one of the following holds:
\begin{enumerate}
    \item some \( \vartheta \subseteq \PredSet \) is an inductive invariant for \( \TSys \) with bad states \( \{ s \mid s \not\models p \} \), or
    \item some \( \varpi \subseteq \stateSet[\TSys] \) is an inductive invariant for \( \TSysI \) with bad states \( \{ p \} \).
\end{enumerate}
\end{claim*}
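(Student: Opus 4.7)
I would prove the claim by case analysis on whether $p$ is forward-reachable in $\TSysI$ from $\initState[\TSysI]$. The reachable case yields (1) directly via Lemma~\ref{lem:houdini:basic-lemma}, while the non-reachable case should yield (2) via its dual, with one non-trivial sub-case.

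\emph{Case A ($p$ is reachable in $\TSysI$).} Choose a path $\initState[\TSysI] \ni p_0 \trans[\TSysI] p_1 \trans[\TSysI] \cdots \trans[\TSysI] p_n = p$. By \Cref{lem:houdini:basic-lemma}, $\vartheta := \bigcup_{i \leq n} p_i$ is an inductive invariant for $\TSys$. Because $p = p_n \subseteq \vartheta$, every $s$ with $s \models \vartheta$ also satisfies $s \models p$, so the invariant $\{s \mid s \models \vartheta\}$ is disjoint from the bad states $\{s \mid s \not\models p\}$. Thus (1) holds.

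\emph{Case B ($p$ is not reachable in $\TSysI$).} I would argue symmetrically using the dual of \Cref{lem:houdini:basic-lemma} (stated in the paragraph after it): every $\TSys$-path $s_0 \trans[\TSys] \cdots \trans[\TSys] s_n$ yields, via the set $\{s_0,\ldots,s_n\}$, an inductive invariant for $\TSysI$ (interpreted as the $\stateSet[\TSysI]$-subset $\{P \mid \forall i.\, s_i \models P\}$). Setting $\varpi := R_{\TSys}$ (the reachable states of $\TSys$) and taking the intersection over all paths, one checks that $J_\varpi := \{P \mid \forall s \in \varpi.\, s \models P\}$ is an inductive invariant of $\TSysI$, using (ID2) for initiation and the path-wise dual lemma for consecution. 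Safety, i.e.\ $p \notin J_\varpi$, amounts to exhibiting some reachable state $s^*$ with $s^* \not\models p$; if such $s^*$ exists, (2) holds.

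\emph{Main obstacle: sub-case where every reachable state of $\TSys$ satisfies $p$ while $p$ is not reachable in $\TSysI$.} Here $\varpi = R_\TSys$ does not exclude $p$, and the naive choice $\vartheta = p$ may fail consecution because an unreachable state satisfying $p$ could transition to a state not satisfying $p$. My plan to resolve this is to consider the strongest $\PredSet$-conjunctive inductive invariant of $\TSys$, which one can realize as $\vartheta^* := \bigcup R_{\TSysI}$ (this is an inductive invariant by applying \Cref{lem:houdini:basic-lemma} to every path in $\TSysI$ and observing that the class of conjunctive invariants is closed under unions). If $p \subseteq \vartheta^*$, then (1) holds with $\vartheta = \vartheta^*$. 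Otherwise, there is some $q \in p \setminus \vartheta^*$, and the Houdini-style failure of closure for $\vartheta^* \cup \{q\}$ produces a transition $s \trans[\TSys] s'$ with $s \models \vartheta^* \cup \{q\}$, $s' \models \vartheta^*$, and $s' \not\models q$, so $s' \not\models p$. The crux of the argument is then to enlarge $\varpi$ from $R_\TSys$ to include such witnesses $s'$ in a controlled way so that $J_\varpi$ remains an inductive invariant of $\TSysI$. I expect this requires a simultaneous (possibly transfinite) fixed-point over the pair $(\vartheta, \varpi)$, where at each stage one strengthens the conjunctive $\TSys$-invariant by adding predicates, or enlarges $\varpi$ by adding a witness to a closure failure, and where (ID4) is what guarantees that the $\TSysI$-closure of $J_\varpi$ is preserved by the enlargements driven by $\TSys$-transitions. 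The delicate point, and the one I expect to be the main technical obstacle, is verifying that the fixpoint converges to a pair where one side produces a genuine safe inductive invariant of the required form.
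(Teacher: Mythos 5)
Your Case~A is fine, and your observation that the set of \( \TSys \)-reachable states is an inductive invariant of \( \TSysI \) is correct, but the proof is not complete: exactly in the sub-case you flag (every reachable state of \( \TSys \) satisfies \( p \), yet \( p \) cannot be captured conjunctively) you only propose a ``simultaneous (possibly transfinite) fixed-point over \( (\vartheta,\varpi) \)'' and explicitly leave its convergence and correctness unverified --- and that sub-case is the entire content of the claim, since it is the only situation where neither of your two easy constructions applies. In addition, your identification of the strongest conjunctive invariant with \( \vartheta^* := \bigcup R_{\TSysI} \) is wrong in general: the maximum inductive invariant inside \( \PredSet \) is the union of \emph{all} predicate sets that are inductive invariants of \( \TSys \), and it can strictly contain the predicates occurring in \( \TSysI \)-reachable states (a predicate may hold initially and be preserved by \( \trans[\TSys] \) without ever being produced by any \( \TSysI \)-transition). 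With a non-maximal \( \vartheta^* \) your ``Houdini-style failure of closure'' step also breaks: \( \vartheta^* \cup \{q\} \) may itself be inductive even though \( q \notin \vartheta^* \), so the failing transition \( s \trans[\TSys] s' \) you rely on need not exist.

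The paper's proof shows the difficulty dissolves without any reachability analysis or iteration. Take \( \vartheta \) to be the \emph{maximum} inductive invariant contained in \( \PredSet \) (it exists because conjunctive inductive invariants are closed under union, and initiation is automatic by (ID1)), and take \( \varpi := \{ s \in \stateSet[\TSys] \mid \forall q \in \vartheta.\ s \models q \} \), the extent of \( \vartheta \) --- \emph{not} the reachable states. Maximality of \( \vartheta \) yields the key equivalence \( \varpi \models q \Leftrightarrow q \in \vartheta \), and (ID4) shows \( \vartheta \) is closed under \( \trans[\TSysI] \), from which \( \varpi \) is an inductive invariant of \( \TSysI \) (initiation by (ID2)). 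The case split is then on whether \( p \subseteq \vartheta \), not on whether \( p \) is reachable in \( \TSysI \): if \( p \subseteq \vartheta \), item~(1) holds; otherwise \( \varpi \not\models p \), i.e.\ some \( s \in \varpi \) --- possibly unreachable in \( \TSys \) --- violates a predicate of \( p \), so \( \varpi \) is a safe inductive invariant of \( \TSysI \) with bad states \( \{p\} \), giving item~(2). Your problematic sub-case arises precisely because with \( \varpi = R_{\TSys} \) the equivalence \( \varpi \models q \Leftrightarrow q \in \vartheta \) fails; replacing the reachable states by the extent of the maximum conjunctive invariant is the missing idea, and it closes the gap in one step rather than via a fixpoint over pairs.
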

\begin{proof}
Let \( \vartheta \subseteq \PredSet \) be the maximum inductive invariant of \( \PredSet \).
Note that this is the ``strongest'' invariant since \( \vartheta \) is regarded as the conjunction \( \bigwedge_{p \in \vartheta} p \).
The maximum inductive invariant exists since, for a family \( \vartheta_i \subseteq \PredSet \) of inductive invariants, their union \( \bigcup_i \vartheta_i \) is also an inductive invariant.

Let \( \varpi \subseteq \stateSet[\TSys] \) be the subset of states satisfying all \( p \in \vartheta \), i.e.~\( \varpi := \{ s \in \stateSet[\TSys] \mid \forall p \in \vartheta.\: s \models p \} \).
Since \( \vartheta \) is an invariant, \( \varpi \) is closed under transition.
Then \( \varpi \models p \Longleftrightarrow p \in \vartheta \).
The right-to-left direction comes from the definition of \( \varpi \).
To see the other direction, assume \( \varpi \models p \).
Then \( \vartheta \cup \{ p \} \) is an invariant: if \( s \models \vartheta \cup \{ p \} \) and \( s \trans[\TSys] s' \), then \( s' \models \vartheta \) by the invariance of \( \vartheta \), so \( s' \in \varpi \) that implies \( s' \models p \).
By the maximality of \( \vartheta \), we have \( \vartheta = \vartheta \cup \{p\} \), that means, \( p \in \vartheta \).

The maximum inductive invariant \( \vartheta \) is closed under the transition.
Assume \( p \trans[\TSysI] p' \) and \( p \in \vartheta \).
It suffices to show that \( \vartheta \cup \{ p' \} \) is an inductive invariant (because \( \vartheta \) is the maximum invariant).
Assume \( s \trans[\TSys] s' \) and \( s \models \vartheta \cup \{ p' \} \).
Then \( s' \models \vartheta \) since \( \vartheta \) is an inductive invariant.
We have \( s \trans[\TSys] s' \), \( p \trans[\TSysI] p' \), \( s \models p \), \( s' \models p \) and \( s \models p' \).
By the duality, \( s' \models p' \).
So \( s' \models \vartheta \cup \{ p' \} \) as expected.

This means that \( \varpi \) is a dual-inductive invariant.
Assume a transition \( p \trans[\TSysI] p' \) and \( \varpi \models p \).
Then \( p \in \vartheta \).
Since \( \vartheta \) is closed under \( \rightsquigarrow \), we have \( p' \in \vartheta \).
So \( \varpi \models p' \).

Now we have a pair \( (\vartheta, \varpi) \) of an inductive invariant \( \vartheta \) and a dual-inductive invariant \( \varpi \) related by \( \varpi \models p \Leftrightarrow p \in \vartheta \).
If \( q \in \vartheta \), the first condition is met.
If \( q \notin \vartheta \), then \( \neg (\varpi \models q) \), i.e.~there exists \( s \in \varpi \) such that \( x \not\models p \).
So the dual-inductive invariant \( \varpi \) witnesses the safety of \( \TSysI \) with respect to \( \{ q \} \).
\end{proof}

We prove the theorem using the claim.

Assume that the latter case holds for every \( p \in \badState[\TSysI] \).
Let \( \varpi_p \subseteq \stateSet[\TSys] \) be a safe inductive invariant for \( \TSysI \) with bad states \( \{ p \} \).
Since the set of inductive invariants is closed under the union, \( \varpi := \bigcup_p \varpi_p \) is an inductive invariant for \( \TSysI \).
For every \( p \in \badState[\TSysI] \), we have \( p \not\models \varpi_p \) and hence \( p \not\models \varpi \).
So \( \varpi \) is a safe inductive invariant for \( \TSysI \).

Assume that the former holds for some \( p \in \badState[\TSysI] \).
That means, the safety of \( \TSys \) with respect to \( \{ s \mid s \not\models p \} \) is witnessed by some \( \vartheta \subseteq \PredSet \).
Then \( s \models \vartheta \Rightarrow \neg (s \not\models p) \Leftrightarrow s \models p \Rightarrow s \not\in \badState[\TSys] \).
So \( \vartheta \) witnesses the safety of \( \TSys \) with respect to \( \badState[\TSys] \).

\section{Proof of Lemma~\ref{lem:qlra:smt}}\label{sec:appx:proof-of-smt-correctness}
To prove the right-to-left direction, we prove the following claim by induction on \( \pi \): if \( L^\varphi_{\mathrm{FK}}(\varrho, \pi) = -1 \), then \( \varphi | \pi \rangle \) is invalid.
The claim is trivial for \( \pi = \bullet \).
Assume \( \pi = \forall x. \pi' \).  Then \( \varphi = \forall x. \varphi' \) and \( \varrho = \bigsqcap_{i = 1}^n t_i.\varrho_i \).
Then \( \langle \varrho | \varphi | \pi \rangle = \bigwedge_{i=1}^n \langle \varrho_i | \varphi'[t_i/x] | \pi'[t_i/x] \rangle \), which is invalid.
So \( \langle \varrho_i | \varphi'[t_i/x] | \pi'[t_i/x] \rangle \) is invalid for some \( i \).
By the induction hypothesis, \( \varphi'[t_i/x] | \pi'[t_i/x] \rangle \) is invalid.
It is not difficult to see that \( (\varphi'[t_i/x] | \pi'[t_i/x] \rangle) = (\varphi' | \pi' \rangle)[t_i/x] \), so \( \forall x. (\varphi' | \pi' \rangle) \) is invalid.
Assume \( \pi = t.\pi' \).
Then \( \varphi = \exists x. \varphi' \) and \( \varrho = \exists x. \varrho' \).
Since \( \langle \varrho | \varphi | \pi \rangle = \langle \varrho'[t/x] | \varphi'[t/x] | \pi' \rangle \), by the induction hypothesis, \( (\varphi'[t/x] | \pi' \rangle) = (\varphi | \pi' \rangle) \) is invalid.
Assume \( \pi = \pi_1 \sqcup \pi_2 \).
Then \( \langle \varrho | \varphi | \pi \rangle = \langle \varrho | \varphi | \pi_1 \rangle \vee \langle \varrho | \varphi | \pi_2 \rangle \) and this formula is invalid.
So both \( \langle \varrho | \varphi | \pi_1 \rangle \) and \( \langle \varrho | \varphi | \pi_2 \rangle \) are invalid, and by the induction hypothesis, both \( \varphi | \pi_1 \rangle \) and \( \varphi | \pi_2 \rangle \) are invalid.
So \( (\varphi | \pi \rangle) = (\varphi | \pi_1 \rangle) \vee (\varphi | \pi_2 \rangle) \) is invalid.

To prove the left-to-right direction, assume that \( \varphi|\pi\rangle \) is invalid.
We construct an \textsf{UNSAT} strategy skeleton \( \varrho \) such that \( L^\varphi_{\mathrm{FK}}(\varrho, \pi) = -1 \) by induction on \( \pi \).
If \( \pi = \bullet \), let \( \varphi = \bullet \).
Assume \( \pi = \forall x. \pi' \).  Then \( \varphi = \forall x. \varphi' \) and \( (\varphi|\pi\rangle) = \forall x. (\varphi'|\pi'\rangle) \).
Since this formula is invalid, there exists a rational number \( r \in \Rational \) such that \( (\varphi'|\pi'\rangle)[r/x] = (\varphi'[r/x]|\pi'[r/x]\rangle) \) is invalid.
By the induction hypothesis, there exists \( \varrho' \) such that \( L^{(\varphi'[r/x])}_{\mathrm{FK}}(\varrho', \pi') = -1 \).
Then \( \varrho := r.\varrho' \) satisfies the requirement.
Assume \( \pi = t.\pi' \).  Then \( \varphi = \exists x. \varphi' \) and \( (\varphi|\pi\rangle) = (\varphi'[t/x]|\pi'\rangle) \), which is invalid.
By the induction hypothesis, there exists \( \varrho' \) such that \( L^{(\varphi'[t/x])}_{\mathrm{FK}}(\varrho', \pi') = -1 \).
Then \( \varrho := \exists x. \varrho' \) satisfies the requirement.
Assume \( \pi = \pi_1 \sqcup \pi_2 \).  Then \( (\varphi|\pi\rangle) = (\varphi|\pi_1\rangle) \vee (\varphi|\pi_2\rangle) \), which is invalid.
By the induction hypothesis, we have \( \varrho'_1 \) and \( \varrho'_2 \) such that both \( \langle \varrho'_1|\varphi|\pi_1\rangle \) and \( \langle \varrho'_2|\varphi|\pi_2\rangle \) are invalid.
Then \( \varrho := \varrho_1 \cap \varrho_2 \) satisfies the requirement.

\fi

\end{document}